\documentclass[letterpaper,11pt,onecolumn,oneside]{IEEEtran}
\usepackage{ams,times,graphicx,epsfig,epic,eepic,latexsym}

\def\BibTeX{{\rmfamily B\kern-.05em{\scshape i\kern-.025em b}\kern-.08em
\TeX}}

\usepackage{keywords}
\usepackage{cite}
\usepackage{algorithm}
\usepackage{algorithmic}
\usepackage{amsmath,amssymb}
\newtheorem{theorem}{Theorem}
\newtheorem{definition}{Definition}
\newtheorem{lemma}{Lemma}

\newtheorem{corollary}{Corollary}

\newtheorem{condition}{Condition}

\thinlines

\pagestyle{plain}

\title{Network error correction with unequal link capacities}

\author{\thanks{This work was supported in part by subcontract \#069153 issued by BAE
Systems National Security Solutions, Inc. and supported by the Defense
Advanced Research Projects Agency (DARPA) and the Space and Naval
Warfare System Center (SPAWARSYSCEN), San Diego under Contract No.
N66001-08-C-2013, NSF grant CNS 0905615 and Caltech's
Lee Center for Advanced Networking. Part of this work was performed while A. S. Avestimehr was with the Center for Mathematics of Information, Caltech. The work of A. S. Avestimehr was
partly supported by NSF CAREER award 0953117.}Sukwon Kim, Tracey Ho,~\IEEEmembership{Member,~IEEE}\thanks{Sukwon Kim, Tracey Ho and Michelle Effros are with the Department of Electrical Engineering, California Institute of Technology,
          Pasadena, CA 91125, USA, e-mail: \{sukwon,tho,effros\}@caltech.edu}, Michelle Effros,~\IEEEmembership{Fellow,~IEEE}, Amir Salman Avestimehr,~\IEEEmembership{Member,~IEEE}\thanks{A. S. Avestimehr is with the School of Electrical and Computer Engineering,
Cornell University,
Ithaca, NY, 14853, USA, e-mail: avestimehr@ece.cornell.edu}}


\begin{document}

\maketitle

\begin{abstract}
This paper studies the capacity of single-source single-sink noiseless
networks under adversarial or arbitrary errors on no more than $z$ edges. Unlike
prior papers, which assume equal capacities on all links, 
arbitrary link capacities are considered. Results include new upper bounds, network error correction coding
strategies, and examples of network families where our bounds are
tight. An example is provided of a network where the capacity is 50\% greater than the best rate
that can be achieved with linear coding. While coding at the source
and sink suffices in networks with equal link capacities, in
networks with unequal link capacities, it is shown that
intermediate nodes may have to do coding, nonlinear error detection, or error
correction in order to achieve the network error correction capacity.
\end{abstract}
\begin{keywords}
Adversarial errors, Byzantine adversary, network coding, network error correction,  nonlinear coding
\end{keywords}
\section{Introduction}
Network coding allows intermediate nodes in a network to mix the
information content from different packets. This mixing can increase
throughput and reliability in networks of error-free or
stochastically failing
channels~\cite{ahlswede2000network,li2003linear}. Unfortunately, it
can also potentially increase the impact of malicious links or nodes
that wish to corrupt data transmissions. A single corrupted packet,
mixed with other packets in the network, can potentially corrupt all
of the information reaching a particular destination. To combat this
problem, network error correction was first studied by Yeung and
Cai~\cite{yeung2006network,cai2006network} who investigated
correction of errors in multicast network
coding~\cite{ahlswede2000network,li2003linear,koetter2003algebraic}
on networks with unit-capacity links. In that work, the authors
showed that for any network of unit-capacity links, the Singleton
bound is tight and linear network error-correcting codes suffice to
achieve the capacity, which equals $C-2z$ where $C$ is the min-cut
of the network and $z$ is a bound on the number of corrupted
links~\cite[Theorem 4]{cai2006network}. The problem of network
coding under Byzantine attack was also investigated
in~\cite{ho2004byzantine}, which gave an approach for detecting
adversarial errors under random network coding. Construction of
codes that can correct errors up to the full error-correction
capability specified by the Singleton bound was presented
in~\cite{zhang2008linear}. A variety of alternative models of
adversarial attack and strategies for detecting and correcting such
errors appear in the literature. Examples
include~\cite{gkantsidis2006cooperative,krohn2004fly,jaggi16resilient,liang-non,liang-watch,koetter2008coding,silva2008rank,langberg2009binary}.

Specifically, the network error correction problem concerns reliable
information transmission in a network with an adversary who
arbitrarily corrupts the packets sent on some set of $z$ links. The
location of the adversarial links is fixed for all time but unknown
to the network user. We define a $z$-error correcting code
for a single-source and single-sink network to be a code that can
recover the source message at the sink node if there are at most $z$
adversarial links in the network.
The $z$-error correcting network capacity, henceforth simply called
the capacity, is the supremum over all rates achievable by $z$-error
correcting codes.

In this work, we consider network error correction when
links in the network may have unequal link capacities. (A related
model, where adversaries control a fixed number of nodes rather than
a fixed number of edges was studied in~\cite{kosut2009nonlinear}, independently and concurrently with our initial conference paper~\cite{kim2009network}.) The unequal link capacity
problem is substantially different from the equal link capacity problem studied by Yeung
and Cai in~\cite{yeung2006network,cai2006network} since the rate
controlled by the adversary varies with his edge choice. In the error-free case, any link $l$ in the network with capacity
$r$ can be represented by $r$ edges of capacity one without loss of generality. However, in the case with errors there is a loss of generality in using a similar representation and assuming that errors have uniform rate, since this does not capture potential trade-offs that the adversary faces in choosing whether to attack strategically positioned or larger capacity links. The error-correction capacity in the equal link capacity case has a simple cut-set characterization since the adversary always finds it optimal to attack links on a minimum cut; as a result, coding only at the source
and forwarding at intermediate nodes suffices to achieve the
capacity for any single-source and single-sink network. In contrast,
for networks with unequal link capacities, we show that network
error correction coding operations at intermediate nodes are needed even in the single-source single-sink case.

The cut-set approach is a simple yet powerful tool for bounding the
capacity of a large network.  This approach partitions the nodes
into two subsets, say $S$ and $S^c$, and then bounds the rate that
can be transmitted from nodes in $S$ to nodes in $S^c$. (See, for
example,~\cite[Section~15.10]{cover1991elements}.) The maximum
information transmission across the ``cut'' occurs when the nodes
within $S$ can collaborate perfectly among themselves and the nodes
within $S^c$ can collaborate perfectly among themselves. In this
case, $S$ and $S^c$ each act as ``super-nodes'' in a simple
point-to-point network. All that is needed for collaboration is
sufficient information exchange among the nodes on each side of the
cut. Thus, the ``cut-set bound'' equals that rate that would be
achieved in transmitting information from $S$ to $S^c$ if we added
reliable, infinite-capacity links between each pair of nodes in $S$
and reliable, infinite-capacity links between each pair of nodes in
$S^c$, as shown in Figure~\ref{fig:cs}. Given a network of
capacitated error-free links with a  source node $s$ and a sink node
$t$, minimizing over all choices of $S$ that contain $s$ but exclude
$t$ gives a tight bound on the unicast capacity from $s$ to
$t$~\cite{ahuja-network}.

In contrast, this traditional
cut-set bounding approach is not tight in general when it comes to the error-correction capacity of networks with unequal link capacities,
even in the case of unicast demands.
In this case, two new issues arise.
We next describe each of these issues in turn.

\begin{figure}
\begin{center}
\begin{picture}(110,100)(-10,-10)
\thicklines
\put(20,60){\circle*{3}}
\put(50,90){\circle*{3}}
\put(80,60){\circle*{3}}
\put(20,30){\circle*{3}}
\put(50,0){\circle*{3}}
\put(80,30){\circle*{3}}
\thinlines
\put(50,90){\vector(-1,-1){30}}
\put(50,90){\vector(1,-1){30}}
\put(20,60){\vector(0,-1){30}}
\put(20,30){\vector(2,1){60}}
\put(20,30){\vector(1,-1){30}}
\put(80,60){\vector(-1,-2){30}}
\put(80,60){\vector(-0,-1){30}}
\put(80,30){\vector(-1,-1){30}}
\thinlines
\put(-10,45){\multiput(0,0)(5,0){22}{\line(1,0){3}}}
\put(-10,50){\makebox(0,0)[lb]{\small $S$}}
\put(-10,40){\makebox(0,0)[lt]{\small $S^c$}}
\put(50,93){\makebox(0,0)[cb]{\small $s$}}
\put(50,-3){\makebox(0,0)[ct]{\small $t$}}
\put(50,-15){\makebox(0,0)[ct]{(a)}}
\end{picture}
\hspace{.05in}
\begin{picture}(110,100)(-10,-10)
\thicklines
\put(20,60){\circle*{3}}
\put(50,90){\circle*{3}}
\put(80,60){\circle*{3}}
\put(20,30){\circle*{3}}
\put(50,0){\circle*{3}}
\put(80,30){\circle*{3}}
\thicklines
\put(50,90){\vector(-1,-1){30}}
\put(50,89.5){\vector(-1,-1){30}}
\put(50,90){\vector(1,1){0}}
\put(50,89.5){\vector(1,1){0}}
\put(50,90){\vector(1,-1){30}}
\put(50,89.5){\vector(1,-1){30}}
\put(50,90){\vector(-1,1){0}}
\put(50,89.5){\vector(-1,1){0}}
\put(20,60){\vector(1,0){60}}
\put(20,59.5){\vector(1,0){60}}
\put(20,60){\vector(-1,0){0}}
\put(20,59.5){\vector(-1,0){0}}
\put(20,30){\vector(1,-1){30}}
\put(20,29.5){\vector(1,-1){30}}
\put(20,30){\vector(-1,1){0}}
\put(20,29.5){\vector(-1,1){0}}
\put(80,30){\vector(-1,-1){30}}
\put(80,29.5){\vector(-1,-1){30}}
\put(80,30){\vector(1,1){0}}
\put(80,29.5){\vector(1,1){0}}
\put(20,30){\vector(1,0){60}}
\put(20,29.5){\vector(1,0){60}}
\put(20,30){\vector(-1,0){0}}
\put(20,29.5){\vector(-1,0){0}}
\thinlines
\put(20,60){\vector(0,-1){30}}
\put(20,30){\vector(2,1){60}}
\put(80,60){\vector(-1,-2){30}}
\put(80,60){\vector(-0,-1){30}}
\thinlines
\put(-10,45){\multiput(0,0)(5,0){22}{\line(1,0){3}}}
\put(-10,50){\makebox(0,0)[lb]{\small $S$}}
\put(-10,40){\makebox(0,0)[lt]{\small $S^c$}}
\put(50,93){\makebox(0,0)[cb]{\small $s$}}
\put(50,-3){\makebox(0,0)[ct]{\small $t$}}
\put(50,-15){\makebox(0,0)[ct]{(b)}}
\end{picture}
\hspace{.05in}
\begin{picture}(110,100)(-10,-10)
\thicklines
\put(20,60){\circle*{3}}
\put(50,90){\circle*{3}}
\put(80,60){\circle*{3}}
\put(20,30){\circle*{3}}
\put(50,0){\circle*{3}}
\put(80,30){\circle*{3}}
\thicklines
\put(50,90){\vector(-1,-1){30}}
\put(50,89.5){\vector(-1,-1){30}}
\put(20,60){\vector(1,0){60}}
\put(20,59.5){\vector(1,0){60}}
\put(20,30){\vector(1,0){60}}
\put(20,29.5){\vector(1,0){60}}
\put(80,30){\vector(-1,-1){30}}
\put(80,29.5){\vector(-1,-1){30}}
\thinlines
\put(50,90){\vector(1,-1){30}}
\put(20,60){\vector(0,-1){30}}
\put(20,30){\vector(2,1){60}}
\put(80,60){\vector(-1,-2){30}}
\put(80,60){\vector(-0,-1){30}}
\put(20,30){\vector(1,-1){30}}
\thinlines
\put(-10,45){\multiput(0,0)(5,0){22}{\line(1,0){3}}}
\put(-10,50){\makebox(0,0)[lb]{\small $S$}}
\put(-10,40){\makebox(0,0)[lt]{\small $S^c$}}
\put(50,93){\makebox(0,0)[cb]{\small $s$}}
\put(50,-3){\makebox(0,0)[ct]{\small $t$}}
\put(50,-15){\makebox(0,0)[ct]{(c)}}
\end{picture}
\hspace{.05in}
\begin{picture}(140,100)(5,-10)
\thicklines
\put(20,60){\circle*{3}}
\put(80,60){\circle*{3}}
\put(20,30){\circle*{3}}
\put(80,30){\circle*{3}}
\put(140,30){\circle*{3}}
\thicklines
\put(20,60){\vector(1,0){60}}
\put(20,59.5){\vector(1,0){60}}
\put(20,30){\vector(1,0){60}}
\put(20,29.5){\vector(1,0){60}}
\put(80,30){\vector(1,0){60}}
\put(80,29.5){\vector(1,0){60}}
\thinlines
\put(20,60){\vector(0,-1){30}}
\put(20,30){\vector(2,1){60}}
\put(80,60){\vector(2,-1){60}}
\put(80,60){\vector(0,-1){30}}
\thinlines
\put(5,45){\multiput(0,0)(5,0){28}{\line(1,0){3}}}
\put(5,50){\makebox(0,0)[lb]{\small $S$}}
\put(5,40){\makebox(0,0)[lt]{\small $S^c$}}
\put(140,27){\makebox(0,0)[ct]{\small $t$}}
\put(80,-15){\makebox(0,0)[ct]{(d)}}
\end{picture}
\end{center}
\caption{The traditional cut set bound for a cut $S$
between source $s$ and sink $t$
in the network shown in~(a)
equals the maximal rate that can be transmitted
from $S$ to $S^c$ when the nodes within $S$
are allowed unlimited information exchange
and the nodes within $S^c$ are allowed unlimited information exchange,
as indicated by the thick bidirectional lines in~(b).
In the cut-set bounds employed here,
we create infinite capacity connections
only from each node
to nodes of higher topological order
on the same side of the cut, as shown by the thick unidirectional lines in~(c).
Restricting attention to nodes with input or output edges
that cross the cut, gives a ``zig-zag'' network,
as shown in~(d).}\label{fig:cs}
\end{figure}
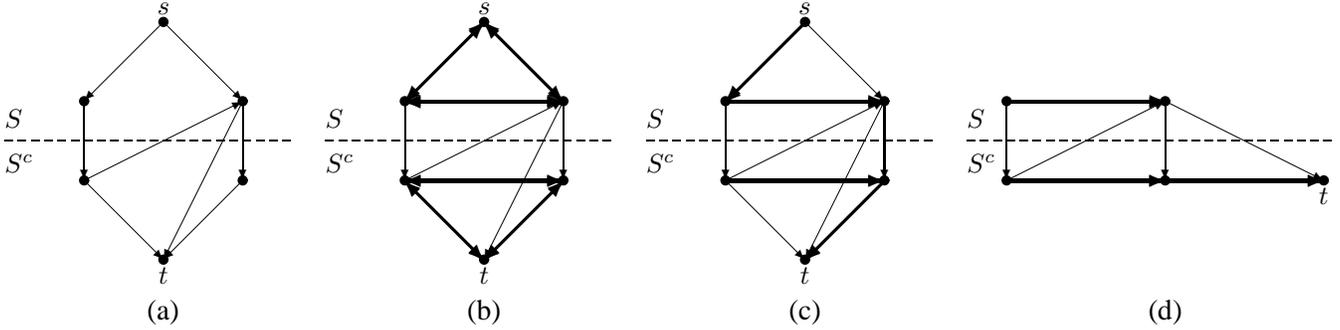

The first issue concerns the role of feedback across $S$ --
i.e. links from $S^c$ to $S$.
While feedback never increases the capacity
across a cut in a network of reliable links,
it can increase the error-correction capacity.
Intuitively, this is because feedback
allows us to inform nodes in $S$
about what was received by nodes in $S^c$,
thereby aiding in the discovery of adversarial links.\footnote{This
process of discovery is complicated by the fact that the feedback links themselves may be corrupted,
but feedback is, nonetheless, clearly useful.}
Treating all nodes in $S$ as one super-node
and all nodes in $S^c$ as another super-node, as in the traditional cut-set bounding approach,
makes all feedback information available to all nodes in $S$
and all feedforward information available to all nodes in $S^c$.  This may give them considerably more insight
into the adversary's location than is available to them
in the original network.

We can obtain tighter bounds by taking into account limitations on which nodes in $S^c$ can influence the values on each feedback edge and which nodes in $S$ have access to the feedback information. This is important in the unequal link capacity case, because it captures trade-offs faced by the adversary in choosing whether to attack links based on their capacity or whether they are upstream of feedback links that may give clues about the adversary's actions. Specifically, given an acyclic network $\mathcal{G}$, we construct an acyclic network $\mathcal{G}'$
by adding a reliable infinite capacity connection
from a node $v\in S$ to a node $w\in S$
only if $\mathcal{G}$ contains a directed path from node $v$ to node $w$ via nodes in $S$, and adding  a reliable infinite capacity connection
from a node $v\in S^c$ to a node $w\in S^c$
only if $\mathcal{G}$ contains a directed path from node $v$ to node $w$ via nodes in $S^c$.
Figure~\ref{fig:cs}(c) shows an example.
Limiting the added connections in this way
creates what we call a ``zig-zag'' network,
as shown in Figure~\ref{fig:cs}.
We draw only those nodes in $S$ and $S^c$
with incoming or outgoing edges that cross between $S$ and $S^c$, and draw the nodes on each side of the cut in topologically increasing order.
The ``forward'' edges across the cut point downwards in the diagram,
while ``feedback'' edges point upwards.  By studying the capacity of these zig-zag networks, we develop upper bounds  on error-correction capacity that apply to general acyclic networks.  We also illustrate the usefulness of these bounds by giving examples where they improve upon previously known bounds  and showing that they are tight for families of networks that are special cases
of zig-zag networks.

However, the second issue
with the cut-set approach
to bounding network capacities
is the notion of a cut itself.
Reference~\cite{kosut-polytope} shows, for the more general case where only a subset of links are potentially adversarial,
the existence of networks for which
no partition $(S,S^c)$ yields a tight bound
on the error-correction capacity.
This is proven by example using a network
whose minimal cut
(which has no feedback links)
yields a capacity bound that is proven to be unachievable.
As a result, knowledge of the the capacity of the network's minimal cut
is insufficient to determine the capacity of all possible networks,
and we cannot hope to derive cut-set bounds
that are tight in general.
Nonetheless, given the complexity of taking into account the full network topology,
we proceed to study the cut-set approach, deriving general bounds and demonstrating
that those bounds are tight in some cases.

Specifically, in Section III we begin
with the cut-set upper bound given by the capacity of the two-node network shown in Fig.~\ref{fig61},
which is the only cyclic network we consider in this paper. In this
network, the source node can transmit packets to the sink node along
the forward links and the sink node can send information back to the
source node along the feedback links. As mentioned above, this cut-set bound can be quite loose since it assumes that all feedback is available
to the source node and all information crossing the cut in the
forward direction is available to the sink. We therefore develop a new cut-set upper bound for general acyclic networks by taking into account more details of the topological relationships among links on the cut, as in the zig-zag network construction shown in Figure~\ref{fig:cs}.

In Section IV, we consider a variety
of linear and nonlinear coding strategies useful for achieving the
capacity of various example networks. We 
prove the insufficiency of linear network codes to achieve the
capacity by providing an example of a network where the capacity is 50\% greater than the
linear coding capacity and is achieved using a nonlinear error
detection strategy. A similar example for the problem with Byzantine
attack on nodes rather than edges appears
in~\cite{kosut2009nonlinear}. We also give examples of single-source
and single-sink networks for which intermediate nodes must perform
coding, nonlinear error detection or error correction in order to
achieve the network capacity.  We describe a simple greedy algorithm
for error correction at intermediate nodes.
We then introduce a new
coding strategy called ``guess-and-forward." In this strategy, an
intermediate node which receives some redundant information from
multiple paths guesses which of its upstream links controlled by the
adversary. The intermediate node forwards its guess to the sink
which tests the hypothesis of the guessing node.  In Section V, we
show that guess-and-forward achieves network capacity on the
two-node network with feedback links of  Fig.~\ref{fig61}, as well as the
family of four-node acyclic
networks in Fig.~\ref{fig62} when the capacity of each feedback link is not too small (i.e.~above a value given by a linear optimization).\footnote{After the submission of this paper, we obtained a new result that improves upon the bound in Section III for the special case of small-capacity feedback links. We mention the idea briefly as a footnote in Section III and will present it formally in an upcoming paper.}
Finally, we apply guess-and-forward strategy to zig-zag
networks, deriving achievable rates and presenting
conditions under which our upper bound is tight. We conclude in Section
VI with a discussion of future work.

Portions of this work have
appeared in our earlier work~\cite{kim2009network,kim2010nec}, which introduced
the network error correction problem with unequal link
capacities and presented a subset of the results.

\begin{figure}
\begin{center}
\begin{picture}(200,200)(-40,-20)\centering
\put(75,150){\circle{7}}\put(75,15){\circle{5}}\qbezier(75,150)(-20,82)(75,15)
\qbezier(75,150)(20,82)(75,15)\qbezier(75,150)(130,82)(75,15)\qbezier(75,150)(170,82)(75,15)
\put(30,82){\circle{1}}\put(36,82){\circle{1}}\put(42,82){\circle{1}}
\put(108,82){\circle{1}}\put(114,82){\circle{1}}\put(120,82){\circle{1}}
\put(36,72){$n$}\put(110,72){$m$}
\put(75,160){$s$}\put(75,3){$t$}
\put(28,80){\vector(0,-1){0}}\put(48,80){\vector(0,-1){0}}
\put(103,80){\vector(0,1){0}}\put(123,80){\vector(0,1){0}}
\end{picture}
\end{center}
\caption{Two-node network composed of $n$ forward links and $m$
feedback links.}\label{fig61}
\end{figure}
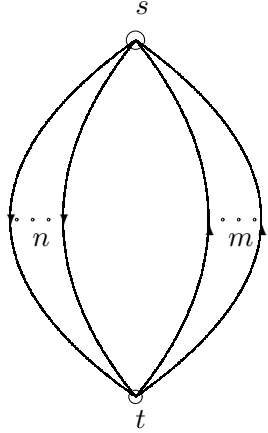

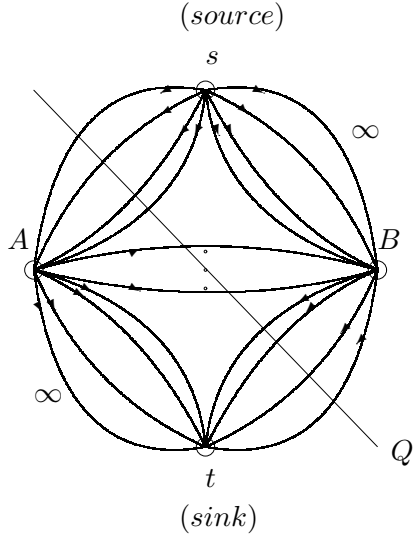
\begin{figure}
\begin{center}
\begin{picture}(200,200)(-40,-20)\centering
\put(75,150){\circle{7}}\put(75,15){\circle{7}}\put(10,82){\circle{7}}\put(140,82){\circle{7}}
\qbezier(75,150)(50,100)(10,82)\qbezier(75,150)(70,100)(10,82)
\qbezier(75,150)(20,160)(10,82)\qbezier(75,150)(30,130)(10,82)
\qbezier(75,150)(100,100)(140,82)\qbezier(75,150)(80,100)(140,82)
\qbezier(75,150)(130,160)(140,82)\qbezier(75,150)(120,130)(140,82)
\qbezier(10,82)(50,65)(75,15)\qbezier(10,82)(70,65)(75,15)
\qbezier(10,82)(20,5)(75,15)\qbezier(10,82)(30,35)(75,15)
\qbezier(140,82)(100,65)(75,15)\qbezier(140,82)(80,65)(75,15)
\qbezier(140,82)(130,5)(75,15)\qbezier(140,82)(120,35)(75,15)
\qbezier(10,82)(75,100)(140,82)\qbezier(10,82)(80,65)(140,82)
\put(75,89){\circle{1}}\put(75,82){\circle{1}}\put(75,75){\circle{1}}
\put(75,160){$s$}\put(65,175){$(source)$}\put(75,0){$t$}\put(65,-15){$(sink)$}
\put(0,90){$A$}\put(140,90){$B$}\put(10,32){$\infty$}\put(130,132){$\infty$}
\put(58,149){\vector(-2,-1){0}}\put(58,141){\vector(-2,-1){0}}
\put(65,133){\vector(-1,-1){0}}\put(71,133){\vector(-1,-2){0}}
\put(97,149){\vector(2,-1){0}}\put(85,133){\vector(1,-2){0}}
\put(91,141){\vector(1,-1){0}}\put(80,133){\vector(1,-3){0}}
\put(32,74){\vector(2,-1){0}}\put(13,66){\vector(1,-2){0}}
\put(29,72){\vector(2,-1){0}}\put(18,67){\vector(1,-2){0}}
\put(50,90){\vector(2,1){0}}\put(50,74){\vector(2,-1){0}}
\put(110,70){\vector(-2,-1){0}}\put(113,66){\vector(-1,-1){0}}
\put(126,58){\vector(-1,-1){0}}\put(132,54){\vector(-1,-1){0}}
\put(10,150){\line(26,-27){130}}\put(145,10){$Q$}
\end{picture}
\end{center}
\caption{Four node acyclic networks: Given the cut
$Q=cut(\{s,B\},\{A,t\})$, unbounded reliable communication is
allowed from source $s$ to its neighbor $B$ on one side of the cut
and from node $A$ to sink $t$ on the other side of the cut,
respectively. There are feedback links from $A$ to
$B$.}\label{fig62}
\end{figure}

\section{Preliminaries}

Consider a directed acyclic communication
network $\mathcal{G} = (\mathcal{V},\mathcal{E})$ with unequal link
capacities. Let $r(l)$ denote the
capacity of edge $l\in\mathcal{E}$. A source node $s\in\mathcal{V}$
transmits information to a  sink node $t\in\mathcal{V}$ over the network $\mathcal{G}$.
Transmissions occur on the links according to their topological order, i.e.~a link $l$ transmits after all its incident incoming links, and we regard a link error as being applied upon transmission. A link (or node) is said to be {\it upstream} of another link (or node) iff there is a directed path starting from the former and ending with the latter.
A link (or node) is said to be {\it downstream} of another link (or node) iff there is a directed path starting from the latter and ending with the former.

In this paper, we consider the problem of correcting arbitrary adversarial errors on up to $z$ links. The location of error links is fixed for all time but unknown
to the network user.

\begin{definition}
A network code is $z$-error link-correcting
if the source message can be recovered by the sink node provided
that the adversary controls at most $z$ links. Thus a $z$-error
link-correcting network code can correct any $\tau$ adversarial
links for $\tau\leq z$.
\end{definition}

Let $(S,S^c)$ be a partition of $\mathcal{V}$, and define the cut for
the partition $(S,S^c)$ by
\[cut(S,S^c)=\{(a,b)\in\mathcal{E}:a\in S, b\in S^c\}.\]
The cut $cut(S,S^c)$ separates nodes $a$ and $b$
if $a\in S$ and $b\in S^c$. We use $CS(a,b)$ to denote the set of cuts
between $a$ and $b$.
Given a cut $Q= cut(S,S^c)$, we call any link in
$Q$ a forward link, and we call any link from
$S^c$ to $S$ a feedback link.

For the achievable strategies in Sections IV and V, we assume that coding occurs in the finite field $\mathbb{F}_q$ for some prime power $q$. An error on
any link $l\in\mathcal{E}$ is specified by a vector $e_l$ containing $r(l)$
symbols in $\mathbb{F}_q$. The output $y_l$ of link $l$ equals
the  sum in $\mathbb{F}_q$ of the input $x_l$ to link $l$ and the error
$e_l$ applied to link $l$, i.e., $y_l=x_l+e_l$. We say that
an error occurs on the link $l$ if $e_l\neq 0$.

As in~\cite{yeung2006network,cai2006network}, we can consider a
linear network code $V$
that assigns 
a
set of $r(l)$ vectors $\{v\mathbb{F}(l)_1,
v\mathbb{F}(l)_2,..,v\mathbb{F}(l)_{r(l)}\}$, called global coding vectors, to each link $l\in \mathcal{E}$
in the network.
Let
\[\tilde{\phi}_l(w) = \{\langle w,v\mathbb{F}(l)_i\rangle : 1\leq i\leq r(l)\}\]
denote the error-free output of link $l$ when the network input is
$w$ where $\langle a,b \rangle$ denotes the inner product of row
vectors $a$ and $b$. We use $e=(e_l:l\in\mathcal{E})$ to denote the vector of errors on the entire
network. The output of a link $l$ is a
function of both the network input $w$ and the error vector $e$, which we denote  by $\psi_l(w,e)$. For each node $v\in\mathcal{V}$, we use
$\Gamma_+(v)=\{(c,v):(c,v)\in\mathcal{E}\}$ and
$\Gamma_-(v)=\{(v,c):(v,c)\in\mathcal{E}\}$ to denote the sets of
incoming and outgoing edges respectively for node $v$.
With this notation, a sink node
$t$ cannot distinguish between the case where $w$ is the network
input and error $e$ occurs and the case where $w'$ is the network
input and error $e'$ occurs if and only if
\begin{equation}\label{eq1}
(\psi_l(w,e):l\in \Gamma_+(t))=(\psi_l(w',e'):l\in \Gamma_+(t)).
\end{equation}
Let $N(e)=|\{l\in\mathcal{E}:e_l\neq 0\}|$ denote the number of
links in which an error occurs. We say that any pair of input
vectors $w$ and $w'$ are $z$-links separable at sink node $t$
if~(\ref{eq1}) does not hold for any pair of error vectors $e$ and
$e'$ such that $N(e)\leq z$ and $N(e')\leq z$. Lemma 1
of~\cite{cai2006network} establishes the linear properties of
$\psi_l(w,e)$ for networks with unit link capacities. This result
extends directly to networks with arbitrary link capacities.

\begin{lemma}\label{lemmapsi}
For all $l\in \mathcal{E}$, all network inputs $w$ and $w'$, error
vectors $e$ and $e'$, and $\mu\in \mathbb{F}_q$,
\[\psi_l(w+w',e+e') =\psi_l(w,e)+\psi_l(w',e')\] and
\[\psi_l(\mu w) = \mu\psi_l(w).\]
\end{lemma}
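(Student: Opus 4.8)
The plan is to prove both identities by induction on the topological order of the \emph{links}, mirroring the argument of~\cite[Lemma 1]{cai2006network} for the unit-capacity case; the only substantive change is that the local encoding operation performed at each link is now an $\mathbb{F}_q$-linear map between vector spaces $\mathbb{F}_q^{r(l')}\to\mathbb{F}_q^{r(l)}$ of possibly different dimensions rather than a scalar multiplication, and each step of the induction survives this change verbatim.

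First I would make the recursive structure of $\psi_l$ explicit. Since the linear code $V$ assigns global coding vectors to every link, the error-free input $x_l$ to a link $l=(v,c)$ is an $\mathbb{F}_q$-linear function of the network input $w$ (in the case $v=s$) together with the outputs $(\psi_{l'}(w,e):l'\in\Gamma_+(v))$ of the incoming links of $v$; call this linear map $F_l$. Because an error is applied upon transmission, $\psi_l(w,e)=F_l\big(w,(\psi_{l'}(w,e):l'\in\Gamma_+(v))\big)+e_l$. Since transmissions respect the topological order, every such $l'$ strictly precedes $l$, so the induction hypothesis applies to each $\psi_{l'}$, making the recursion well founded.

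For the inductive step for additivity, evaluate $\psi_l$ at $(w+w',e+e')$: by the induction hypothesis each incoming output satisfies $\psi_{l'}(w+w',e+e')=\psi_{l'}(w,e)+\psi_{l'}(w',e')$, and then linearity of $F_l$ in all of its arguments splits the expression into the $(w,e)$ contribution plus the $(w',e')$ contribution; adding the split error term $e_l+e_l'$ yields $\psi_l(w,e)+\psi_l(w',e')$. The base case is a link leaving the source with no incoming links, where $\psi_l(w,e)=F_l(w)+e_l$ is visibly additive. The second identity is proved identically, scaling $w$ (and $e$) by $\mu$ and using $F_l(\mu\,\cdot)=\mu F_l(\cdot)$ together with the scaled error term $\mu e_l$.

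I do not anticipate a genuine obstacle. The one point that requires care is setting up the induction over the correct partial order, namely the topological order on links rather than on nodes, so that the recursion for $\psi_l$ refers only to strictly earlier links and the phrase ``applied upon transmission'' is consistent with the well-foundedness of the induction. Once this is pinned down, both identities follow immediately.
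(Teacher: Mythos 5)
Your proof is correct and matches the argument the paper delegates to \cite[Lemma~1]{cai2006network}: the paper itself gives no proof, merely asserting that the unit-capacity result ``extends directly'' to arbitrary link capacities. Your induction over links in topological order, with scalar local encoding coefficients replaced by block $\mathbb{F}_q$-linear maps between spaces of the appropriate dimensions, is exactly the routine extension the paper has in mind.
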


From Lemma~\ref{lemmapsi},
\[\psi_l(w,e)=\psi_l(w,0)+\psi_l(0,e) =
\tilde{\phi}_l(w)+\theta_l(e),\] where $\theta_l(e)=\psi_l(0,e)$ for
any link $l$. Thus $\psi_l(w,e)$ can be written as the sum of a
linear function of $w$ and a linear function of $e$.

\section{Upper bounds}\label{section:upper}

In this section, we consider upper bounds on network error
correction capacity. Let $X$ denote the source alphabet and $q$ the
size of the (arbitrary) link alphabet. The corresponding network
transmission rate is given by $\frac{\log|X|}{\log q}.$

We first derive the cut-set upper bound obtained from coalescing all
nodes on each side of the cut into a super-node, resulting in a
two-node network as shown in Fig.~\ref{fig61}.

\begin{lemma}\label{twonodeup}
Consider the two-node network shown in Fig.~\ref{fig61} with
arbitrary link capacities. Let $D_p$ denote the sum of the $p$
smallest forward link capacities. The network error correction capacity of this network is upper bounded by
\[ \left\{\begin{array}{ll}
            0 & \mbox{if $n\leq 2z$}\\
           {\min\{D_{n-z},D_{n-2(z-m)^+}\}} &\mbox{if $n>2z$}\\
            \end{array}
             \right.\]
\end{lemma}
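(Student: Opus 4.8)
\emph{Proof plan.} The claimed bound consists of three assertions that I would establish separately: (a) the capacity is $0$ when $n\le 2z$; (b) the capacity is at most $D_{n-z}$; and (c) the capacity is at most $D_{n-2(z-m)^+}$. Throughout, fix an arbitrary $z$-error link-correcting code (possibly nonlinear, and possibly interactive over several rounds of forward/feedback exchange), and recall that the sink's decoded message is a deterministic function of the values it has received on the $n$ forward links, since the feedback the sink itself transmits is in turn a function of what it has received. For each of (a)--(c) the strategy is the usual converse argument: assuming a strictly larger rate, construct an adversary controlling at most $z$ links that makes two distinct messages induce the same view at the sink, contradicting correct decoding.

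For (a), partition the forward links into sets $F_1,F_2$ with $|F_1|,|F_2|\le z$, which is possible because $|F_1|+|F_2|=n\le 2z$. Suppose the code could reliably distinguish $w\ne w'$. Compare two executions: in the first the true message is $w$ and the adversary, at each round, overwrites the links of $F_2$ with the values the source protocol \emph{would} output on $F_2$ on message $w'$ given the feedback received so far; in the second the true message is $w'$ and the adversary overwrites $F_1$ with what the protocol would output on $F_1$ on message $w$. A round-by-round induction shows the sink's view is identical in both executions: if the two feedback histories coincide through round $j-1$, then at round $j$ the honest source transmissions on $F_1$ (resp.\ $F_2$) together with the adversarial injections on $F_2$ (resp.\ $F_1$) yield, in \emph{both} executions, the same pair $\big(s_j(w,f_{<j})|_{F_1},\,s_j(w',f_{<j})|_{F_2}\big)$ of received values; hence the sink sends the same feedback at round $j$. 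Therefore the sink decodes identically in the two executions, contradicting $w\ne w'$, so $|X|=1$ and the rate is $0$.

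For (b), let $F^{\star}$ be the $z$ forward links of largest capacity, so the remaining $n-z$ forward links have total capacity $D_{n-z}$. Let the adversary overwrite the links of $F^{\star}$ with a fixed pattern in every execution, independently of the message. Then the sink's entire view, and hence its decoded message, is a function only of the values arriving on the links outside $F^{\star}$; a cut-set counting argument on that reduced cut then forces the rate to be at most $D_{n-z}$. For (c) we may assume $m<z$, since otherwise $(z-m)^+=0$ and the term $D_n$ is dominated by (b). Here the adversary spends $m$ corruptions to overwrite \emph{all} feedback links, feeding the source a fixed feedback sequence chosen independently of the message; the source's forward transmissions then become a function $c(\cdot)$ of the message alone, i.e.\ we have reduced to a feedback-free parallel-link code against which the adversary still has $z-m$ forward corruptions left. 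If the rate exceeded $D_{n-2(z-m)}$, then the projection of $\{c(w):w\in X\}$ onto the $n-2(z-m)$ smallest-capacity forward links would fail to be injective (a Singleton-type counting step), so some $w\ne w'$ have $c(w)$ and $c(w')$ agreeing outside a set of at most $2(z-m)$ links; splitting that set into $F_1,F_2$ with $|F_1|,|F_2|\le z-m$ and rerunning the hybrid construction of part (a) with the fixed feedback in place (corrupting $F_2$ in one execution, $F_1$ in the other) makes the sink's view identical on $w$ and $w'$---a contradiction. Hence the rate is at most $D_{n-2(z-m)}$, and combining with (a) and (b) gives the lemma.

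The step I expect to be the main obstacle is handling the interactive, cyclic structure of the network rigorously. Because the source's forward symbols may depend on the sink's feedback and vice versa, none of the ``the two executions look the same to the sink'' arguments can be done in one shot; each requires an induction over rounds that maintains the invariant that the sink's cumulative observations, and therefore the feedback it produces, agree in both executions. One must also confirm that the adversary's mimicking strategies are causal and well defined---this is unproblematic once we allow the adversary to know the (public) code and to observe the symbols on the links it controls, and, as is standard in converse arguments, to observe the feedback as well---and, in part (c), that overwriting every feedback link genuinely reduces the problem to the feedback-free parallel-link setting before the Singleton/projection bound is invoked.
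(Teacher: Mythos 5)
Your proof is correct and follows the same basic approach as the paper's: for each of the three assertions you construct an adversary controlling at most $z$ links that makes two distinct messages present the same view at the sink. The decomposition into the zero-capacity case, the ``erase the $z$ largest forward links'' bound $D_{n-z}$, and the ``corrupt all feedback and reduce to a feedback-free Singleton bound'' term $D_{n-2(z-m)^+}$ matches the paper's Case~1 and Case~2 exactly.

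Where you genuinely go beyond the paper is in handling the interactive (cyclic) structure of the two-node network rigorously. The paper's argument treats the vector of forward-link symbols as a single fixed function $O(\cdot)$ of the message and never explicitly addresses the possibility that the source adapts its forward transmissions to the feedback it receives, which in turn depends on what the adversary injects. Your round-by-round hybrid argument --- defining the adversary's injections on $F_2$ (resp.\ $F_1$) at round $j$ causally from the common feedback prefix, and proving by induction that the two executions look identical to the sink --- is the correct way to make the paper's informal Case~1 rigorous. Likewise, in part~(c) you make explicit the reduction to a non-interactive code $c(\cdot)$ after the adversary freezes all $m$ feedback links, \emph{before} invoking the Singleton-type projection/counting step; the paper leaves this reduction implicit. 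You may also note that the paper's Case~2 text contains typographical slips: it writes $n-2z$ and indexes $p_1,\dots,p_z$, $w_1,\dots,w_z$, and applies errors on $z$ forward links, where the intended quantities (with $m$ feedback links already spent) are $n-2(z-m)$, $p_1,\dots,p_{z-m}$, $w_1,\dots,w_{z-m}$, and $z-m$ forward errors. Your phrasing, which separates the bound explicitly into three pieces and treats $m\ge z$ as subsumed by the $D_{n-z}$ erasure bound, sidesteps this confusion cleanly.
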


\begin{proof}
Case 1) $n\leq 2z$.

Suppose that $C>0$ and we show a contradiction. Since $C>0$, there
are two codewords $x$ and $y$ in $X$ that can be sent reliably. When $x$ is
sent along the forward links and the leftmost $z$ links are
adversarial, the adversary changes $x$ to $x'$ so that the outputs
of the $\lfloor n/2\rfloor$ leftmost links of $x'$ are the same as
that of $y$. Similarly, when $y$ is sent along the forward links and
the rightmost $z$ links are adversarial, the adversary changes $y$
to $y'$ so that the outputs of the rightmost $\lceil n/2\rceil$
links of $y'$ are the same as that of $x$. Then the two codewords
cannot be distinguished and this contradicts $C>0$.

Case 2) $n\geq 2z$.

When the sink knows $z$ adversarial links are the $z$ largest
capacities forward links, the maximum achievable capacity is
$D_{n-z}$. When $m\leq z$ and all $m$ feedback links are
adversarial, there are $z-m$ adversarial forward links whose
locations are unknown. In this scenario, we show that the best
achievable rate is $D_{n-2(z-m)}$, which is the sum of $n-2(z-m)$
smallest forward link capacities. We assume that $|X|>
q^{D_{n-2(z-m)}}$, and show that this leads to a contradiction.
$F=\{l_1,..,l_n\}$ denotes the set of forward links such that the
links indexed in increasing capacity order, i.e., $r(l_1)\leq
\ldots\leq r(l_n)$. Since $|X|>q^{D_{n-2(z-m)}}$ and $D_{n-2(z-m)}$
is sum of the $n-2(z-m)$ smallest forward link capacities, there
exist two distinct codewords $x,x'\in X$ such that
$\tilde{\phi}_{l_{i}}(x) =\tilde{\phi}_{l_{i}}(x')$ $\forall i
=1,..,n-2z$. So we can write
\[O(x) = \{y_1,..,y_{n-2z},p_{1},..,p_z,w_1,..,w_z\},\]
\[O(x') = \{y_1,..,y_{n-2z},p_{1}',..,p_z',w_1',..,w_z'\},\]
where $O(x)$ denotes the error-free vector of symbols on $Q$ when
codeword $x$ is transmitted.

We can construct $z$-error links that changes $O(x)$ to the value
$\{y_1,..,y_{n-2z},p_{1}',..,p_z',w_1,..,w_z\}$ as follows. We apply
an error of value ($p_i'-p_i)\mod q$ on links $l_{n-2z+i}$ for
$1\leq i\leq z$. Since this does not change the output value of
other $n-z$ links, we obtain
$\{y_1,..,y_{n-2z},p_{1}',..,p_z',w_1,..,w_z\}$. For $x'$, we can
follow a similar procedure to construct $z$ error links that change
the value of $O(x')$ to
$\{y_1,..,y_{n-2z},p_{1}',..,p_z',w_1,..,w_z\}$. Thus, sink node $u$
cannot reliably distinguish between the source symbol $x$ and $x'$,
which gives a contradiction.

Therefore, the upper bound on achievable capacity is
$\min\{D_{n-z},D_{n-2(z-m)^+}\}$.
\end{proof}

In Section V we show that this bound is the actual capacity of the
two-node network. Thus, the super-node construction gives the
following cut-set upper bound for general acyclic networks.

\begin{lemma}\label{twonodeupgen}
Given any cut $Q\in CS(s,t)$ with $k$ forward links and $r$ feedback
links, let $D_p$ denote the sum of the $k$ smallest forward link
capacities. The network error correction capacity is upper bounded by
\[\left\{\begin{array}{ll}
            0 & \mbox{if $k\leq 2z$}\\
           {\min\{D_{k-z},D_{k-2(z-r)^+}\}} &\mbox{if $k>2z$}\\
            \end{array}
             \right.\]
\end{lemma}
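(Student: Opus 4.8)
The plan is to reduce this to the two-node bound of Lemma~\ref{twonodeup} by coalescing each side of the cut into a super-node. Fix a cut $Q = cut(S,S^c)\in CS(s,t)$ with $k$ forward links and $r$ feedback links. I would form a two-node network $\mathcal{G}''$ whose only nodes are a super-source $\bar s$ (absorbing all of $S$) and a super-sink $\bar t$ (absorbing all of $S^c$); its edges are exactly the $k$ forward links of $Q$, directed $\bar s\to\bar t$ and keeping their original capacities $r(l)$, together with the $r$ feedback links from $S^c$ to $S$, directed $\bar t\to\bar s$ with their original capacities. Links internal to $S$ or internal to $S^c$ are absorbed into $\bar s$ or $\bar t$ and treated as reliable with unbounded capacity. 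This is precisely a network of the form in Fig.~\ref{fig61} with $n=k$ and $m=r$, and the quantity $D_p$ of the statement is exactly the sum of the $p$ smallest capacities among the forward links of $Q$, which is preserved by this construction.

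Next I would show that any rate $R=\frac{\log|X|}{\log q}$ achievable by a $z$-error link-correcting code on $\mathcal{G}$ is also achievable on $\mathcal{G}''$, so that $C(\mathcal{G})\le C(\mathcal{G}'')$. Given such a code on $\mathcal{G}$, have $\bar s$ simulate every node of $S$ and $\bar t$ simulate every node of $S^c$. Since $\mathcal{G}$ is acyclic, fix a global topological order on $\mathcal{E}$ and have the links of $\mathcal{G}''$ transmit in the induced order, interleaving forward and feedback transmissions as required. Every node $v\in S$ has incoming edges only from $S$ (internal, simulated by $\bar s$) or from $S^c$ (feedback links, which $\bar s$ receives in $\mathcal{G}''$); together with the fact that $\bar s$ holds the source message $w$, this lets $\bar s$ reproduce the exact input, hence the exact transmitted output, of each node of $S$, in particular the symbols on the $k$ forward links. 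Symmetrically, every node $v\in S^c$ has incoming edges only from $S$ (forward links, received by $\bar t$) or from $S^c$ (internal, simulated by $\bar t$), so $\bar t$ reproduces the input of every node of $S^c$, including that of the original sink $t$, and thus outputs both its decoded message and the symbols on the $r$ feedback links. Acyclicity of $\mathcal{G}$ makes this simulation causal: each link value in $\mathcal{G}''$ depends only on link values earlier in the topological order.

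Then I would match up the adversaries. An adversary in $\mathcal{G}''$ controls some set $Z$ of at most $z$ of the $k$ forward and $r$ feedback links. Running the simulation against this adversary, the view of the simulated copy of $t$ inside $\bar t$ is identical to its view in $\mathcal{G}$ when the adversary controls exactly the links in $Z\subseteq\mathcal{E}$ and applies the same errors. Since the code is $z$-error link-correcting on $\mathcal{G}$, the simulated $t$ recovers $w$, hence $\bar t$ recovers $w$. So the simulated code is a $z$-error link-correcting code on $\mathcal{G}''$ of rate $R$, giving $C(\mathcal{G})\le C(\mathcal{G}'')$; applying Lemma~\ref{twonodeup} to $\mathcal{G}''$ with $n=k$ and $m=r$ yields the claimed bound.

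The step I expect to require the most care is verifying that the simulation is genuinely causal and self-consistent, i.e.\ that the interleaving of forward and feedback transmissions on $\mathcal{G}''$ corresponds to a valid topological schedule and that no node of $S$ or $S^c$ is asked to transmit before one of its inputs is available; this is exactly where the acyclicity of $\mathcal{G}$ is essential, since otherwise coalescing the two sides could create a cycle that is not merely the two-node forward/feedback cycle. A secondary point to be careful about is that the adversary in $\mathcal{G}''$ is only a special case of the adversary in $\mathcal{G}$ (it cannot corrupt links internal to a side), so the reduction runs only in the direction needed for an upper bound, and one should not assert the converse.
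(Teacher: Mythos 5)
Your proposal is correct and follows the same approach the paper uses: coalesce each side of the cut into a super-node to obtain the two-node network of Fig.~\ref{fig61} with $n=k$ forward and $m=r$ feedback links, then invoke Lemma~\ref{twonodeup}. The paper states this reduction in a single sentence without spelling out the simulation or the direction-of-inequality argument, so your write-up actually supplies the details the paper leaves implicit.
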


However, we can show that the above upper bound is not tight using
the following generalized Singleton bound, which was presented in
our conference paper~\cite{kim2009network}. A similar upper bound
for the problem of adversarial attack on nodes rather than edges was
given in independent work~\cite{kosut2009nonlinear}.

\begin{definition}
Any set of links $S$ on a  cut $Q\in CS(s,t)$ is said to satisfy
the {\it downstream condition} on $Q$ if none of the links in
$Q\backslash S$ are downstream of any link in $S$.
\end{definition}

\begin{lemma}\label{singleton}(A generalized Singleton bound) Consider
any $z$-error correcting network code with source alphabet $X$ in an
acyclic network $\mathcal{G}$. Consider any set $S$ consisting of
$2z$ links on a source-sink cut $Q\in CS(s,t)$ that satisfies the
downstream condition on $Q$. Let $M=\sum_{(a,b)\in Q\backslash
S}r(a,b)$ be the total capacity of the links in $Q \backslash S$.
Then
\[\log|X|\leq M\cdot\log q.\]
\end{lemma}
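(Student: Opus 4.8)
The plan is to argue by contradiction, adapting the classical Singleton-bound argument so that it also respects the feedback structure permitted on the cut $Q$. Assume $\log|X| > M\log q$. Since each link $(a,b)$ carries $r(a,b)$ symbols from an alphabet of size $q$, the tuple of error-free link values on the links of $Q\backslash S$ ranges over a set of size $\prod_{(a,b)\in Q\backslash S} q^{r(a,b)} = q^{M} < |X|$; hence by the pigeonhole principle there are two distinct source messages $x \neq x'$ that induce the same error-free value on every link of $Q\backslash S$. It then suffices to exhibit two adversarial patterns, each corrupting at most $z$ links, such that the sink receives identical symbols when $x$ is sent under the first pattern and when $x'$ is sent under the second: this contradicts $z$-error correction, since the sink's decoder cannot output both $x$ and $x'$ on the same received word.

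Fix a topological order on $\mathcal{E}$ and split $S = S_1 \sqcup S_2$, where $S_1$ consists of the $z$ links of $S$ appearing earliest in this order and $S_2$ of the $z$ appearing latest; in particular every link of $S_1$ is transmitted before every link of $S_2$. In Execution A the source sends $x$ and the adversary corrupts exactly the links of $S_1$, forcing the output of each $l \in S_1$ to equal the value that $l$ carries in the error-free execution with message $x'$; let $\alpha_l$ denote the value that each $l \in S_2$ ends up carrying in Execution A. In Execution B the source sends $x'$ and the adversary corrupts exactly the links of $S_2$, forcing the output of each $l \in S_2$ to equal $\alpha_l$. Both patterns corrupt $z$ links, and in either case forcing a link's output to a prescribed value is achieved by a suitable choice of the additive error on that link.

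The heart of the argument is to verify that the values carried on every link of $Q$ coincide in Executions A and B. For a link in $Q\backslash S$ this is exactly what the downstream condition buys us: it is downstream of no link of $S$, hence is untouched by either adversary and carries its error-free value, and these error-free values agree for $x$ and $x'$ by the choice of the pair. For a link in $S_2$ the two values agree by construction of Execution B. For a link in $S_1$: in Execution A it is forced to its error-free $x'$-value, while in Execution B, being transmitted before every corrupted link, it has seen no error yet and so again carries its error-free $x'$-value. Finally, since $Q$ is a cut separating $s$ from $t$ and every error in both executions lies on a link of $Q$, the symbols reaching the sink are a fixed function of the values on the links of $Q$ alone: tracing the sink's incoming symbols backwards stays within $S^c$ (an incoming link of a node in $S^c$ either lies in $Q$ or is internal to $S^c$), terminates at links of $Q$ by acyclicity, and meets no further errors. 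Hence the sink receives identical symbols in the two executions, which is the desired contradiction.

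The main obstacle, and the reason the topological split matters rather than an arbitrary halving of $S$, is that in a network with feedback links one link of $Q$ can be downstream of another link of $Q$, so corrupting $S_1$ in Execution A may disturb the values carried on links of $S_2$. The fix is not to prevent this but to absorb it: let those disturbed values be whatever they are ($\alpha_l$) and have the second adversary reproduce them, which is legitimate precisely because those links lie in $S_2$ and are thus under its control. A secondary point to nail down is that equal values on the cut force equal sink outputs even when feedback links are present; this follows from acyclicity together with the observation that backward reachability from a sink node in $S^c$ never crosses into $S$, hence never traverses a feedback link.
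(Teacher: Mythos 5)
Your proposal is correct and is essentially the paper's own argument: the paper also orders $S$ topologically, splits it into the earliest $z$ links (your $S_1$, its "$p$" links) and the latest $z$ links (your $S_2$, its "$w$" links), has the first adversary force the $S_1$-values to the $x'$ error-free values, lets that propagate to $S_2$, and has the second adversary reproduce whatever propagated. The only place you go slightly beyond the paper is the final paragraph spelling out why equal values on $Q$ force equal symbols at the sink (backward reachability from $t$ never leaves $S^c$), which the paper leaves implicit; that is a genuine detail worth including and you handled it correctly.
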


\begin{proof}
The proof is similar to that of the network Singleton bound for the
equal link capacity case in~\cite{yeung2006network}. We assume that
$|X|> q^{M}$, and show that this leads to a contradiction.

Given a cut $Q$, $K(Q)$ denotes the number of links in $Q$. For
brevity, let $Q = \{l_{1},..,l_{{K(Q)}}\}$ where
$S=\{l_{K(Q)-2z+1},...l_{K(Q)}\}$ and links in $S$ are ordered
topologically, i.e., $l_{K(Q)-2z+i}$ is not downstream of
$l_{K(Q)-2z+j}$ for any $i<j$. Since $|X|>q^{M}$ and $M$ is the
capacity of $Q\backslash S$, there exist two distinct codewords
$x,x'\in X$ such that $\tilde{\phi}_{l_{i}}(x)
=\tilde{\phi}_{l_{i}}(x')$ $\forall i =1,..,K(Q)-2z$. So we can
write
\[O(x) = \{y_1,..,y_{K(Q)-2z},p_{1},..,p_z,w_1,..,w_z\},\]
\[O(x') = \{y_1,..,y_{K(Q)-2z},p_{1}',..,p_z',w_1',..,w_z'\},\]
where $O(x)$ denotes the error-free vector of symbols on $Q$ when
codeword $x$ is transmitted.

We will show that it is possible for the adversary to produce
exactly the same outputs on all the channels in $Q$ when errors
occur on at most $z$ links in $Q$.

Assume that the true network input is $x$. The adversary will inject
errors on $z$ links $l_{K(Q)-2z+1},..,l_{K(Q)-z}$ in this order as
follows. First the adversary applies an error on link
$l_{K(Q)-2z+1}$ to change the output from $p_1$ to $p_1'$. The
output of links $(l_{K(Q)-2z+2},..,l_{K(Q)})$ may be affected by
this change, but the outputs of links $(l_{1},..,l_{K(Q)-2z})$ will
not. Let $p_i'(j)$ and $w_i'(j)$ denote the outputs of links
$l_{K(Q)-2z+i}$ and $l_{K(Q)-z+i}$, respectively after the adversary
has injected errors on link $l_{K(Q)-2z+j}$, where $j=1,2,..,z$ with
$p_1'(1)=p_1'$. Then the adversary injects errors on link
$l_{K(Q)-2z+2}$ to change its output from $p_2'(1)$ to $p_2'$. This
process continues until the adversary finishes injecting errors on
$z$ links $l_{K(Q)-2z+1},..,l_{K(Q)-z}$ and the output of this
channel changes from $O(x)$ to
$\{y_1,..,y_{K(Q)-2z},p_{1}',..,p_z',w_1'(z),..,w_z'(z)\}$. Now
suppose the input is $x'$. We can follow a similar procedure by
injecting errors on $z$ links $l_{K(Q)-z+1},..,l_{K(Q)}$. Then the
adversary can produce the outputs
\[\{y_1,..,y_{K(Q)-2z},p_{1}',..,p_z',w_1'(z),..,w_z'(z)\}.\] Thus,
sink node $t$ cannot reliably distinguish between the source symbol
$x$ and $x'$, which gives a contradiction.
\end{proof}

\begin{figure}
\begin{center}
\begin{picture}(200,200)(-40,-20)\centering
\put(75,150){\circle{7}}\put(75,15){\circle{7}}\put(10,82){\circle{7}}\put(140,82){\circle{7}}
\qbezier(75,150)(70,100)(10,82) \qbezier(75,150)(20,160)(10,82)
\qbezier(75,150)(100,100)(140,82)\qbezier(75,150)(80,100)(140,82)
\qbezier(75,150)(130,160)(140,82)\qbezier(75,150)(120,130)(140,82)
\qbezier(10,82)(50,65)(75,15)\qbezier(10,82)(70,65)(75,15)
\qbezier(10,82)(20,5)(75,15)\qbezier(10,82)(30,35)(75,15)
\qbezier(140,82)(100,65)(75,15)\qbezier(140,82)(80,65)(75,15)
\qbezier(140,82)(130,5)(75,15)\qbezier(140,82)(120,35)(75,15)
\qbezier(10,82)(75,100)(140,82)\qbezier(10,82)(80,65)(140,82)
\put(75,160){$s$}\put(65,175){$(source)$}\put(75,0){$t$}\put(65,-15){$(sink)$}
\put(0,90){$A$}\put(140,90){$B$}\put(10,32){$\infty$}\put(130,132){$\infty$}
\put(58,149){\vector(-2,-1){0}}
\put(71,133){\vector(-1,-2){0}}
\put(97,149){\vector(2,-1){0}}\put(85,133){\vector(1,-2){0}}
\put(91,141){\vector(1,-1){0}}\put(80,133){\vector(1,-3){0}}
\put(32,74){\vector(2,-1){0}}\put(13,66){\vector(1,-2){0}}
\put(29,72){\vector(2,-1){0}}\put(18,67){\vector(1,-2){0}}
\put(50,90){\vector(2,1){0}}\put(50,74){\vector(2,-1){0}}
\put(110,70){\vector(-2,-1){0}}\put(113,66){\vector(-1,-1){0}}
\put(126,58){\vector(-1,-1){0}}\put(132,54){\vector(-1,-1){0}}
\put(10,150){\line(26,-27){130}}\put(145,10){$Q$}
\put(30,140){$1$}\put(50,110){$1$}\put(90,60){$10$}
\put(98,52){$10$}\put(110,42){$10$}\put(118,33){$10$}\put(10,82){\line(1,0){130}}
\end{picture}
\end{center}
\caption{Four-node acyclic network: unbounded reliable communication
is allowed from source $s$ to its neighbor $B$ on one side of the
cut and from node $A$ to sink $t$ on the other side of the cut,
respectively. There are 2 forward links of capacity 1 from $s$ to
$A$, 4 forward links of capacity 10 from $B$ to $t$, and 3 feedback
links from $A$ to $B$.}\label{fig152}
\end{figure}
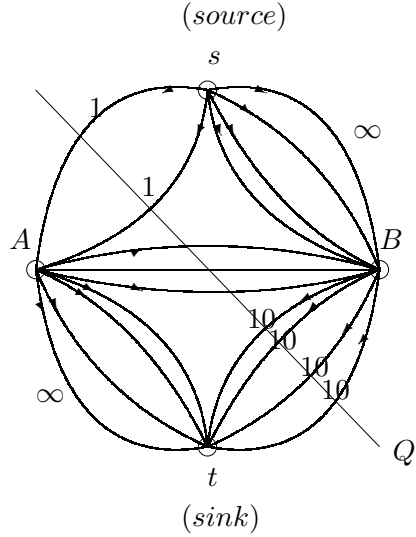

Consider  the example four-node
network shown in Fig.~\ref{fig152}. When $z=2$, the two-node bound
lemma~\ref{twonodeupgen} gives the upper bound 22. The generalized
Singleton bound gives upper bound 2.

However, the generalized Singleton bound is also not tight. Building on ideas from the above bounds, we proceed to derive tighter bounds.

Let
$Q^R$ denote the set of feedback links across cut $Q$. Given a set
of $m\leq z$ feedback links $W\subset Q^R$ and a set of $k\leq z-m$
forward links $F\subset Q$, we use $N_{z,m,k}^{F,W}(Q)$ to denote
the upper bound obtained from lemma~\ref{singleton} (generalized Singleton bound) when evaluated
for $z-m-k$ adversarial links on the cut $Q$ after erasing $W$ and
$F$ from the graph $\mathcal{G}$. Let
\[N_{z,k,m}(Q)=\min_{\{F\subset Q,|F|=k\leq z-m\}}\min_{\{W\subset Q^R,|W|=m\leq z\}}N_{z,k,m}^{F,W}(Q).\]
Then we define $N_z(Q)$ as follows.
\[N_z(Q)=\min_{0\leq m\leq z}\min_{0\leq k\leq z-m} N_{z,k,m}(Q).\]

For instance, consider the 2-layer zig-zag network in
Fig.~\ref{fig65}. If $z=4$, $k=1$ and $m=1$, $N_{z,k,m}(Q) = 19$ by choosing
$F = \{l_1\}$, $W = \{l_6\}$, and removing $\{l_2,l_3,l_4,l_5\}$ in the application of the Singleton bound after erasing $F$ and $W$.
By taking the minimum over $k$ and $m$, we can show that
$N_z(Q)=19$.

\begin{lemma}(Cut-set upper bound 1)\label{bound1}
Consider any $z$-error correcting network code with source alphabet
$X$ in an acyclic network.
\[\log|X|\leq \min_{Q\in CS(s,t)}\{N_{z}(Q)\}\cdot\log q\]
\end{lemma}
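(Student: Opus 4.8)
The plan is to reduce Cut-set upper bound 1 to the generalized Singleton bound (Lemma~\ref{singleton}) applied in a modified network, exploiting the fact that erasing adversarial feedback and forward links from $\mathcal{G}$ yields a valid instance to which the earlier bound applies. Fix any cut $Q\in CS(s,t)$ and any choice of parameters $0\le m\le z$ and $0\le k\le z-m$. The key observation is that the adversary may, as a valid strategy, commit in advance to corrupting a particular set $W\subset Q^R$ of $m$ feedback links and a particular set $F\subset Q$ of $k$ forward links, spending $m+k$ of his $z$ link budget on this commitment. Conditioned on this choice, the remaining $z-m-k$ errors are free to be placed anywhere; in particular they can be confined to links on the cut $Q$ after $W\cup F$ is erased.

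The main step is to argue that against an adversary who has fixed $W$ and $F$ as above, the sink can do no better than in the network $\mathcal{G}'$ obtained from $\mathcal{G}$ by \emph{erasing} the links in $W$ and $F$. Erasing a feedback link $W$ only removes information flowing backward across the cut, which can only hurt the legitimate parties; erasing a forward link $F$ removes a symbol the sink would have received. Formally, any $z$-error correcting code for $\mathcal{G}$ induces a $(z-m-k)$-error correcting code for $\mathcal{G}'$ with the same source alphabet: a decoder for $\mathcal{G}'$ can simulate the missing links by treating their outputs as worst-case, and correctness of the $\mathcal{G}$-code against the combined error pattern (the fixed $W,F$ plus $z-m-k$ adversarial links in $Q\setminus(W\cup F)$) guarantees correctness of the simulated decoder. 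One must check that $Q$, viewed inside $\mathcal{G}'$, is still a source-sink cut and that one can select within it a set $S$ of $2(z-m-k)$ links satisfying the downstream condition — this is exactly what makes $N_{z,k,m}^{F,W}(Q)$ well-defined via Lemma~\ref{singleton}. Applying Lemma~\ref{singleton} in $\mathcal{G}'$ with parameter $z-m-k$ then gives $\log|X|\le N_{z,k,m}^{F,W}(Q)\cdot\log q$.

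Since this holds for every admissible $F$, $W$, $k$, $m$ and every cut $Q$, taking the minimum over all of these yields
\[\log|X|\ \le\ \min_{Q\in CS(s,t)}\min_{0\le m\le z}\min_{0\le k\le z-m}\ N_{z,k,m}(Q)\cdot\log q\ =\ \min_{Q\in CS(s,t)}N_z(Q)\cdot\log q,\]
which is the claimed bound.

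I expect the main obstacle to be the second step: making rigorous the claim that a $z$-error correcting code for $\mathcal{G}$ degrades gracefully into a $(z-m-k)$-error correcting code for the erased network $\mathcal{G}'$. The subtlety is that the adversary's "commitment" to corrupting $W$ and $F$ is not literally an erasure — the links still carry (corrupted) symbols — so one must argue that the adversary can choose these corruptions adaptively to mimic, as far as the sink can tell, any behavior consistent with those links being absent, and that this does not reduce his effective budget below $z-m-k$ for the rest of the cut. Handling the feedback links $W$ carefully is the delicate part, since corrupting a feedback link is genuinely different from deleting it; the cleanest route is to note that for an \emph{upper} bound it suffices to give the decoder \emph{more} power (equivalently, to give the adversary a restricted but still valid strategy), and then invoke Lemma~\ref{singleton} in the smaller network where the bookkeeping of the downstream condition on $Q\setminus(W\cup F)$ is straightforward.
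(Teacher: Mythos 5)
Your proof takes essentially the same route as the paper's: fix a cut $Q$, have the adversary spend $m+k$ of its $z$-link budget on $W\subset Q^R$ and $F\subset Q$, apply Lemma~\ref{singleton} to the erased network with $z-m-k$ adversarial links to obtain $N_{z,k,m}^{F,W}(Q)$, and minimize over all admissible choices of $Q$, $F$, $W$. The paper's own proof is a terse three-sentence sketch that does not spell out the erasure-versus-corruption distinction you flag; your discussion of the reduction to $\mathcal{G}'$ is a reasonable (and more careful) elaboration of the same argument rather than a different one.
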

\begin{proof}
For any cut $Q\in CS(s,t)$, the adversary can erase a set $W\subset
Q^R$ of feedback links and a set $F\subset Q$ of forward links where
$|W|=m\leq z$ and $|F|=k\leq z-m$. Applying Lemma \ref{singleton} on
$Q$ after erasing $W$ and $F$ gives the upper bound
$N_{z,k,m}^{F,W}(Q)$. By taking the minimum over all cuts $Q$, we
obtain the above bound.
\end{proof}

The following examples illustrate how the bound in
Lemma~\ref{bound1} tightens the generalized Singleton bound. We
first consider a four-node acyclic network as shown in
Fig.~\ref{fig64}. In each example, unbounded reliable communication
is allowed from source $s$ to its neighbor $B$ on one side of the
cut and from node $A$ to sink $t$ on the other side of the cut.
There are feedback links with arbitrary capacities from $A$ to $B$.

\begin{figure*}
\begin{center}
\begin{picture}(200,200)(-40,-20)\centering
\put(-30,150){\circle{7}}\put(-30,15){\circle{7}}\put(-95,82){\circle{7}}\put(35,82){\circle{7}}
\qbezier(-30,150)(-35,100)(-95,82)
\qbezier(-30,150)(-85,140)(-95,82)
\qbezier(-30,150)(15,130)(35,82)\qbezier(-30,150)(-25,100)(35,82)
\qbezier(-95,82)(-35,65)(-30,15) \qbezier(-95,82)(-85,25)(-30,15)
\qbezier(35,82)(-5,65)(-30,15)\qbezier(35,82)(-25,65)(-30,15)
\qbezier(35,82)(25,5)(-30,15)\qbezier(35,82)(15,35)(-30,15)
\qbezier(-95,82)(-30,100)(35,82)\qbezier(-95,82)(-25,65)(35,82)
\put(-80,130){10}\put(-52,105){10}
\put(-20,60){1}\put(-10,50){1}\put(0,40){1}\put(10,30){1}
\put(-30,89){\circle{1}}\put(-30,82){\circle{1}}\put(-30,75){\circle{1}}
\put(-30,160){$s$}\put(-30,3){$t$}\put(-105,90){$A$}
\put(35,90){$B$}\put(10,135){$\infty$}\put(-90,40){$\infty$}
\put(-95,150){\line(26,-27){130}}\put(40,10){$Q$}\put(115,150){\line(26,-27){130}}\put(250,10){$Q$}
\put(180,150){\circle{7}}\put(180,15){\circle{7}}\put(115,82){\circle{7}}\put(245,82){\circle{7}}
\qbezier(180,150)(155,100)(115,82)\qbezier(180,150)(175,100)(115,82)
\qbezier(180,150)(125,160)(115,82)\qbezier(180,150)(135,130)(115,82)
\qbezier(180,150)(185,100)(245,82)
\qbezier(180,150)(225,130)(245,82)
\qbezier(115,82)(175,65)(180,15) 
\qbezier(115,82)(135,35)(180,15)
\qbezier(245,82)(205,65)(180,15)\qbezier(245,82)(185,65)(180,15)
\qbezier(245,82)(235,5)(180,15)\qbezier(245,82)(225,35)(180,15)
\qbezier(115,82)(180,100)(245,82)\qbezier(115,82)(185,65)(245,82)
\put(180,150){\line(-65,-68){65}}\put(180,15){\line(65,67){65}}
\put(130,130){3}\put(140,122){3}\put(146,115){3}\put(152,110){3}\put(158,105){3}
\put(193,57){2}\put(199,51){2}\put(205,45){1}\put(213,37){1}\put(223,27){1}
\put(180,89){\circle{1}}\put(180,82){\circle{1}}\put(180,75){\circle{1}}
\put(180,160){$s$}\put(180,3){$t$}\put(105,90){$A$}
\put(245,90){$B$}\put(220,135){$\infty$}\put(120,40){$\infty$}
\put(-47,144){\vector(-2,-1){0}}\put(-34,133){\vector(-1,-2){0}}
\put(-13,141){\vector(2,-1){0}}\put(-25,133){\vector(1,-2){0}}
\put(-78,75){\vector(2,-1){0}}\put(-90,66){\vector(1,-2){0}}
\put(-60,89){\vector(2,1){0}}\put(-60,74){\vector(2,-1){0}}
\put(-2,64){\vector(-2,-1){0}}\put(2,60){\vector(-2,-1){0}}
\put(18,52){\vector(-1,-1){0}}\put(27,52){\vector(-1,-1){0}}
\put(158,149){\vector(-2,-1){0}}\put(176,133){\vector(-1,-2){0}}
\put(162,140){\vector(-2,-1){0}}\put(165,135){\vector(-1,-1){0}}\put(171,135){\vector(-1,-1){0}}
\put(205,62){\vector(-1,-1){0}}\put(212,58){\vector(-1,-1){0}}\put(218,54){\vector(-1,-1){0}}
\put(225,48){\vector(-1,-1){0}}\put(236,45){\vector(-1,-1){0}}
\put(197,141){\vector(2,-1){0}}\put(185,133){\vector(1,-2){0}}
\put(132,75){\vector(2,-1){0}}\put(122,66){\vector(1,-2){0}}
\put(150,90){\vector(2,1){0}}\put(150,74){\vector(2,-1){0}}
\put(-35,-25){$(a)$}\put(175,-25){$(b)$}
\end{picture}
\end{center}
\caption{Four-node acyclic network: unbounded reliable communication
is allowed from source $s$ to its neighbor $B$ on one side of the
cut and from node $A$ to sink $t$ on the other side of the cut,
respectively. (a) There are 2 links of capacity 10 from $s$ to
$A$ and 4 unit-capacity links from $B$ to $t$. (b) There are 5 links
of capacity 3 from $s$ to $A$. There are 2 links of capacity
2 and 3 links of capacity 1 from $B$ to $t$.}\label{fig64}
\end{figure*}
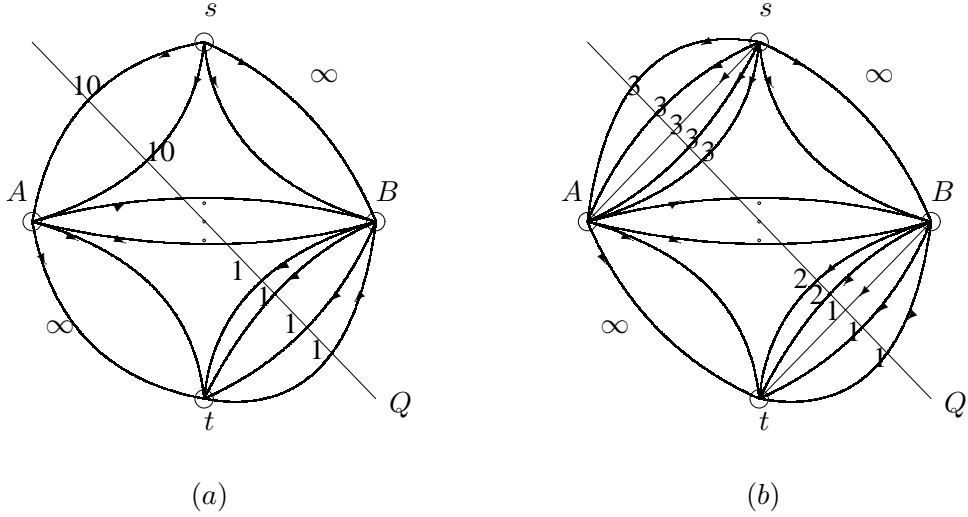

When we compute the generalized Singleton bound, for any cut $Q$, we
choose and erase $2z$ links in the cut such that none of the
remaining links in the cut are downstream of the chosen $2z$ links.
Then we sum the remaining link capacities and take the minimum over
all cuts. Because of the downstream condition, when the link
capacities between $s$ and $A$ are much larger than the link
capacities between $B$ and $t$, the Singleton bound may not be
tight. For example, in the network in Fig.~\ref{fig64} (a), if
$z=2$, then the generalized Singleton bound gives upper bound 20.
However, when the adversary declares that he will use two forward
links between $s$ and $A$, we obtain the erasure bound 4.

As another example, consider the network in Fig.~\ref{fig64} (b)
when $z=2$. Applying the generalized Singleton bound gives upper
bound 16. If the adversary erases one of the forward links between
$s$ and $A$ and we apply the generalized Singleton bound on the
remaining network, then our upper bound is improved to 15. The
intuition behind this example is that when the adversary erases
$k\leq z$ large capacities links which do not satisfy the downstream
condition, applying the generalized Singleton bound on remaining
network with $(z-k)$ adversarial links can give a tighter bound.

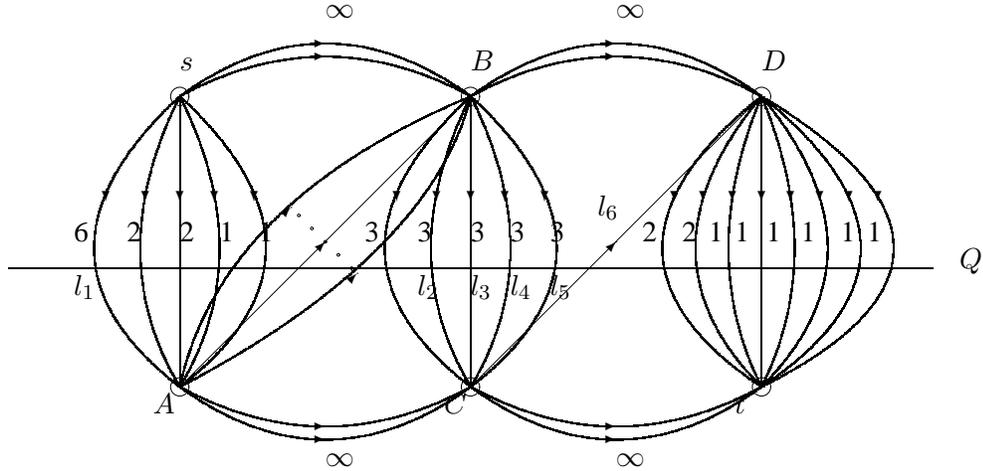
\begin{figure*}
\begin{center}
\begin{picture}(200,200)(-40,-20)\centering
\put(75,20){\circle{7}}\put(75,130){\circle{7}}\put(-35,20){\circle{7}}\put(-35,130){\circle{7}}
\put(185,20){\circle{7}}\put(185,130){\circle{7}}
\qbezier(75,130)(45,70)(75,20)\qbezier(75,130)(10,70)(75,20)
\qbezier(75,130)(105,70)(75,20)\qbezier(75,130)(140,70)(75,20)
\qbezier(-35,130)(-65,70)(-35,20)\qbezier(-35,130)(-100,70)(-35,20)
\qbezier(-35,130)(-5,70)(-35,20)\qbezier(-35,130)(30,70)(-35,20)
\qbezier(185,130)(160,70)(185,20)\qbezier(185,130)(110,70)(185,20)\qbezier(185,130)(135,70)(185,20)\qbezier(185,130)(285,70)(185,20)
\qbezier(185,130)(210,70)(185,20)\qbezier(185,130)(235,70)(185,20)\qbezier(185,130)(260,70)(185,20)
\qbezier(75,20)(130,-20)(185,20)\qbezier(75,20)(130,-10)(185,20)
\qbezier(-35,20)(20,-20)(75,20)\qbezier(-35,20)(20,-10)(75,20)
\qbezier(-35,130)(20,170)(75,130)\qbezier(-35,130)(20,160)(75,130)
\qbezier(75,130)(130,170)(185,130)\qbezier(75,130)(130,160)(185,130)
\qbezier(-35,20)(-20,90)(75,130)\qbezier(-35,20)(60,70)(75,130)
\put(75,130){\line(0,-1){110}}\put(185,130){\line(0,-1){110}}\put(-35,130){\line(0,-1){110}}
\put(-35,20){\line(1,1){110}}\put(75,20){\line(1,1){110}}
\put(20,75){\circle{1}}\put(10,85){\circle{1}}\put(30,65){\circle{1}}\put(15,80){\circle{1}}\put(25,70){\circle{1}}
\put(-35,140){$s$}\put(-45,10){$A$}\put(75,140){$B$}\put(65,10){$C$}\put(185,140){$D$}\put(175,10){$t$}
\put(-75,75){6}\put(-55,75){2}\put(-35,75){2}\put(-20,75){1}\put(-5,75){1}
\put(35,75){3}\put(55,75){3}\put(75,75){3}\put(90,75){3}\put(105,75){3}
\put(140,75){2}\put(155,75){2}\put(165,75){1}\put(175,75){1}\put(187,75){1}\put(200,75){1}\put(215,75){1}\put(225,75){1}
\put(-35,90){\vector(0,-1){0}}\put(-48,90){\vector(0,-1){0}}\put(-22,90){\vector(0,-1){0}}
\put(-7,90){\vector(0,-1){0}}\put(-63,90){\vector(0,-1){0}}
\put(75,90){\vector(0,-1){0}}\put(62,90){\vector(0,-1){0}}\put(88,90){\vector(0,-1){0}}
\put(103,90){\vector(0,-1){0}}\put(47,90){\vector(0,-1){0}}
\put(185,90){\vector(0,-1){0}}\put(174,90){\vector(0,-1){0}}\put(163,90){\vector(0,-1){0}}
\put(196,90){\vector(0,-1){0}}\put(207,90){\vector(0,-1){0}}\put(218,90){\vector(0,-1){0}}
\put(152,90){\vector(0,-1){0}}\put(229,90){\vector(0,-1){0}}
\put(130,75){\vector(1,1){0}}\put(20,75){\vector(1,1){0}}
\put(7,88){\vector(1,1){0}}\put(32,63){\vector(1,1){0}}
\put(20,150){\vector(1,0){0}}\put(20,0){\vector(1,0){0}}
\put(20,145){\vector(1,0){0}}\put(20,5){\vector(1,0){0}}
\put(130,150){\vector(1,0){0}}\put(130,0){\vector(1,0){0}}
\put(130,145){\vector(1,0){0}}\put(130,5){\vector(1,0){0}}
\put(20,160){$\infty$}\put(20,-10){$\infty$}\put(130,160){$\infty$}\put(130,-10){$\infty$}
\put(-100,65){\line(1,0){350}}\put(260,65){$Q$}\put(-75,55){$l_1$}
\put(55,55){$l_2$}\put(75,55){$l_3$}\put(90,55){$l_4$}\put(105,55){$l_5$}\put(123,85){$l_6$}
\end{picture}
\end{center}
\caption{2-layer zig-zag network: unbounded reliable communication
is allowed from $s$ to $B$, from $B$ to $D$, from $A$ to $C$, and
from $C$ to $t$ respectively. There are sufficiently large number of
feedback links from $A$ to $B$. There is one feedback link from $C$
to $D$.}\label{fig65}
\end{figure*}

For the 2-layer zig-zag network in Fig.~\ref{fig65}, when $z=4$, the
min-cut is 37 and the generalized Singleton bound gives upper bound
27. Suppose that the adversary declares that he will use the
feedback link between $C$ and $D$ and the forward link with capacity
6 between $s$ and $A$. By applying the generalized Singleton bound
on the remaining network with two adversarial links, we obtain
37-6-(3+3+3+3)=19. The intuition behind this example is that the
links between $B$ and $C$ and the links between $D$ and $t$ have the
same topological order once the single feedback link between $C$ and
$D$ is erased. Since the generalized Singleton bound is obtained by
erasing $2z$ links on the cut such that none of the remaining links
on the cut is downstream of any erased links, erasing the single
feedback link between $C$ and $D$ yields a tighter Singleton bound
even with fewer adversarial links. Moreover, before applying the
Singleton bound, we first erase the link with capacity 6, which is
the largest link between $s$ and $A$ as we did in example in
Fig.~\ref{fig64}(b).


Next, we introduce another upper bounding approach which considers confusion between two possible sets of $z$ adversarial links, each containing some forward links as well as the corresponding downstream feedback links required to prevent error propagation. Consider any cut $Q=cut(P,\mathcal{V}\backslash P)$ and sets
$Z_1,Z_2\subset Q$. We say that a  feedback link $l\in Q^R$ is directly downstream  of a forward link $l'\in Q$ (and that $l'$ is directly upstream of $l$) if there is a directed path starting from  $l'$ and ending with $l$ that does not include other links in $Q$ or $Q^R$. Let $W_1$  be the set of links in
$Q^R$ which are directly downstream of a link in $Z_1$ and upstream of a link
in $(Q\backslash Z_1)\backslash Z_2$.  Let $W_2$  be the set of links in
$Q^R$ which are directly downstream of a link in $Z_2$ and upstream of a link
in $Q\backslash Z_2$.

\begin{lemma}(Cut-set upper bound 2)\label{bound2}
Let $M=\sum_{(a,b)\in (Q\backslash Z_1)\backslash Z_2}r(a,b)$ denote
the total capacity of the remaining links on $(Q\backslash$
$Z_1)\backslash Z_2$. If $|Z_i\cup W_i|\leq z$ for $i=1,2$, then
\[\log|X|\leq M\cdot\log q.\]
\end{lemma}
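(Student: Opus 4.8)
The plan is to adapt the Singleton-type argument of Lemma~\ref{singleton}, but using the \emph{two} candidate attack patterns $Z_1\cup W_1$ and $Z_2\cup W_2$: the feedback sets $W_i$ are chosen to be exactly the links the adversary needs in order to stop a perturbation confined to $Z_i$ from leaking into the ``clean'' part $R:=(Q\backslash Z_1)\backslash Z_2$ of the cut. First I would assume for contradiction that $|X|>q^{M}$. Since $R$ has total capacity $M$, the map $w\mapsto(\tilde{\phi}_l(w))_{l\in R}$ taking a source symbol to its error-free outputs on $R$ has a range of size at most $q^{M}<|X|$, so there exist distinct codewords $x\neq x'$ whose error-free outputs agree on every link of $R$.

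Next I would build two scenarios indistinguishable at $t$. In Scenario~1 the true input is $x$ and only links in $Z_1\cup W_1$ are corrupted; processing links in topological order and using $y_l=x_l+e_l$, the adversary forces every link of $Z_1$ to carry $\tilde{\phi}_l(x')$ and every link of $W_1$ to carry $\tilde{\phi}_l(x)$. In Scenario~2 the true input is $x'$ and only links in $Z_2\cup W_2$ are corrupted; the adversary forces every link of $Z_2$ to carry exactly the value that link carried in Scenario~1, and every link of $W_2$ to carry $\tilde{\phi}_l(x')$. Every such forced value is realizable by a suitable error on that link, and the error vector in Scenario~$i$ is supported on $Z_i\cup W_i$, which has at most $z$ links by hypothesis; hence at most $z$ links are corrupted in each scenario.

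The core step is to verify that the outputs on \emph{all} forward links of $Q$ coincide in the two scenarios. The structural observation I would use is that every directed path from a link $a\in Z_1$ to a link $c\in R$ must, at its first return from $\mathcal{V}\backslash P$ to $P$, traverse a feedback link $b\in Q^R$ whose sub-path from $a$ stays inside $\mathcal{V}\backslash P$ — so $b$ uses no other links of $Q\cup Q^R$ and is thus directly downstream of $a$ — and which is upstream of $c$; hence $b\in W_1$. Similarly, every path from $Z_2$ to $Q\backslash Z_2$ traverses a link of $W_2$. An induction on topological order then shows that once the $W_1$-links are pinned to their error-free $x$-values no perturbation originating in $Z_1$ reaches any link of $R$, and symmetrically for $W_2$, $Z_2$ and $Q\backslash Z_2$. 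Consequently, in Scenario~1 the links of $R$ carry $\tilde{\phi}(x)$, the links of $Z_1$ carry $\tilde{\phi}(x')$, and the links of $Z_2$ carry some fixed vector $v$; in Scenario~2 the links of $R$ carry $\tilde{\phi}(x')=\tilde{\phi}(x)$, the links of $Z_1$ carry $\tilde{\phi}(x')$ (because $W_2$ also blocks propagation from $Z_2$ into $Z_1$), and the links of $Z_2$ carry $v$ by construction. Thus the cut outputs match; and since every $s$--$t$ path crosses the cut only through $Q$ and no link internal to $\mathcal{V}\backslash P$ is ever corrupted, $t$ receives the same values in both scenarios, so $x$ and $x'$ are not $z$-links separable at $t$. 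This contradicts the code being $z$-error correcting, giving $\log|X|\le M\cdot\log q$.

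The hard part will be this propagation-blocking induction, and in particular getting the asymmetry of the two feedback sets right: in Scenario~1 a $Z_1$-perturbation is \emph{permitted} to leak into $Z_2$, which is why $W_1$ only needs to block $Z_1\to R$; but then Scenario~2 must reproduce the resulting value $v$ on $Z_2$, which is why $W_2$ must block $Z_2$ from affecting \emph{all} of $Q\backslash Z_2$ (including $Z_1$). One also has to confirm that ``directly downstream'' is the correct notion here, i.e.\ that the first feedback link on a $Z_i$-to-target path really avoids every other link of $Q\cup Q^R$; this is immediate since any path between two nodes of $\mathcal{V}\backslash P$ uses only links internal to $\mathcal{V}\backslash P$, which lie in neither $Q$ nor $Q^R$.
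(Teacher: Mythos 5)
Your proposal is correct and follows essentially the same argument as the paper's proof: find $x\neq x'$ agreeing on $(Q\backslash Z_1)\backslash Z_2$, attack $Z_1\cup W_1$ under $x$ (pinning $W_1$ to its error-free values and $Z_1$ to the $x'$-values), attack $Z_2\cup W_2$ under $x'$ (pinning $W_2$ to its error-free values and $Z_2$ to whatever values Scenario~1 induced there), and check that the two scenarios yield identical outputs on the cut. You spell out the ``directly downstream'' / first-return-to-$P$ structural reasoning and the propagation-blocking induction more explicitly than the paper does, but the underlying two-scenario construction and confusability argument are the same.
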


\begin{proof}
We assume that $|X|> q^{M}$, and show that this leads to a
contradiction. Let $K(Q)$ denote the number of links on the cut $Q$.
Since $|X|>q^{M}$, from the definition of $M$, there exist two
distinct codewords $x,x'\in X$ such that error-free outputs on the
links in $(Q\backslash Z_1)\backslash Z_2$ are the same. Let
$c=|Z_1|$ and $d=|Z_2|$.  Then we can write
\[O(x) = \{y_1,..,y_{K(Q)-c-d},u_{1},..,u_{c},w_1,..,w_{d}\},\]
\[O(x') = \{y_1,..,y_{K(Q)-c-d},u_{1}',..,u_{c}',w_1',..,w_{d}'\},\]
where $(y_1,..,y_{K(Q)-c-d})$ denotes the error-free outputs on the
links in $(Q\backslash Z_1)\backslash Z_2$ for $x$ and $x'$; $(u_{1},..,u_{c})$ and  $(u_{1}',..,u_{c}')$
denote the error-free outputs on the links in $Z_1$ for $x$ and $x'$ respectively; and $(w_{1},..,w_{d})$ and
$(w_{1}',..,w_{d}')$ denote the error-free outputs on the links in
$Z_2$ for $x$ and $x'$ respectively. We will show that it is possible for the adversary
to produce exactly the same outputs on all the channels in $Q$ under  $x$ and $x'$ when
errors occur on at most $z$ links. When codeword $x$ is sent, we use
$B_l(x)$ to denote the error-free symbols on feedback link $l$.

Assume the input of network is $x$. The adversary chooses feedback
links set $W_1$ and forward links set $Z_1$ as its $z$ adversarial
links. First the adversary applies errors on $Z_1$ to change the
output from $u_{i}$ to $u_{i}'$ for $\forall 1\leq i\leq c$ and to cause
each feedback link $l\in W_1$ to transmit $B_l(x)$. Since all feedback
links which are directly downstream of a link in $Z_1$ and upstream of a
link in $(Q\backslash Z_1)\backslash Z_2$ transmit the error-free
symbols,
the outputs on links in $(Q\backslash Z_1)\backslash Z_2$  are not affected. The outputs  on links in $Z_2$ may be affected, and we denote their new values by $\{w''_1,..,w''_d\}$. Thus, the sink observes
$\{y_1,..,y_{K(Q)-c-d},u_{1}',..,u_c',w''_1,..,w''_d\}$.

When codeword $x'$ is transmitted, the adversary chooses feedback
links set $W_2$ and forward links set $Z_2$ as its $z$ adversarial
links. The adversary applies errors on them to change
$(w_1,..,w_d)$ to $(w_1'',..,w_d'')$ and to cause each feedback link $l\in W_2$ to transmit $B_l(x')$. Since all feedback
links which are directly downstream of a link in $Z_2$ and upstream of a
link in $Q\backslash Z_2$ transmit the error-free
symbols, the outputs on
any other links are not affected.
Therefore, the output is changed from $O(x')$
to $\{y_1,..,y_{K(Q)-c-d},u_{1}',..,u_c',w''_1,..,w''_d\}$. Thus, the
sink node $t$ cannot reliably distinguish between the codewords $x$
and $x'$, which gives a contradiction.
\end{proof}

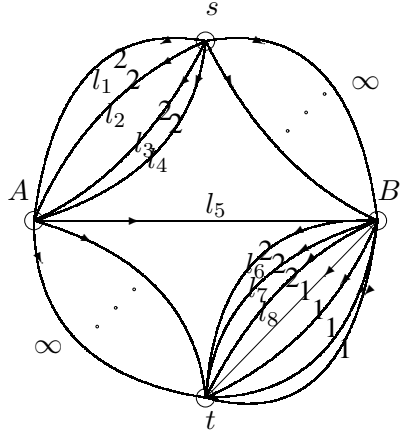
\begin{figure}
\begin{center}
\begin{picture}(200,200)(-40,-20)\centering
\put(75,150){\circle{7}}\put(75,15){\circle{7}}\put(10,82){\circle{7}}\put(140,82){\circle{7}}
\qbezier(75,150)(50,100)(10,82)\qbezier(75,150)(70,100)(10,82)
\qbezier(75,150)(20,160)(10,82)\qbezier(75,150)(30,130)(10,82)
\qbezier(75,150)(100,100)(140,82)
\qbezier(75,150)(130,160)(140,82)
\qbezier(10,82)(70,65)(75,15)
\qbezier(10,82)(10,25)(75,15)
\qbezier(140,82)(80,85)(75,15)\qbezier(140,82)(100,65)(75,15)\qbezier(140,82)(80,65)(75,15)
\qbezier(140,82)(130,15)(75,15)\qbezier(140,82)(120,35)(75,15)
\qbezier(140,82)(130,0)(75,15)
\put(75,15){\line(65,67){65}}\put(10,82){\line(1,0){130}}
\put(75,160){$s$}\put(75,3){$t$}\put(0,90){$A$} \put(140,90){$B$}
\put(120,130){\circle{1}}\put(113,123){\circle{1}}\put(106,116){\circle{1}}
\put(34,42){\circle{1}}\put(48,56){\circle{1}}\put(41,49){\circle{1}}
\put(40,140){2}\put(45,132){2}\put(57,120){2}\put(61,115){2}
\put(32,132){$l_1$}\put(37,119){$l_2$}\put(48,108){$l_3$}\put(53,102){$l_4$}
\put(75,85){$l_5$}\put(90,62){$l_6$}\put(91,53){$l_7$}\put(95,44){$l_8$}
\put(95,67){2}\put(100,62){2}\put(105,57){2}\put(110,52){1}\put(115,45){1}\put(120,38){1}\put(125,30){1}
\put(58,149){\vector(-2,-1){0}}\put(58,141){\vector(-2,-1){0}}
\put(65,133){\vector(-1,-1){0}}\put(71,133){\vector(-1,-2){0}}
\put(97,149){\vector(2,-1){0}}\put(85,133){\vector(1,-2){0}}
\put(32,74){\vector(2,-1){0}}\put(13,66){\vector(1,-2){0}}
\put(50,82){\vector(1,0){0}}
\put(110,70){\vector(-2,-1){0}}\put(108,77){\vector(-2,-1){0}}\put(113,66){\vector(-1,-1){0}}
\put(120,61){\vector(-1,-1){0}}\put(126,58){\vector(-1,-1){0}}\put(132,54){\vector(-1,-1){0}}
\put(135,54){\vector(-1,-1){0}}
\put(10,32){$\infty$}\put(130,132){$\infty$}
\end{picture}
\end{center}
\caption{Four node acyclic network: There are 4 links of capacity 2 from $s$ to $A$. There are 3 links of capacity 2 and
1 links of capacity 4 from $B$ to $t$.}\label{fig66}
\end{figure}

Given a cut $Q$, we consider all possible sets $(Z_1,Z_2)$ on $Q$
satisfying the condition of Lemma \ref{bound2}. We choose sets
$(Z_1^*,Z_2^*)$ among them that have the maximum total link
capacities and define $M_z(Q)$ to be the sum of the capacities of
the links in $(Q\backslash Z_1^*)\backslash Z_2^*$. This gives the
upper bound
\[\log|X|\leq \min_{Q\in cut(s,t)}M_z(Q)\cdot\log
q.\]

The following example shows that we can obtain a tighter upper bound
using Lemma~\ref{bound2}. For the example network in
Fig.~\ref{fig66}, when $z=3$, Lemma~\ref{bound1} gives upper bound
9. However, Lemma~\ref{bound2} gives a tighter upper bound 8 when
$Z_1^*=\{l_1,l_2,l_5\}$ , and $Z_2^*=\{l_6,l_7,l_8\}$.

Now we derive a generalized cut-set upper bound that unifies
Lemma~\ref{bound1} and Lemma~\ref{bound2}. Given a cut $Q$, consider a set $F\subset Q$ of forward links and a set $W\subset Q^R$ of feedback links such that $|F|+|W|\le z$. Let
$C_z^{F,W}(Q)$ denote the upper bound obtained from Lemma
\ref{bound2} when evaluated for $z-m-k$ adversarial links on the cut
$Q$ after erasing $F$ and  $W$ from the original graph $\mathcal{G}$. Then
\[\min_{\{F\subset Q,W\subset Q^R,|F|+|W|\le z\}}C_z^{F,W}(Q)\]is an upper bound on the error correction capacity of $\mathcal{G}$. This includes the bound $N_z (Q)$ of Lemma~\ref{bound1} as a special case, since the generalized Singleton bound is a special case of the upper bound in
Lemma~\ref{bound2} corresponding to the case where $Z_1\cup Z_2$ is a set of $2z$ links satisfying the downstream
condition. It is also clear that
$C_z^{F,W}(Q)$ is the same as the bound in Lemma~\ref{bound2} when
$F=W=\emptyset$.  Note however that any bound $C_z^{F,W}(Q)$ obtainable with a nonempty set $W$ of erased feedback links is also obtainable by including  those links in the sets $W_1 $ and $W_2$ of Lemma~\ref{bound2} instead of erasing them. Thus,  we define
\[C_z(Q)=\min_{\{F\subset Q,|F|\leq z\}}C_z^{F,\emptyset}(Q)\]
and state our upper bound as follows.\footnote{
After submitting this paper, we found a way to tighten the above
bound for the case of small feedback link capacity. Briefly, the key
idea is to note that instead of choosing all the links in $W_i$ as
adversarial links as in the proof of Lemma~\ref{bound2}, another
possibility is to choose only a subset $Y_i\subset  W_i$ as
adversarial links, as long as the values on links in $W_i\backslash
Y_i $ and links in $Q\backslash Z_i$ that are directly upstream of
links in $W_i\backslash Y_i $ are the same under the two confusable
codewords $x$ and $x'$. The capacities of these links  then appear
as part of the upper  bound; thus, this bound is useful for cases
where feedback links have small capacity. This result will be
presented formally in an upcoming paper.}
\begin{theorem}\label{bound}
(A generalized cut-set upper bound)
Consider any $z$-error
correcting network code with source alphabet $X$ in an acyclic
network. Then\[\log|X|\leq \min_{Q\in CS(s,t)}C_{z}(Q)\cdot\log q.\]
\end{theorem}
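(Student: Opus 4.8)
The plan is to prove Theorem~\ref{bound} by mirroring the proof of Lemma~\ref{bound1}, but invoking the stronger confusability construction of Lemma~\ref{bound2} in place of the generalized Singleton bound of Lemma~\ref{singleton}. Fix any cut $Q \in CS(s,t)$ and any forward-link set $F \subset Q$ with $|F| = k \le z$. A $z$-error link-correcting code must recover the source even when the adversary dedicates $k$ of its error links to the links of $F$ and uses the remaining $z - k$ links elsewhere. Hence, from the sink's point of view, decoding must still succeed when the outputs of $F$ are regarded as unusable and at most $z - k$ further links are adversarial; restricted to the cut, this is exactly a $(z-k)$-error problem on $Q \setminus F$. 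Applying Lemma~\ref{bound2} to this residual situation --- with $Z_1, Z_2 \subset Q \setminus F$ and the induced feedback-link sets $W_1, W_2 \subseteq Q^R$, subject to $|Z_i \cup W_i| \le z - k$ --- gives $\log|X| \le C_z^{F,\emptyset}(Q)\cdot\log q$. Minimizing first over all $F$ with $|F| \le z$ yields $\log|X| \le C_z(Q)\cdot\log q$, and then minimizing over all cuts $Q \in CS(s,t)$ gives the claimed bound.

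In more detail, the contradiction argument runs as follows. Suppose $|X| > q^{C_z(Q)}$ for some cut $Q$; let $F$ attain the minimum defining $C_z(Q)$, and let $(Z_1^*, Z_2^*)$ attain the corresponding maximum of Lemma~\ref{bound2} on $Q \setminus F$, so that $M := \sum_{(a,b) \in (Q \setminus F) \setminus (Z_1^* \cup Z_2^*)} r(a,b) = C_z^{F,\emptyset}(Q) < \log|X|/\log q$. Then there are two distinct codewords $x, x'$ with identical error-free outputs on $(Q \setminus F) \setminus (Z_1^* \cup Z_2^*)$. Following the construction in the proof of Lemma~\ref{bound2}, under input $x$ the adversary corrupts $F \cup Z_1^* \cup W_1$ --- at most $k + (z-k) = z$ links, since $F$ is disjoint from $Z_1^* \cup W_1$ --- setting the outputs on $F$ to a value common to both codewords, changing the $Z_1^*$-outputs to the $x'$-values, and forcing the feedback links in $W_1$ to carry their error-free $x$-symbols; under input $x'$ the adversary corrupts $F \cup Z_2^* \cup W_2$ symmetrically. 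One then checks that both scenarios produce the same tuple of outputs on all links of $Q$, so the sink cannot distinguish $x$ from $x'$, contradicting $z$-error correction.

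The main obstacle is the bookkeeping already present in the proof of Lemma~\ref{bound2}: one must verify that the errors injected on $F$ and on $Z_i^*$ do not propagate, through feedback links in $Q^R$, so as to perturb the links of $(Q \setminus F) \setminus (Z_1^* \cup Z_2^*)$ whose outputs must stay error-free under both codewords. The key point (noted in the discussion preceding the theorem) is that any feedback link lying directly downstream of an erased link in $F$ and directly upstream of a link in $(Q \setminus F) \setminus (Z_1^* \cup Z_2^*)$ can be folded into the sets $W_1$ and $W_2$ rather than erased, so that no separate erasure of feedback links is required and the containment argument of Lemma~\ref{bound2} applies verbatim to $Q \setminus F$. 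Finally, one should record that the theorem unifies the earlier bounds: taking $F = \emptyset$ with $Z_1 \cup Z_2$ a $2z$-link set satisfying the downstream condition recovers Lemma~\ref{singleton}; taking $F = \emptyset$ with arbitrary $Z_1, Z_2$ recovers Lemma~\ref{bound2}; and allowing $F \ne \emptyset$ while specializing the inner bound to the Singleton case recovers Lemma~\ref{bound1}.
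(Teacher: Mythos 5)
Your proof is correct and takes the same route the paper takes: the paper states Theorem~\ref{bound} without a separate proof, relying on the preceding discussion (erase $F$, apply Lemma~\ref{bound2} to the reduced graph with the residual $z-|F|$ budget, then observe that erasing a feedback set $W$ is subsumed), and your write-up simply fleshes out that discussion by making the confusability/contradiction argument explicit. One small misattribution worth noting: the paper's remark about ``folding feedback links into $W_1,W_2$'' is about the redundancy of the erasure set $W$ in the minimization defining $C_z(Q)$, not about protecting the common links from $F$-errors --- the latter is a non-issue, since the adversary sets the outputs of $F$ to the same fixed value $\phi_F^0$ under both confusable codewords, so $F$ introduces no discrepancy on anything downstream and the feedback links carrying $F$'s influence need no $W_1,W_2$-style neutralization.
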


\section{Coding strategies}\label{section:strategies}

We consider a variety of linear and nonlinear coding strategies
useful for achieving the capacity of various example networks. We show
the insufficiency of linear network codes for achieving the capacity
in general. We also demonstrate
examples of networks with a single source and a single sink where,
unlike the equal link capacity case, it is necessary for
intermediate nodes to do coding, nonlinear error detection or error
correction in order to achieve the capacity. We then introduce a new
coding strategy, guess-and-forward.

\subsection{Error detection at intermediate nodes and insufficiency of linear codes}

Here we show that there exists a network where the capacity is 50\%
greater than the best rate that can be achieved with linear coding.
We consider the single source and the single sink network in
Fig.~\ref{fig67}, where source $s$ aims to transmit the information
to a sink node $t$. We index the links and assume the capacities of
links as shown in Fig.~\ref{fig67}. For a single adversarial link,
our upper bound from Theorem~\ref{bound} is 2.

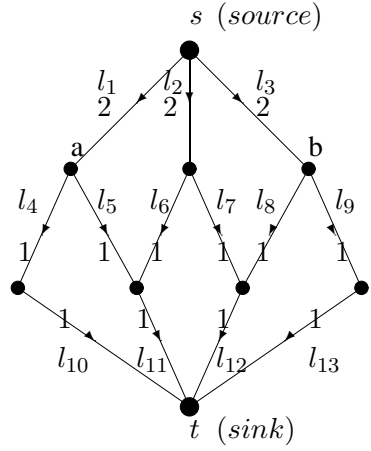
\begin{figure}
\begin{center}
\begin{picture}(200,200)(-40,-20)\centering
\put(75,150){\circle*{7}}\put(30,105){\circle*{5}}\put(75,105){\circle*{5}}\put(120,105){\circle*{5}}
\put(10,60){\circle*{5}}\put(55,60){\circle*{5}}\put(95,60){\circle*{5}}\put(140,60){\circle*{5}}\put(75,15){\circle*{7}}
\put(75,150){\line(0,-1){45}}
\put(75,150){\line(-1,-1){45}}\put(75,150){\line(1,-1){45}}
\put(75,105){\line(-4,-9){20}}\put(75,105){\line(4,-9){20}}
\put(30,105){\line(-4,-9){20}}\put(30,105){\line(5,-9){25}}\put(120,105){\line(4,-9){20}}
\put(120,105){\line(-5,-9){25}}
\put(10,60){\line(13,-9){65}}\put(55,60){\line(4,-9){20}}\put(95,60){\line(-4,-9){20}}\put(140,60){\line(-13,-9){65}}
\put(75,160){$s$}\put(75,3){$t$}\put(40,135){$l_1$}\put(65,135){$l_2$}\put(100,135){$l_3$}\put(10,90){$l_4$}\put(40,90){$l_5$}
\put(60,90){$l_6$}\put(85,90){$l_7$}\put(100,90){$l_8$}\put(130,90){$l_9$}\put(25,30){$l_{10}$}\put(55,30){$l_{11}$}
\put(85,30){$l_{12}$}\put(120,30){$l_{13}$}
\put(85,160){$(source)$}\put(85,3){$(sink)$}
\put(40,125){$2$}\put(65,125){$2$}\put(100,125){$2$}\put(10,70){$1$}\put(40,70){$1$}
\put(60,70){$1$}\put(85,70){$1$}\put(100,70){$1$}\put(130,70){$1$}\put(25,45){$1$}\put(55,45){$1$}
\put(85,45){$1$}\put(120,45){$1$} \put(30,110){a}\put(120,110){b}
\put(75,130){\vector(0,-1){0}} \put(55,130){\vector(-1,-1){0}}
\put(95,130){\vector(1,-1){0}} \put(19,80){\vector(-1,-2){0}}
\put(44,80){\vector(1,-2){0}}
\put(64,80){\vector(-1,-2){0}}\put(87,80){\vector(1,-2){0}}
\put(105,80){\vector(-1,-2){0}}\put(130,80){\vector(1,-2){0}}
\put(39,40){\vector(1,-1){0}}\put(64,40){\vector(1,-2){0}}
\put(86,40){\vector(-1,-2){0}}\put(111,40){\vector(-1,-1){0}}
\end{picture}
\end{center}
\caption{A single source and a single sink network : all links on
the top layer have capacity 2. All links on the middle and bottom
layer have capacity 1. When $z=1$, the capacity of this network is 2
while linear network codes achieve at most 4/3.}\label{fig67}
\end{figure}

\begin{lemma}\label{insuffex}
Given a network in Fig.~\ref{fig67}, for a single adversarial link,
rate 2 is asymptotically achievable with a nonlinear error detection
strategy, whereas scalar linear network coding achieves at most 4/3.
\end{lemma}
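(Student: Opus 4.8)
The statement has two parts: an achievability claim (rate $2$ asymptotically achievable by a nonlinear error-detection scheme) and a converse-type claim (scalar linear codes achieve at most $4/3$). I would treat them separately. The architecture of the network in Fig.~\ref{fig67} is a three-layer structure: source $s$ feeds three capacity-$2$ links $l_1,l_2,l_3$ into three middle nodes (call them $a$, the unlabeled middle node, and $b$); each middle node has two outgoing unit-capacity links; these six links $l_4,\dots,l_9$ feed four bottom nodes; each bottom node sends one unit-capacity link $l_{10},\dots,l_{13}$ to $t$. So $t$ receives $4$ unit symbols but the upper bound from Theorem~\ref{bound} is $2$, which tells me the code must carry a rate-$2$ message and provide enough redundancy to localize and discard the single corrupted link.

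\textbf{Achievability (rate $2$).} The plan is to send two source symbols $(X_1,X_2)\in\mathbb{F}_q^2$ and have $s$ encode on the three top links so that each of $l_1,l_2,l_3$ carries a (different) pair of $\mathbb{F}_q$-symbols that is a known linear function of $(X_1,X_2)$ — effectively a length-$3$ MDS-type encoding at the symbol level on the top layer, so that any two of the three top links determine $(X_1,X_2)$. Each middle node then splits its received $2$-symbol packet across its two outgoing unit links (a middle node just forwards its two received symbols, one per outgoing link), and each bottom node forwards to $t$. Thus $t$ receives $6$ symbols total routed through $4$ incoming links — I need to recheck the exact wiring in the figure, but the point is that the redundancy built in at the source, combined with the fan-out, lets $t$ see each of the three ``top-link pairs'' in a way that a single corrupted link can disturb only a bounded, identifiable portion of. The \emph{nonlinear} ingredient is error detection: $t$ (and possibly a middle node) checks consistency of the received symbols against the known MDS relations; since a single adversarial link corrupts symbols traceable to at most one top-link's worth of data (or one middle-outgoing link), $t$ can identify which subset of its received symbols is potentially poisoned, erase it, and decode $(X_1,X_2)$ from the remaining consistent symbols using the MDS property (any two clean top-link pairs suffice). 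The ``asymptotic'' qualifier comes from needing $q$ large enough for the MDS code to exist and for the detection test to have the right combinatorial slack; one argues that as $q\to\infty$ the scheme succeeds, giving rate $\to 2$. The main obstacle here is bookkeeping: one must verify that \emph{every} choice of single adversarial link (whether a top link, a middle-layer link, or a bottom link) leaves $t$ with enough consistent, correctly-decodable information, and that the detection step never produces a false localization that discards a clean symbol and keeps a poisoned one. This case analysis over the $13$ possible corrupted links is the crux.

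\textbf{Linear-coding upper bound ($4/3$).} For the converse on scalar linear codes, I would use Lemma~\ref{lemmapsi}: with a scalar linear code over $\mathbb{F}_q$, the output at $t$ is $\tilde\phi(w)+\theta(e)$, a linear function of the message plus a linear function of the error, and a single adversarial link $l$ contributes errors lying in a subspace $\Theta_l$ of dimension at most $r(l)$. Scalar linearity plus the network structure forces strong alignment constraints: the message images of the three top-link symbol-pairs are linear functions of $w$, and the images that reach $t$ through the middle and bottom layers cannot simultaneously (i) carry rate $2$ and (ii) keep the error subspaces $\Theta_l$ (for all $13$ choices of $l$) from overlapping the message-difference space in a way that creates confusability per \eqref{eq1}. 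Concretely, I would suppose for contradiction that a scalar linear code achieves rate $> 4/3$ (i.e., $\log|X| > \tfrac43\log q$, so at least $2$ message dimensions with the natural symbol granularity), then exhibit a single link $l$ and two messages $w\neq w'$ and errors $e,e'$ supported on $l$ with $N(e),N(e')\le 1$ satisfying \eqref{eq1} — i.e., two messages that become indistinguishable. The obstruction to rule out is that linear processing at the middle nodes cannot ``mix away'' a corrupted coordinate the way nonlinear detection can: a linear combiner that is injective on the clean input is still linear on the corrupted input, so it propagates the error coherently; this is exactly why $4/3$ rather than $2$ is the linear barrier. I expect \emph{this} direction to be the harder one to write cleanly, because it requires either an exhaustive-but-careful dimension count over all cuts and all single-link error patterns, or a slick linear-algebraic argument identifying the binding constraint; the value $4/3$ strongly suggests the binding cut involves $4$ unit links carrying what is effectively a distance-$3$ requirement, whose linear rate is $4-2\cdot 1 = 2$ at the cut but is further throttled to $4/3$ by the inability of the upstream linear code to deliver independent enough combinations to that cut. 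I would nail down which cut and which alignment gives the tight $4/3$, then present the confusable pair explicitly.
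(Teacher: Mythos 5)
Your high-level shape is right — nonlinear error detection at intermediate nodes for achievability, a linear-algebra dimension count for the scalar-linear converse — but both halves of the proposal have concrete gaps that differ from what the paper actually does.

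On achievability, the paper sends the \emph{same} pair $(P_1,P_2)$ down all three top links (repetition, not an MDS spread), and creates the diversity at the middle layer: the middle nodes emit $(aP_1+bP_2,\;P_1,\;P_1,\;P_2,\;P_2,\;cP_1+dP_2)$ on $l_4,\dots,l_9$. The nonlinear step happens at the two bottom-layer nodes that have fan-in $2$: each compares its two received copies of $P_1$ (resp.\ $P_2$) and sends a signaling bit ($1$ on match, followed by the symbol; $0$ on mismatch, nothing else). This signaling bit is exactly why the rate is $2-\tfrac{2}{n}$, i.e.\ only asymptotically $2$. Your proposal has middle nodes ``just forward'' and bottom nodes ``forward,'' which is inconsistent — the two degree-$2$ bottom nodes cannot forward both of their incoming unit symbols on a single unit output — and you leave open whether the consistency check is at $t$ or at a middle node. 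If you push the check down to $t$ only, the scheme fails: a corrupted top link poisons two of $t$'s four symbols, and a $(4,2)$ MDS code cannot handle two errors without side information. The signalling-from-the-bottom-layer mechanism, and the topological fact that if only one of the two comparisons fails the other side of the network must be clean, are the whole point of the construction; they are not optional bookkeeping.

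On the linear converse, the paper's proof picks errors on \emph{two distinct} top links, $l_1$ and $l_3$: these propagate to disjoint sink inputs $\{l_{10},l_{11}\}$ and $\{l_{12},l_{13}\}$, so the corresponding transfer matrices $M_1,M_2$ must each have rank at least $k$ (else a top-layer attack plus data processing already kills the rate), and the combined error image has dimension at least $2k$ inside $\mathbb{F}_q^{4n}$. With the $k$-dimensional message image, a nontrivial intersection is forced as soon as $3k>4n$, giving $k/n\le 4/3$. Your sketch asks for $e,e'$ both supported on the \emph{same} single link $l$, which is a strictly stronger requirement than what the adversarial model demands (the adversary may choose different links under $w$ and $w'$) and would not yield the $4/3$ bound; the disjoint-propagation observation that makes the dimension count close is exactly what is missing. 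You correctly guess that $4/3$ comes from $4$ output symbols per round against $3k$ dimensions worth of constraints, but you have not identified which two links to use and why their downstream images are disjoint, which is the content of the argument.
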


\begin{proof}
We first illustrate the nonlinear error detection strategy as
follows. Source wants to transmit two packets $(P_1,P_2)$. We send them
in $n$ channel uses, but each packet has only $n-1$ bits. We use one
bit as a signaling bit. We send $(P_1,P_2)$ down all links in the top
layer. In the middle layer, we do the following operations:

1. Send the linear combination of $P_1$ and $P_2$, $aP_1+bP_2$, down link
$l_4$.

2. Send $P_1$ down both links $l_5$ and $l_6$.

3. Send $P_2$ down both links $l_7$ and $l_8$.

4. Send a different linear combination of $P_1$ and $P_2$, $cP_1+dP_2$,
down link $l_9$.

At the bottom layer, we do the following operations:

1. Forward the received packet on link $l_{10}$.

2. Send a 1 followed by $P_1$ on link $l_{11}$ if the two copies of
$P_1$ match, send a 0 otherwise.

3. Send a 1 followed by $P_2$ on link $l_{12}$ if the two copies of
$P_2$ match, send a 0 otherwise.

4. Forward the received packet on link $l_{13}$.

We can show that above nonlinear error detection strategy allows a
sink node to decode ($P_1,P_2$). Suppose that $(a,b)$ and $(c,d)$ are
independent. Then coding vectors on any two links on the bottom
layer are independent and they satisfy with MDS (maximum distance
separable) properties. If nothing was sent down both $l_{11}$ and
$l_{12}$, the decoder can recover $(P_1,P_2)$ from the information
received on links $l_{10}$ and $l_{13}$. If nothing was sent down
only on $l_{11}$, then the outputs of $l_{12}$ and $l_{13}$ should
not be corrupted and the decoder can recover $(P_1,P_2)$. Similarly, the
decoder can decode correctly when nothing was sent down only on
$l_{12}$. If all the links in the bottom layer received symbols,
there is at most one erroneous link on the bottom layer, which has
MDS code. Thus we can achieve rate $2-\frac{2}{n}$ with this error
detection strategy.

Now we show that scalar linear network code can achieve at most rate
4/3. Suppose that we want to achieve the linear coding capacity $k/n$ by
transmitting $k$ symbols reliably by using a scalar linear network
code $\phi$ over the finite field $\mathbb{F}_q$ in $n$ rounds. To show the insufficiency of
linear coding for achieving this capacity, from (\ref{eq1}), it is
sufficient to prove that there exist pairs $(w,e)$ and $(w',e')$ for
linear network code $\phi$ such that
\[(\psi_l(w,e):l\in \Gamma_+(t))=(\psi_l(w',e'):l\in \Gamma_+(t)),\]
and $N(e), N(e')\leq 1$. Since the above equation is equivalent to
\[(\tilde{\phi}_l(w-w'):l\in \Gamma_+(t))=(\theta_l(-e+e'):l\in \Gamma_+(t)),\]
by linearity, it suffices to find a source vector $x\in X$ and error vector $e''$ such that
$N(e'')\leq 2$ and
\begin{equation}\label{eq2}
(\tilde{\phi}_l(x):l\in \Gamma_+(t))=(\theta_l(e''):l\in
\Gamma_+(t)),
\end{equation}
where $X=\mathbb{F}_q^k$ is the source alphabet. We will show
that there exists $(x,e'')$ satisfying the above equation when errors
occur on the links $l_1$ and $l_3$ in error vector $e''$.

Let $G_t$ denote the $4n\times k$  transfer matrix
between $s$ and $t$ in the $n$ rounds. Its rows are the global
coding vectors assigned on $l_{10}$, $l_{11}$, $l_{12}$, and
$l_{13}$ in the $n$ rounds. Note that to transmit $k$ symbols reliably, $G_t$ should have rank $k$.

Let $M_1$ and $M_2$ denote the transfer matrices between $l_1$ and $t$,
and between $l_3$ and $t$ during $n$ rounds respectively. To
transmit $k$ symbols reliably, both $M_1$ and $M_2$
should have rank at least $k$, i.e., $ rank(M_1)\geq k$ and
$rank(M_2)\geq k$. Otherwise, when the adversarial link is on the top
layer, the maximum achievable rate is at most
$\min\{rank(M_1),rank(M_2)\}$ from the data processing inequality,
which gives a contradiction.

Let $e_1$ and $e_2$ denote the errors occurring on links $l_1$ and
$l_3$, respectively. Error $e_1$ propagates to $l_{10}$ and
$l_{11}$, and error $e_2$ propagates to $l_{12}$ and $l_{13}$.

From (2), we have the following set of equations
\[ G_t x =\left( \begin{array}{cc}
M_1 & 0  \\
0 & M_2 \end{array} \right)(e_1,e_2)^{\tau} = M\cdot e''.\]
%

Since $rank(M_1)\geq k$ and $rank(M_2)\geq
k$, $rank(M)\geq 2k$. Then $A=\{G_tx:x\in X\}$ and $B=\{Me'':e''\in
\mathbb{F}^{4n}_q\}$ are both linear subspaces of $\mathbb{F}^{4n}_q$, and
$\dim(A) = k$ and $\dim(B)\geq 2k$.

Let $\{x_1,..,x_{k}\}$ denote a basis of $X$. Then
$\{G_tx_1,..,G_tx_{k}\}$ is a basis of $A$. Similarly, since
$rank(M)\geq 2k$, there exist $2k$ vectors $\{y_1,..,y_{2k}\}$ such
that $\{My_1,..,My_{2k}\}$ is a subset of basis of $B$.

If $3k>4n$, since both $A$ and $B$ are linear subspaces of
$\mathbb{F}^{4n}_q$, there exists $(a_1,..,a_{k},b_1,..,b_{2k})\neq
(0,...,0)$ such that
\[\sum_{i=1}^{k}a_i(G_tx_i)+\sum_{j=1}^{2k}b_i(My_i)=0.\] If
$(a_1,..,a_{k})= (0,...,0)$ or $(b_1,..,b_{2k})=(0,...,0)$, then it
contradicts the linear independence of basis. Thus,
$(a_1,..,a_{k})\neq (0,...,0)$ and $(b_1,..,b_{2k})\neq 0$. Then,
\begin{eqnarray*}
\lefteqn{\sum_{i=1}^{k}a_i(G_tx_i)+\sum_{j=1}^{2k}b_i(My_i)}\\
& = & \sum_{i=1}^{k}G_t(a_ix_i)+\sum_{j=1}^{2k}M(b_iy_i)\\
& = &\sum_{i=1}^{k}G_t(a_ix_i)-\sum_{j=1}^{2k}M(-b_iy_i)\\
& = & 0.\\
\end{eqnarray*}
Therefore, we have found nonzero $x=\sum_{i=1}^{k}a_ix_i$ and
$(e_1,e_2)^{\tau} = -\sum_{j=1}^{2k}(-b_jy_j)$ such that $G_tx =
Me''$.

It completes the proof.

\end{proof}

The following corollary shows that vector linear
codes\footnote{A vector linear code is a
linear code operating over a vector of symbols.} (see e.g.~\cite{yeung2008information}) also achieve at
most rate 4/3.

\begin{corollary}
For the network in Fig.~\ref{fig67} with a single adversarial link,
a vector linear network code can achieve at most rate 4/3.
\end{corollary}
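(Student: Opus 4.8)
The plan is to derive the corollary from Lemma~\ref{insuffex} by a rounds-unrolling reduction, so that no argument beyond the dimension count already carried out there is needed. A vector linear network code over $\mathbb{F}_q$ (with vector length $m$, say) assigns to every node and edge an $\mathbb{F}_q$-linear map; hence the composite source-to-link and link-error-to-sink transfers are $\mathbb{F}_q$-linear, so Lemma~\ref{lemmapsi} and the confusability characterization~(\ref{eq1}) still hold with all dimensions and ranks taken over $\mathbb{F}_q$. Moreover, running such a code for $n$ rounds on the network of Fig.~\ref{fig67} is, symbol-for-symbol over $\mathbb{F}_q$, the same as running a scalar $\mathbb{F}_q$-linear network code for $N=nm$ rounds: a capacity-$r$ link carries $rm$ symbols of $\mathbb{F}_q$ per round, i.e.\ $rnm$ symbols over all $n$ rounds, exactly matching $nm$ rounds of a scalar code, and the vector code's $\mathbb{F}_q$-linear node maps are a special case of the block-linear transfers allowed in the scalar setting. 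If the vector code has vector-rate $k$ (vector-symbols per round), its source alphabet is an $\mathbb{F}_q$-space of dimension $km$.

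With this identification every step of the proof of Lemma~\ref{insuffex} transfers verbatim with $n$ replaced by $nm$ and $k$ by $km$. First, if the single adversarial link lies on the top layer, all information reaching $t$ factors through the $\mathbb{F}_q$-linear transfer matrices $M_1$ (from $l_1$) and $M_2$ (from $l_3$), so reliable transmission of $km$ symbols of $\mathbb{F}_q$ forces $\mathrm{rank}_{\mathbb{F}_q}(M_1)\ge km$ and $\mathrm{rank}_{\mathbb{F}_q}(M_2)\ge km$ by the data processing inequality; since $e_1$ propagates only to $l_{10},l_{11}$ and $e_2$ only to $l_{12},l_{13}$, the block-diagonal $M=\mathrm{diag}(M_1,M_2)$ has $\mathrm{rank}_{\mathbb{F}_q}(M)\ge 2km$. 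Then $A=\{G_t x\}$ and $B=\{Me''\}$ are $\mathbb{F}_q$-subspaces of $\mathbb{F}_q^{4nm}$ with $\dim_{\mathbb{F}_q}A=km$ and $\dim_{\mathbb{F}_q}B\ge 2km$. If the vector-rate satisfies $k/n>4/3$, then $\dim_{\mathbb{F}_q}A+\dim_{\mathbb{F}_q}B\ge 3km>4nm=\dim_{\mathbb{F}_q}\mathbb{F}_q^{4nm}$, so $A\cap B\ne\{0\}$, which, exactly as in Lemma~\ref{insuffex}, yields a nonzero source vector and an error supported on $\{l_1,l_3\}$ producing two inputs that are confusable at $t$ under a single adversarial link. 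This contradicts reliability, so any reliably achievable vector-rate obeys $k/n\le 4/3$.

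The only delicate point -- and the closest thing to an obstacle -- is the bookkeeping in the reduction: one must check that gluing together the per-node and per-edge $\mathbb{F}_q$-linear maps of the vector code really does produce a legitimate scalar $\mathbb{F}_q$-linear network code on the $nm$-round unrolling (respecting the network topology and causality), so that the matrices $G_t$, $M_1$, $M_2$ and the rank and data-processing arguments all apply unchanged; in particular one should not be tempted to view a vector linear code over $\mathbb{F}_q$ as a scalar linear code over $\mathbb{F}_{q^m}$, since its node maps need not be $\mathbb{F}_{q^m}$-linear. Once the $\mathbb{F}_q$-linear unrolling is verified, the corollary is immediate, because the impossibility threshold $3k>4n$ is scale-invariant under $(k,n)\mapsto(km,nm)$, and hence $4/3$ remains the best vector-rate just as it is the best scalar rate.
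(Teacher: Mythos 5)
Your proposal is correct and follows essentially the same route as the paper: the paper's own proof of the corollary simply re-instantiates the dimension-counting argument of Lemma~\ref{insuffex} with the transfer matrix $G_t$ now of size $4nm\times km$ and $M_1,M_2$ of rank at least $km$, deducing a confusable pair whenever $3km>4nm$, i.e.\ $k/n>4/3$. Your ``$nm$-round unrolling'' framing and the caveat about not treating the code as $\mathbb{F}_{q^m}$-linear are a more explicit packaging of the same observation --- that a vector linear code's transfer matrices are $\mathbb{F}_q$-linear with the scaled dimensions, and the block-diagonal structure of $M$ (coming from the network topology, not the code) survives --- but the substance of the argument is identical.
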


\begin{proof}
For a vector linear code, the outgoing edges of
each node carries vectors of alphabet symbols which are function of
the vectors carried on the incoming edges to the node.
We consider a vector linear code that groups $m$ symbols into a
vector. As in lemma~\ref{insuffex}, we define the $(4n)m\times km$
transfer matrix $G_t$ between $s$ and $t$. Transfer matrices $M_1$
and $M_2$ are also defined in the same way, and $rank(M_1)\geq km$
and $rank(M_2)\geq km$. As in the proof of lemma~\ref{insuffex},
when $k>\frac{4n}{3}$, we can show that there exists vectors
$(x,e_1,e_2)$ $(x\neq 0)$ satisfying
\[G_t x = (M_1\cdot e_1, M_2\cdot e_2).\]
\end{proof}


\subsection{Error correction at intermediate nodes}
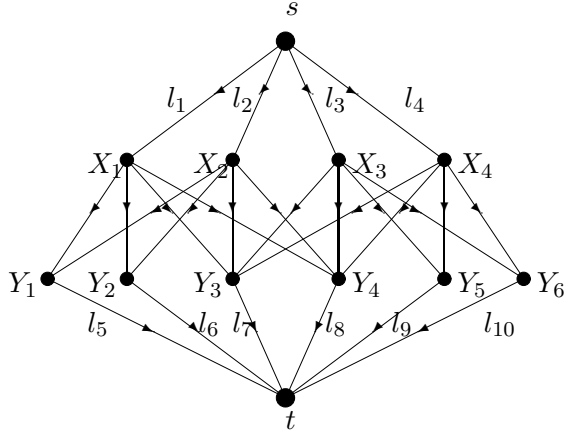
\begin{figure}
\begin{center}
\begin{picture}(200,200)(-40,-20)\centering
\put(75,150){\circle*{7}}\put(15,105){\circle*{5}}\put(55,105){\circle*{5}}\put(95,105){\circle*{5}}\put(135,105){\circle*{5}}
\put(-15,60){\circle*{5}}\put(15,60){\circle*{5}}\put(55,60){\circle*{5}}\put(95,60){\circle*{5}}\put(135,60){\circle*{5}}\put(165,60){\circle*{5}}
\put(75,15){\circle*{7}}
\put(75,150){\line(-4,-3){60}}\put(75,150){\line(-4,-9){20}}\put(75,150){\line(4,-9){20}}\put(75,150){\line(4,-3){60}}
\put(15,105){\line(-2,-3){30}}\put(15,105){\line(0,-1){45}}\put(15,105){\line(8,-9){40}}\put(15,105){\line(16,-9){80}}
\put(55,105){\line(-8,-9){40}}\put(55,105){\line(0,-1){45}}\put(55,105){\line(8,-9){40}}\put(55,105){\line(-14,-9){70}}
\put(95,105){\line(-8,-9){40}}\put(95,105){\line(0,-1){45}}\put(95,105){\line(8,-9){40}}\put(95,105){\line(14,-9){70}}
\put(135,105){\line(-8,-9){40}}\put(135,105){\line(0,-1){45}}\put(135,105){\line(2,-3){30}}\put(135,105){\line(-16,-9){80}}
\put(-15,60){\line(2,-1){90}}\put(15,60){\line(4,-3){60}}\put(55,60){\line(4,-9){20}}
\put(95,60){\line(-4,-9){20}}\put(135,60){\line(-4,-3){60}}\put(165,60){\line(-2,-1){90}}
\put(75,160){$s$}\put(75,3){$t$}
\put(30,125){$l_1$}\put(55,125){$l_2$}\put(90,125){$l_3$}\put(120,125){$l_4$}
\put(0,40){$l_5$}\put(42,40){$l_6$}\put(55,40){$l_7$}\put(90,40){$l_8$}\put(115,40){$l_9$}\put(150,40){$l_{10}$}
\put(-30,55){$Y_1$}\put(0,55){$Y_2$}\put(40,55){$Y_3$}\put(100,55){$Y_4$}\put(140,55){$Y_5$}\put(170,55){$Y_6$}
\put(0,100){$X_1$}\put(40,100){$X_2$}\put(100,100){$X_3$}\put(140,100){$X_4$}
\put(48,130){\vector(-2,-1){0}}\put(65,130){\vector(-1,-1){0}}
\put(102,130){\vector(2,-1){0}}\put(85,130){\vector(1,-1){0}}
\put(1,85){\vector(-1,-1){0}}\put(15,85){\vector(0,-1){0}}
\put(32,85){\vector(1,-1){0}}\put(50,85){\vector(2,-1){0}}
\put(24,85){\vector(-2,-1){0}}\put(38,85){\vector(-1,-1){0}}
\put(55,85){\vector(0,-1){0}}\put(73,85){\vector(1,-1){0}}
\put(126,85){\vector(2,-1){0}}\put(112,85){\vector(1,-1){0}}
\put(95,85){\vector(0,-1){0}}\put(77,85){\vector(-1,-1){0}}
\put(149,85){\vector(1,-1){0}}\put(135,85){\vector(0,-1){0}}
\put(118,85){\vector(-1,-1){0}}\put(100,85){\vector(-2,-1){0}}
\put(25,40){\vector(2,-1){0}}\put(42,40){\vector(1,-1){0}}\put(64,40){\vector(1,-2){0}}
\put(125,40){\vector(-2,-1){0}}\put(108,40){\vector(-1,-1){0}}\put(86,40){\vector(-1,-2){0}}
\end{picture}
\end{center}
\caption{A single source and a single sink network : The link
capacity in this network is as follows:
$r(l_1)=r(l_2)=r(l_3)=r(l_4)=4, r(l_5)=...=r(l_{10})=2$. All the
links in the middle layer have capacity $1$. Error correction at
$Y_3$ and $Y_4$ is necessary for achieving the
capacity.}\label{fig69}
\end{figure}
We next give an example in which error correction at intermediate nodes
is used for achieving the capacity. The intuition behind our
approach is that error correction at intermediate nodes can reduce
the error propagation to the links in the bottom layer and MDS code
assigned on the bottom layer gives the correct output. We consider a
single source-destination network in Fig.~\ref{fig69}. For a single
adversarial link, upper bound from Theorem \ref{bound} is 8. From
Sec. IV-A, the upper bound on the linear coding capacity is
$\sum_{i=5}^{10}r(l_i)/(m+1)=6$.

\begin{lemma}
Given the network in Fig.~\ref{fig69}, for a single adversarial
link, rate 8 is achievable using error correction at intermediate
nodes.
\end{lemma}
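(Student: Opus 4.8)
The plan is to exhibit an explicit code achieving rate exactly $8$: the source and sink use a $[12,8]$ MDS code (e.g.\ Reed--Solomon over $\mathbb{F}_{16}$), the nodes $Y_3$ and $Y_4$ perform $[4,2]$ MDS error correction, and every other node merely forwards. I would work over $\mathbb{F}_q$ with $q$ a prime power large enough that both codes exist (taking $q=16$ suffices), which is legitimate since the link alphabet size is arbitrary; with $X=\mathbb{F}_q^8$ this gives transmission rate $\log|X|/\log q=8$.

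First I would fix the routing so that, absent errors, $t$ receives exactly a $[12,8]$ codeword. Encode $\mathbf u\in\mathbb{F}_q^8$ as $\mathbf c=(c_1,\dots,c_{12})$ and assign the coordinate pairs $(c_1,c_2),(c_3,c_4),(c_9,c_{10}),(c_{11},c_{12})$ to $Y_1,Y_2,Y_5,Y_6$ to be forwarded verbatim on their capacity-$2$ output links, and the pairs $(c_5,c_6),(c_7,c_8)$ to $Y_3,Y_4$. Since $Y_3$ and $Y_4$ each have in-degree $4$, the source routes to each of them a $[4,2]$ MDS encoding of its payload pair, one coded symbol per $X_i$ over the capacity-$1$ middle links; the remaining two outgoing symbols of each $X_i$ supply one coordinate apiece to the two forwarding $Y$-nodes that $X_i$ feeds (the companion $X$-node supplying the other coordinate of each such pair). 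A short bookkeeping check shows this assigns exactly $4$ symbols to each $X_i$, matching $r(l_i)=4$, so $X_i$ simply forwards what it receives on $l_i$. The node operations are thus: $X_1,\dots,X_4$ forward; $Y_1,Y_2,Y_5,Y_6$ forward; $Y_3$ and $Y_4$ run the $[4,2]$ MDS decoder on their four inputs and forward the two decoded symbols.

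The core of the argument is a case analysis on the location of the single adversarial link, establishing two facts: (i) each of $Y_3,Y_4$ sees at most one corrupted input symbol, so, the $[4,2]$ code having minimum distance $3$, its decoder returns the true payload pair; and (ii) the word received by $t$ differs from $\mathbf c$ in at most two coordinates. Claim (i) holds because the four inputs of $Y_3$ arrive on four distinct links $X_i\!\to\!Y_3$ whose only common upstream links are the distinct $l_i$, so one adversarial link taints at most one of them (similarly for $Y_4$). For (ii): a corrupted bottom link $l_j$ taints only the two coordinates carried by that $Y_j$; a corrupted middle link into $\{Y_1,Y_2,Y_5,Y_6\}$ taints a single coordinate, and one into $\{Y_3,Y_4\}$ taints none after correction; and a corrupted top link $l_i$ injects an error into $X_i$'s output that reaches only its four children, of which the two in $\{Y_3,Y_4\}$ scrub it out by (i) while the two forwarding children each pass on a single tainted coordinate, two in all. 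Since the $[12,8]$ MDS code has minimum distance $5$, $t$ corrects any two symbol errors and recovers $\mathbf u$; hence rate $8$ is achievable, matching the upper bound $8$ of Theorem~\ref{bound}.

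I expect the main obstacle to be making the top-link case of (ii) airtight: one must use the precise bipartite structure of Fig.~\ref{fig69} --- that $Y_3,Y_4$ are exactly the in-degree-$4$ nodes and every $X_i$ feeds both of them together with exactly two forwarding $Y$-nodes --- to conclude that a top-link error survives in only two of the six super-symbols delivered to $t$. This is also the feature showing that error correction at $Y_3$ and $Y_4$ is genuinely needed, and genuinely nonlinear: if $Y_3,Y_4$ merely forwarded linear combinations of their inputs, a top-link error would corrupt four of the six super-symbols at $t$, which no length-$6$ code of the required rate can correct --- consistent with the stated linear-coding upper bound of $6$.
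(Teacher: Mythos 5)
Your proposal is correct and follows essentially the same approach as the paper: a $(12,8)$ MDS outer code on the six capacity-$2$ bottom links, a $(4,2)$ MDS inner code with decode-and-re-encode at the in-degree-$4$ nodes $Y_3,Y_4$, pure forwarding everywhere else, and the same case analysis showing that a single corrupted link injects at most one error into each of $Y_3,Y_4$ (hence is scrubbed there) and at most two symbol errors into the outer codeword. The paper gives the argument more tersely — it writes the inner code explicitly as $(e,f,e+f,e+2f)$ and assigns codewords upward from the bottom layer rather than downward from the source — while you make the routing and symbol accounting fully explicit and add the (correct, though informal) closing remark contrasting with the linear-coding bound of $6$; neither difference changes the substance of the argument.
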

\begin{proof}
Without loss of generality, all nodes  except $Y_3$ and $Y_4$ forward their received
information. We first assign a $(12,8)$ MDS
code $(a,b,\ldots,l)$ on the bottom layer links and apply a (4,2) MDS
code at each decision node, e.g., we assign $(e,f,e+f,e+2f)$ and
$(g,h,g+h,g+2h)$ on incoming links to $Y_3$ and $Y_4$ respectively.
Then we can assign codewords on all links in the network since all
nodes except $Y_3$ and $Y_4$ are forwarding nodes. If the adversarial link is
on the middle or bottom layer, at most two errors are propagated to
the sink node and the MDS code assigned on the bottom layer gives the
correct output. If the adversarial link is on the top layer, at most two
errors are propagated to the sink node through forwarding nodes
$Y_1$, $Y_2$, $Y_5$, and $Y_6$. Since at most one error is incoming
to $Y_3$ and $Y_4$ respectively, the (4,2) MDS code applied at each
decision node gives error-free output $(e,f)$ and $(g,h)$.
Therefore, when the adversarial link is on the top layer, at most two
errors are propagated to the sink and the (12,8) MDS code returns the
correct output.

\end{proof}
\begin{algorithm}
\caption{Algorithm for error correction at intermediate nodes}
\label{algeccint}

$M \leftarrow N$

$CS \leftarrow\emptyset$

$\mathcal{G}'=\mathcal{G}$

\hspace{4mm} While {$|M|\geq 1$} and $\exists i$ s.t.
$d(\mathcal{G}',i)>0$, $i\in M$

\hspace{8mm} $M'=M-I(\mathcal{G}',M)$

\hspace{8mm} $CS' = CS\cup S_{I(\mathcal{G}',M)}$

\hspace{8mm} $\mathcal{G}' = \mathcal{G}'-S_{I(\mathcal{G}',M)}$

\hspace{8mm} $M=M'$, $CS=CS'$.

\hspace{4mm} endwhile

\hspace{4mm} return $CS$

\end{algorithm}

One possible generalization of the above intermediate node error
correction is as follows. Given an acyclic network
$\mathcal{G}=(\mathcal{V},\mathcal{E})$, we use
$c_{\mathcal{G}}(s,i)$ and $c_{\mathcal{G}}(i,t)$ denote the min-cut
between the source $s$ and $i$, and the min-cut between $i$ and the
sink $t$ in $\mathcal{G}$, respectively. We assume that there is a
fixed set of nodes $N\subset \mathcal{V}$ such that
$c_{\mathcal{G}}(s,i)\geq c_{\mathcal{G}}(i,t)$ for $\forall i\in N$
and error correction can be applied only at nodes in $N$. For
instance, in Fig.~\ref{fig69}, $\mathcal{N}=\{Y_3,Y_4\}$. Let
$d(\mathcal{G},i)=c_{\mathcal{G}}(s,i)- c_{\mathcal{G}}(i,t)$ denote
the difference between the max-flow from $s$ to $i$ and the max-flow
from $i$ to $t$.

The selection function $I(\mathcal{G},N)$ chooses a node $i\in N$ on
$\mathcal{G}$ which maximizes $d(\mathcal{G},i)$. Precisely,
\[I(\mathcal{G},N) = \arg\max_{i\in N}\{d(\mathcal{G},i)\}.\]

Here is the outline of our greedy algorithm with error correction at
intermediate nodes. Given an acyclic network $\mathcal{G}$ and the
set of error correction nodes $N$, we choose a node
$i=I(\mathcal{G},N)$ that maximizes $d(\mathcal{G},i)$ on
$\mathcal{G}$. Since $c_{\mathcal{G}}(s,i)$ is the max-flow from $s$
to $i$, we can find $c_{\mathcal{G}}(s,i)$ paths so that each path
carries one symbol from $s$ to $i$. Likewise, we also find
$c_{\mathcal{G}}(i,t)$ paths from $i$ to $t$. Let
$S_{I(\mathcal{G},M)}$ denote the subgraph composed of above paths.
We assign a $(c_{\mathcal{G}}(s,i),c_{\mathcal{G}}(i,t))$ MDS code
on $S_{I(\mathcal{G},M)}$. We remove $S_{I(\mathcal{G}',M)}$ from
$\mathcal{G}$ and add it to $CS$ which denotes the union of
subgraphs for which codewords are already assigned. We also remove
$i$ from $N$. We repeat the above procedure until $N=\emptyset$ or
there is no node $i\in N$ such that $d(\mathcal{G},i)> 0$. A precise
description of the algorithm is shown in algorithm~\ref{algeccint}.
Since max-flow can be computed in polynomial-time, this algorithm is
a polynomial-time greedy algorithm.

\subsection{Coding at intermediate nodes}
Here we consider an example of a single-source and single-sink
network, shown in Fig.~\ref{fig68}, whose capacity is achieved using linear coding at
intermediate nodes rather than nonlinear error correction or
detection. For a single adversarial link, the
upper bound obtained from Theorem~\ref{bound} is 4.

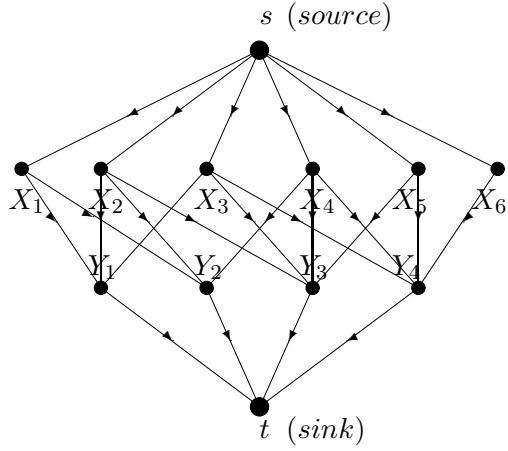
\begin{figure}
\begin{center}
\begin{picture}(200,200)(-40,-20)\centering
\put(75,150){\circle*{7}}\put(-15,105){\circle*{5}}\put(15,105){\circle*{5}}\put(55,105){\circle*{5}}
\put(95,105){\circle*{5}}\put(135,105){\circle*{5}}\put(165,105){\circle*{5}}
\put(15,60){\circle*{5}}\put(55,60){\circle*{5}}\put(95,60){\circle*{5}}\put(135,60){\circle*{5}}
\put(75,15){\circle*{7}}
\put(75,150){\line(-2,-1){90}}\put(75,150){\line(-4,-3){60}}\put(75,150){\line(-4,-9){20}}
\put(75,150){\line(4,-9){20}}\put(75,150){\line(4,-3){60}}\put(75,150){\line(2,-1){90}}
\put(-15,105){\line(2,-3){30}}\put(15,105){\line(0,-1){45}}\put(55,105){\line(-8,-9){40}}
\put(-15,105){\line(14,-9){70}}\put(15,105){\line(8,-9){40}}\put(95,105){\line(-8,-9){40}}
\put(15,105){\line(16,-9){80}}\put(55,105){\line(8,-9){40}}\put(95,105){\line(0,-1){45}}\put(135,105){\line(-8,-9){40}}
\put(55,105){\line(16,-9){80}}\put(95,105){\line(8,-9){40}}\put(135,105){\line(0,-1){45}}\put(165,105){\line(-2,-3){30}}
\put(15,60){\line(4,-3){60}}\put(55,60){\line(4,-9){20}}
\put(95,60){\line(-4,-9){20}}\put(135,60){\line(-4,-3){60}}
\put(75,160){$s$}\put(75,3){$t$}\put(85,160){$(source)$}\put(85,3){$(sink)$}
\put(-20,90){$X_1$}\put(10,90){$X_2$}\put(50,90){$X_3$}\put(90,90){$X_4$}\put(125,90){$X_5$}\put(155,90){$X_6$}
\put(10,65){$Y_1$}\put(50,65){$Y_2$}\put(90,65){$Y_3$}\put(125,65){$Y_4$}
\put(25,125){\vector(-2,-1){0}}\put(42,125){\vector(-1,-1){0}}\put(64,125){\vector(-1,-2){0}}
\put(86,125){\vector(1,-2){0}}\put(125,125){\vector(2,-1){0}}\put(108,125){\vector(1,-1){0}}
\put(-2,85){\vector(1,-1){0}}\put(12,87){\vector(2,-1){0}}
\put(42,40){\vector(1,-1){0}}\put(64,40){\vector(1,-2){0}}
\put(86,40){\vector(-1,-2){0}}\put(108,40){\vector(-2,-1){0}}
\put(73,85){\vector(1,-1){0}}\put(91,85){\vector(2,-1){0}}
\put(15,85){\vector(0,-1){0}}\put(33,85){\vector(1,-1){0}}\put(51,85){\vector(2,-1){0}}
\put(77,85){\vector(-1,-1){0}}\put(95,85){\vector(0,-1){0}}\put(113,85){\vector(1,-1){0}}
\put(117,85){\vector(-1,-1){0}}\put(135,85){\vector(0,-1){0}}\put(152,85){\vector(-1,-1){0}}
\end{picture}
\end{center}
\caption{A single source and a single sink network : all links on
the top or middle layer have capacity one. All links on the bottom
layer have capacity 2. In this network, coding at intermediate nodes
but not error-detection and correction is necessary for achieving
the capacity.}\label{fig68}
\end{figure}

\begin{lemma}
Given the network in Fig.~\ref{fig68}, for a single adversarial
link, coding at intermediate nodes achieves the rate 4.
\end{lemma}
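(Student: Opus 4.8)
The plan is to exhibit an explicit linear network code in which only $Y_1,\dots,Y_4$ do nontrivial (linear) mixing, and to reduce correctness to the confusability test~(\ref{eq1}) via the linearity of $\psi_l$ from Lemma~\ref{lemmapsi}. The guiding intuition is that the sink receives $8$ symbols over the four capacity-$2$ links $Y_j\to t$, so the four symbol-pairs emitted by $Y_1,\dots,Y_4$ should behave like the four blocks of a $[4,2]$ MDS code over $\mathbb{F}_q^2$: any two uncorrupted blocks recover the four message symbols, and a single corrupted block is correctable. A bottom-layer error on some $Y_j\to t$, or a middle-layer error on some $X_i\to Y_j$, perturbs only block $j$, so those cases are immediate; the obstacle is a top-layer error on $s\to X_i$, which propagates to \emph{every} $Y_j$ fed by $X_i$ (up to three blocks), beyond what a $[4,2]$ block code can absorb. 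The reason coding at $Y_1,\dots,Y_4$ (rather than forwarding) saves the day is that $X_i$ carries a single field symbol $a_i$, so a top-layer error is a single scalar $e$, and after the linear maps at the $Y_j$'s the corruption at $t$ is $e\cdot g_i$ for a \emph{fixed} vector $g_i\in\mathbb{F}_q^{8}$ --- a rank-one perturbation spread over several blocks, not several independent block errors.

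Concretely, I would have $s$ apply a generic (e.g., Reed--Solomon) $4\times 6$ encoding, putting symbol $a_i$ on $s\to X_i$; have each $X_i$ forward $a_i$ to its out-neighbours; and have each $Y_j$ transmit two generic $\mathbb{F}_q$-linear combinations of its incoming symbols. Let $\hat G$ be the $6\times 8$ matrix whose $i$-th row $g_i$ encodes how $a_i$ reaches $t$, let $V=\mathrm{rowspace}(\hat G)$ (which a short check shows is $6$-dimensional, using that from Fig.~\ref{fig68} the node $Y_1$ is fed by $X_1,X_2,X_3$, $Y_2$ by $X_1,X_2,X_4$, $Y_3$ by $X_2,X_3,X_4,X_5$, and $Y_4$ by $X_3,X_4,X_5,X_6$), and let $G$ be the resulting $4\times 8$ source-to-sink transfer matrix, so $\mathrm{rowspace}(G)\subseteq V$ is $4$-dimensional. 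By Lemma~\ref{lemmapsi} the received vector for message $w$ and error $e$ is $wG+\theta(e)$, so by~(\ref{eq1}) the code corrects one adversarial link iff
\[
\mathrm{rowspace}(G)\cap\big(\mathrm{colspace}(M_{l})+\mathrm{colspace}(M_{l'})\big)=\{0\}\quad\text{for every pair }l,l',
\]
where $M_l$ is the transfer matrix from link $l$ to $t$: for $l$ a bottom link this is the $2$-dimensional coordinate block of $Y_j$, for $l$ a middle link a line inside that block, and for $l=s\to X_i$ the line $\langle g_i\rangle\subseteq V$.

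The heart of the proof is a dimension count tailored to the incidence pattern of Fig.~\ref{fig68}: for every pair $l,l'$, the subspace $\big(\mathrm{colspace}(M_{l})+\mathrm{colspace}(M_{l'})\big)\cap V$ has dimension at most $2$. Granting this, a generic $4$-dimensional $\mathrm{rowspace}(G)\subseteq V$ avoids each of the finitely many such $\le 2$-dimensional subspaces once $q$ is large, which (with an exhaustive decoder over messages and single faulty links) yields rate $\log|X|/\log q=4$. I expect the genuinely delicate case to be two bottom-layer links on distinct blocks $j,j'$: there $\mathrm{colspace}(M_l)+\mathrm{colspace}(M_{l'})$ is the $4$-dimensional coordinate subspace $\mathrm{block}_j\oplus\mathrm{block}_{j'}$, and a $4$-dimensional subspace of the $6$-dimensional $V$ cannot avoid a $4$-dimensional subspace of $V$ --- so the argument hinges on showing that the generic intermediate coding forces $V\cap(\mathrm{block}_j\oplus\mathrm{block}_{j'})$ down to dimension $2$ (equivalently that $V$ meets each single block in dimension at most $1$), and on checking this simultaneously for all six block pairs using the specific out-neighbourhoods above. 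The top$+$top, top$+$middle, and middle$+$middle cases are easier, since each top-layer error contributes only one dimension and $\langle g_i\rangle\subseteq V$.
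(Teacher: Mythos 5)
Your approach is genuinely different from the paper's, and with care it does work, but it is more laborious and contains one incorrect side remark. The paper also uses a generic linear network code at the intermediate nodes, but then argues correctness modularly: when the single error is on the top or middle layer, it splits each capacity-$2$ bottom link into two unit-capacity links and invokes the Cai--Yeung equal-link-capacity result (min-cut $6$ with $z=1$ gives rate $6-2=4$), and when the error is on the bottom layer it invokes Theorem 19.32 of Yeung's book to conclude that the generic code makes the four bottom-layer blocks pairwise information-complete (min-cut $4$ from $s$ to any $\{Y_i,Y_j\}$), so the single corrupted block is corrected by the resulting MDS structure. Your proposal instead translates the confusability condition~(\ref{eq1}) into a direct subspace-avoidance statement about $\mathrm{rowspace}(G)\subseteq V$ and the error images $\mathrm{rowspace}(M_l)$, and verifies a dimension bound pair-by-pair. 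This is a self-contained, elementary route that avoids the black-box use of the prior results, at the cost of requiring one to check the intersection-dimension bound for every pair of link types. I checked the claimed bounds against the incidence pattern of Fig.~\ref{fig68} and they do hold (the critical bottom--bottom pairs all give a $2$-dimensional intersection with $V$, because for each pair of bottom blocks the vanishing conditions on the other two blocks impose four generically independent constraints on the six coefficients of $\hat G$'s rows).

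One genuine flaw in a side remark: the parenthetical assertion that the bound $\dim\bigl(V\cap(\mathrm{block}_j\oplus\mathrm{block}_{j'})\bigr)\le 2$ is ``equivalently that $V$ meets each single block in dimension at most $1$'' is false. Knowing $\dim(V\cap\mathrm{block}_j)\le 1$ and $\dim(V\cap\mathrm{block}_{j'})\le 1$ does not bound $\dim\bigl(V\cap(\mathrm{block}_j\oplus\mathrm{block}_{j'})\bigr)$, since $V$ could contain vectors with nonzero components in both blocks that lie in neither block individually. (Indeed $V\cap\mathrm{block}_4=\langle g_6\rangle$ is $1$-dimensional while $V\cap\mathrm{block}_1=V\cap\mathrm{block}_2=V\cap\mathrm{block}_3=0$, yet every block-pair intersection is exactly $2$-dimensional.) You need to verify the pairwise claim directly by counting constraints, as you do implicitly for the two-block case; the single-block condition is neither necessary nor sufficient. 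Fixing that remark, and writing out the top+bottom and middle+bottom cases you deferred (watch the case $i=6$, $j=4$, where $g_6$ lies entirely inside $\mathrm{block}_4$ so $V\cap\mathrm{block}_4\ne 0$), would give a complete alternative proof.
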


\begin{proof}
To achieve rate 4, any four links on the top layer should carry 4
independent packets. Otherwise, when the adversarial link is on the
top layer, source cannot transmit 4 packets reliably. Then the data
processing inequality gives a contradiction. Similarly, any two
links on the bottom layer should carry 4 independent packets. Since
$Y_i$ is connected to at most four different nodes among
$(X_1,..,X_6)$ for $\forall 1\leq i\leq 4$ and all links in the
middle layer have capacity 1, each of $Y_1$, $Y_2$, $Y_3$, and $Y_4$
receives all independent information. Thus we cannot apply simple
error-detection or correction at $Y_1$, $Y_2$, $Y_3$, and $Y_4$.
Suppose that only forwarding strategy is used on this network. Then
we show that rate 4 is not achievable. There are six symbols on the
top layer. Since we use only forwarding, these are forwarded to the
bottom layer. Since bottom layer links have total capacity 8, there
are at least two same symbols on the bottom layer links. This
contradicts that any two links on the bottom layer should carry four
independent information to achieve rate 4. Therefore forwarding is
insufficient for achieving the rate 4 in this network.

Now we show that a generic linear network code, where intermediate
nodes do coding, achieves rate 4. From \cite[Ch
19]{yeung2008information}, a  generic network code can be constructed
with high probability by randomly choosing the global encoding
kernels provided that the base field is sufficiently large.
So a random linear network code is
generic with high probability when $q$ is very large. If the adversarial link is on the top or middle layer, then each capacity 2 on the
bottom layer is equivalent to two unit capacity links. Then all
links in the network have capacity one and this problem is reduced
to the equal link capacities problem. From \cite{cai2006network},
rate $6-2\times 1=4$ is achievable. From ~\cite[Theorem
19.32]{yeung2008information}, since the min-cut between $s$ and
$(Y_i,Y_j)$ is at least 4 for $\forall 1\leq i\neq j\leq 4$, in a
generic network code the global encoding kernels on any two links on
the bottom layer are linearly independent and they satisfy the MDS
property. Thus an error on the last layer can be corrected.
\end{proof}

\subsection{Guess-and-forward}\label{section:guess}
Here we introduce a new achievable coding strategy,
guess-and-forward. In this strategy, a node receives some redundant
information from multiple paths.  If this information is
inconsistent, the node guesses which of its upstream links
controlled by the adversary and forwards its guess to the sink.  The
sink receives additional information allowing it to test the
hypothesis of the guessing node and correctly identify one or more
adversarial links. Altogether a finite number of guesses are
forwarded, so the average overhead of forwarding guesses goes to
zero asymptotically with the total amount of information
communicated.

To provide intuition, we first describe a simple version of the
guess-and-forward scheme on a particular four-node network example
shown in Fig.~\ref{fig610}.  From Theorem \ref{bound}, when $z=2$,
the capacity is upper bounded by 7. We will show that rate $7$ is
achievable in this network using the guess-and-forward scheme.

In this scheme, node $A$  forwards its received information to node
$B$ and on multiple links to the sink node $t$. This information is
received reliably by the sink, but not necessarily by $B$ since the
single feedback link from $A$ to $B$ may be adversarial. Node $B$
also receives reliably from the source node $s$ the information that
was sent from $s$ to $A$, and compares this with the information
forwarded by $A$. A mismatch indicates that either the feedback link
$(A,B)$ is adversarial or that one or more links from $s$ to $A$ are
adversarial. $B$ sends this  guess to the sink along with the
information received on link $(A,B)$, which allows the sink to
distinguish between  the two possibilities.
Note that decoding at the sink relies on knowledge of the network and code, while the guessing node simply has to compare the feedback with the corresponding information forwarded from the source node.

\begin{figure}
\begin{center}
\begin{picture}(200,200)(-40,-20)\centering
\put(75,150){\circle{7}}\put(75,15){\circle{7}}\put(10,82){\circle{7}}\put(140,82){\circle{7}}
\qbezier(75,150)(50,100)(10,82)
\qbezier(75,150)(20,160)(10,82)\qbezier(75,150)(30,130)(10,82)
\qbezier(75,150)(100,100)(140,82)\qbezier(75,150)(80,100)(140,82)
\qbezier(75,150)(130,160)(140,82)\qbezier(75,150)(120,130)(140,82)\put(75,15){\line(65,67){65}}
\qbezier(10,82)(50,65)(75,15)\qbezier(10,82)(70,65)(75,15)
\qbezier(10,82)(20,5)(75,15)\qbezier(10,82)(30,35)(75,15)
\qbezier(140,82)(100,65)(75,15)\qbezier(140,82)(80,65)(75,15)
\qbezier(140,82)(130,5)(75,15)\qbezier(140,82)(120,35)(75,15)
\put(10,82){\line(1,0){130}}
\put(75,160){$s$}\put(75,3){$t$}\put(0,90){$A$} \put(140,90){$B$}
\put(58,149){\vector(-2,-1){0}}\put(58,141){\vector(-2,-1){0}}
\put(65,133){\vector(-1,-1){0}}
\put(97,149){\vector(2,-1){0}}\put(85,133){\vector(1,-2){0}}
\put(91,141){\vector(1,-1){0}}\put(80,133){\vector(1,-3){0}}
\put(32,74){\vector(2,-1){0}}\put(13,66){\vector(1,-2){0}}
\put(29,72){\vector(2,-1){0}}\put(18,67){\vector(1,-2){0}}
\put(50,82){\vector(1,0){0}}
\put(110,70){\vector(-2,-1){0}}\put(113,66){\vector(-1,-1){0}}\put(120,62){\vector(-1,-1){0}}
\put(126,58){\vector(-1,-1){0}}\put(132,54){\vector(-1,-1){0}}
\put(33,128){2} \put(41,120){2}\put(51,110){2}\put(70,90){6}
\put(107,48){1}\put(113,42){1}\put(123,32){1}\put(101,54){1}\put(95,60){1}
\put(10,32){$\infty$}\put(130,132){$\infty$} \put(24,137){$l_1$}
\put(37,111){$l_2$}\put(47,101){$l_3$}
\end{picture}
\end{center}
\caption{Four node acyclic networks: this network consists of $3$
links of capacity 2 from $s$ to $A$, $5$ links of capacity 1 from
$B$ to $t$, $1$ links of capacity 6 from $A$ to $B$. Given the cut
$(\{s,B\},\{A,t\})$, unbounded reliable communication is allowed
from source $s$ to its neighbor $B$ on one side of the cut and from
node $A$ to sink $t$ on the other side of the cut, respectively.
}\label{fig610}
\end{figure}
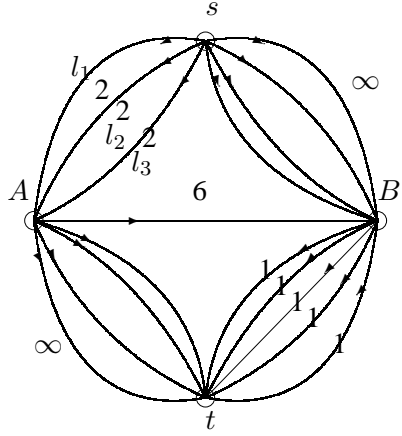

To be precise, let $W$ and $\hat{W}$ denote the symbols sent by $s$
and received by $A$ respectively on the links from $s$ to $A$. $W$
is sent reliably from $s$ to $B$, and $\hat{W}$ is sent reliably
from $A$ to $t$.

In each round, $s$ and $B$ together send a $(11,7)$ MDS code to $A$
and $t$ across the cut $cut(\{s,B\},\{A,t\})$. Since the feedback
link has capacity 6, $A$ sends its codeword symbols $\hat{W}$ to $B$
along feedback link $l$. For feedback link $l$, let $P_l$ denote the
information received by $B$ on $l$. $B$ compares $P_l$ with $W$
which is received from $s$. If $P_l\neq W$, then $B$ obtains a guess
$X_l$ identifying the locations of adversarial links between $s$ and
$A$ assuming $P_l$ is reliable. $B$ sends the claim $(X_l,P_l)$ to
$t$ along each link between $B$ and $t$ using repetition code. If
$P_l=W$, $B$ does not send any claim. $B$ sends claims only when it
guesses at least one adversarial forward link which is different
from those guessed in previous rounds. Thus $B$ sends claims in 3
rounds, which is equal to the number of links between $s$ and $A$.
Note that on a result the number of channel uses is not constant in
each round. Since the total number of channel uses required to send
claims is finite, the overhead amortized over a large number of
rounds goes to zero asymptotically with the number of rounds.

\begin{lemma}\label{guessforwardex}
Given the network in Fig.~\ref{fig610}, rate 7 is achievable.
\end{lemma}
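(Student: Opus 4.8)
The plan is to instantiate the guess-and-forward scheme just described and to prove that the sink always decodes correctly, so that rate $7$ is (asymptotically) achievable; the matching converse is Theorem~\ref{bound}, which bounds the capacity of this network (Fig.~\ref{fig610}) by $7$. First I would fix a finite field $\mathbb{F}_q$ with $q\ge 11$ and an $(11,7)$ MDS code $\mathcal{C}$ over $\mathbb{F}_q$, identifying the eleven coordinates of $\mathcal{C}$ with the eleven unit slots of the forward cut $cut(\{s,B\},\{A,t\})$: six slots on the three capacity-$2$ links $s\to A$ (carrying a word $W$) and five slots on the five capacity-$1$ links $B\to t$. In each round $7$ source symbols are encoded into a codeword $c\in\mathcal{C}$; the source uses the infinite-capacity reliable link $s\to B$ to hand $B$ those $7$ symbols (hence $c$ and $W$), sends $W$ on the links $s\to A$, and lets $B$ send the remaining five symbols of $c$ on the links $B\to t$. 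Node $A$ forwards the (possibly corrupted) word $\hat W$ it receives on $s\to A$ both on the capacity-$6$ feedback link $A\to B$ and, reliably, to $t$.

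Next I would make precise the guessing and $t$'s bookkeeping. Letting $P$ be what $B$ receives on $A\to B$, $B$ compares $P$ to the $W$ it holds; on a mismatch it forms the guess $X=\{\text{links }s\to A\text{ where }P\ne W\}$ and, whenever $X$ contains a link not reported in an earlier round, repeats the claim $(X,P)$ on all five links $B\to t$ using $O(1)$ extra channel uses that round (delimited by a signaling bit as in Lemma~\ref{insuffex}); since there are only three links $s\to A$, at most three such claim rounds occur. The sink keeps a cumulative set $\mathcal{A}$ of links it has identified as adversarial, initially empty. Each round it reads $\hat W$ reliably from $A$, recovers any claim by majority over the five links $B\to t$ (correct since at most $z=2$ of them are adversarial), and updates $\mathcal{A}$: if a claim $(X,P)$ arrived with $P=\hat W$, the feedback was faithful this round, so $\hat W\ne W$ precisely on links that actually delivered corrupted symbols to $A$ this round, and $t$ adjoins $X$ to $\mathcal{A}$; if $P\ne\hat W$, no clean feedback link could have produced this, so $t$ adjoins the link $A\to B$ to $\mathcal{A}$ and discards $X$. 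Finally $t$ erases the codeword slots on links in $\mathcal{A}$ (two per $s\to A$ link; the feedback link carries no slot), reassembles the surviving coordinates of $c$ (the unerased entries of $\hat W$ together with the five symbols from $B$) and MDS-decodes in $\mathcal{C}$.

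The core of the proof, and the step I expect to be the main obstacle, is showing that this decoding never fails. Fix the true adversarial set $T$ with $|T|\le 2$. I would first argue $\mathcal{A}\subseteq T$ always — a link enters $\mathcal{A}$ only after it demonstrably injected an error — so $|\mathcal{A}|\le 2$; and, since $t$ updates $\mathcal{A}$ before decoding the current round and the feedback link cannot be altered when $A\to B\notin T$, I would record the observation that whenever $A\to B\notin T$ every $s\to A$ link corrupted in a given round already lies in $\mathcal{A}$ at decoding time. Then I would bound $2e+f\le 4$, where $f$ is the number of erased coordinates ($=2$ times the number of $s\to A$ links in $\mathcal{A}$) and $e$ the number of corrupted surviving coordinates, by splitting on $j:=|\mathcal{A}\cap\{s\to A\text{ links}\}|\in\{0,1,2\}$: if $j=2$ then $|T|=2$ forces $e=0$; if $j=1$ the one remaining link of $T$ is the feedback link ($0$ corrupted slots) or a $B\to t$ link ($\le 1$ corrupted slot), since an $s\to A$ link of $T$ corrupted this round would have been adjoined to $\mathcal{A}$, so $e\le 1$; and if $j=0$ then either $A\to B\in T$, leaving one other link contributing $e\le 2$, or $A\to B\notin T$, in which case by the observation no $s\to A$ link of $T$ is corrupted this round, so $e\le|T\cap\{B\to t\}|\le 2$. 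In every case $2e+f\le 4$, which is the minimum distance of $\mathcal{C}$ minus one, so $t$ uniquely recovers $c$ and the round's $7$ source symbols. I would flag that these inequalities are tight and that one must check that the adversary's temporal freedom (errors injected in different rounds, $\mathcal{A}$ growing over time) does not break the argument — which it does not, because the adversarial \emph{set} is fixed and $B$ permanently remembers every link it has ever flagged.

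Finally I would close with the rate bookkeeping: over $N$ rounds the sink recovers $7N$ source symbols while the at most three claims contribute only $O(1)$ channel uses in total, so the achieved rate tends to $7$ as $N\to\infty$; together with the upper bound of Theorem~\ref{bound} this establishes the lemma.
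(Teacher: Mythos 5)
Your proposal is correct, and the guess-and-forward protocol you describe (the $(11,7)$ MDS code sent by $s$ and $B$, node $A$ relaying $\hat W$ on the feedback link and reliably to $t$, $B$ forming a guess $X$ on mismatch and repeating $(X,P)$ over the five $B\to t$ links) matches the paper's scheme. The difference lies in the decoding analysis at the sink. The paper's sink first handles the ``claim received'' case by comparing $P$ against $\hat W$ to identify either the feedback link or the links in $X$, and then, in rounds where no claim is heard, falls back on a bespoke \emph{consistency-check} decoder: it examines all seven-link subsets of the eight forward links and all six-link subsets containing the three $s\to A$ links, checks whether each yields a consistent rate-$7$ decoding, and argues correctness via a case split (2-a, 2-b, 2-c) using the intersection bound $\sum_{l\in L_1\cap L_2} r(l)\ge 7$. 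Your sink instead maintains a cumulative identified set $\mathcal{A}\subseteq T$, erases the corresponding coordinates, and invokes the standard MDS error-and-erasure criterion $2e+f\le d-1=4$, with the key observation that when the feedback link is clean, every $s\to A$ link corrupted in the current round is already in $\mathcal{A}$ by decoding time. Your route is more uniform --- a single decoding rule with one clean inequality replaces the paper's ad hoc enumeration of consistent subsets --- and makes the role of the minimum distance explicit, while the paper's argument more closely mirrors the structure of the general Lemma~\ref{suffbu}/\ref{insuffbu} proofs that follow. Both are valid; the only thing I would tighten in your write-up is the inductive justification of the observation (that a link ever guessed by $B$, even in a round where no claim was forwarded because all its links were stale, must already have been added to $\mathcal{A}$ when the feedback is clean), which you gesture at but do not spell out.
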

\begin{proof}
Since there are $5\geq 2z+1$ links from $B$ to $t$, any claim
$(X_l,P_l)$ can be sent reliably from $B$ to $t$ using a repetition
code.

Case 1) the sink receives some claim $(X_l,P_l)$.

The sink compares $P_l$ with $\hat{W}$ which is received from $A$
reliably. If $P_l\neq \hat{W}$, then the feedback link transmitting
$P_l$ is adversarial and the sink ignores it. Otherwise, $P_l$ is
reliable. Since the claim is sent, the sink knows that
$P_l=\hat{W}\neq W$ and that guess $X_l$ is correct. Thus the sink
identifies the forward links in $X_l$ as adversarial, which are
subsequently ignored.

Case 2) no claims are sent.

In this case, we show that the correct output is achieved. No claims are sent
only if either\begin{itemize}\item $B$ receives  $W$ on the feedback link, or\item the guessed set $X_l$  only contains
forward links that have been guessed by $l$ in previous rounds.
From these previous rounds, by case 1, the sink
has already identified as adversarial either $l$ or the guessed
forward links, and is concerned only with the remaining network.
\end{itemize}
Either way,   there
are the following two possibilities for the overall remaining network (recall that $A$ transmits $\hat{W}$ to $B$):

(I) all links between $s$ and $A$ and the feedback link are
uncorrupted.

(II) some links between $s$ and $A$ are corrupted and feedback link
is corrupted such that feedback link transmits error-free output.

In possibility (I), the feedback link transmits $W$ to $B$. In (II),
$A$ sends $\hat{W}\neq W$ but the feedback link changes it to $W$ so
that $B$ does not send any claims. We first consider all sets of 7
forward links on the cut. There are ${{8}\choose {7}} = 8$ such sets
of links. Each set has total capacity at least 9. For each such set
$L$, the sink checks the consistency of the output of rate 7
obtained from $L$. We also consider all sets of $6$ links such that
each set includes all $3$ links between $s$ and $A$ and any $3$
links between $B$ and $t$. There are ${5}\choose {3}$ such sets. The
sink also checks the consistency of the output of rate $7$ for each
set.

Case 2 - a) there is no set of $7$ links giving consistent output.

In this case, there are more than $1$ forward adversarial link on
the cut. Since $z=2$, all two adversarial links are forward links
and thus possibility (II) cannot hold.  Then possibility (I) is true
and there are at most two forward adversarial links with capacity 1
on the cut. We obtain the correct answer from our (11,7) MDS code.

Case 2 - b) there is no set of $6$ links that includes all $3$ links
from $s$ to $A$ and gives consistent output.

In this case, possibility (II) is true. Then there is at most one
forward adversarial link on the cut. We obtain the correct answer
from our (11,7) MDS code.

Case 2 - c) There exist both a $7$-link set $L_1$ giving consistent
output and a $6$-link set $L_2$ that includes all $3$ links between
$s$ and $A$ and give consistent output.

It is clear that $\sum_{l_1\in L_1\cap L_2}r(l_1)\geq 7$ for any
$L_1$ and $L_2$. Thus $L_1$ and $L_2$ give the same consistent
output. Since at least one of (I) and (II) is true, this output is
correct.

From cases 1-2, since $z=2$, $B$ needs to send claims at most $2$
times to obtain the correct output.
\end{proof}

\section{Guess-and-forward on some families of networks}\label{section:networks}
In this section, we employ the guess-and-forward strategy on a
sequence of increasingly complex network families. The first is a two-node
network with multiple feedback links. The second is a four-node
acyclic network. The third is a family of `zig-zag' networks. In the
first two cases, the guess-and-forward strategy achieves the
capacity. For zig-zag networks, we derive the achievable rate of
guess-and-forward strategy and present conditions under which this
bound is tight.

\subsection{Two-node network} We achieve the
error-correction capacity of the two-node network with multiple
feedback links by using guess-and-forward strategy. A two-node
network shown in Fig.~\ref{fig611} is composed of $n$ forward links
with arbitrary capacity and $m$ feedback links with arbitrary
capacity. In Lemma~\ref{feed}, we first characterize the capacity of
this network when each forward link has capacity 1. We extend this
result to Theorem~\ref{feeduneq} when each forward link has
arbitrary capacity.






\begin{lemma}\label{feed}
Consider the two-node network shown in Fig.~\ref{fig611} such that
each forward link has capacity 1. Let $C$ denote the
error-correction capacity with $z$ adversarial links. If $n\leq 2z$,
$C=0$. Otherwise, $C=\min\{n-z, n-2(z-m)\}$.
\end{lemma}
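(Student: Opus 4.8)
The plan is to prove matching upper and lower bounds. The converse is essentially already in hand: Lemma~\ref{twonodeup} (with all forward capacities equal to $1$, so $D_p = p$) gives the upper bound $0$ when $n \leq 2z$ and $\min\{n-z,\,n-2(z-m)^+\}$ when $n > 2z$; since the capacity is nonnegative and $m \geq 0$, this reads as $\min\{n-z,\,n-2(z-m)\}$ whenever $n > 2z$ (noting that if $m \geq z$ the second term is at least $n-z$ if we let it, but in general we should only claim the bound $\min\{n-z, n-2(z-m)\}$ when $m \le z$, and just $n-z$ otherwise; I would state the case split to match the lemma's $(z-m)^+$). So the real content is achievability, and I would focus the proof there, citing Lemma~\ref{twonodeup} for the converse.

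For achievability I would use the guess-and-forward strategy in its cleanest incarnation, since this network is the prototype the scheme was designed for. Fix a target rate $R = \min\{n-z,\,n-2(z-m)\}$ (the $m \le z$ case; the $m > z$ case is handled by ignoring all but $z$ feedback links). In each round, the source encodes $R$ information symbols into an $(n, R)$ MDS codeword and sends one symbol on each of the $n$ forward links; call the error-free codeword $W$ and the (possibly corrupted) symbols arriving at the sink $\hat W$. The sink, seeing $\hat W$, must correct up to $z$ forward errors with only $n - z \ge R$ worth of uncorrupted symbols when $z - m$ of the errors' locations are known from feedback. The feedback links carry, from the sink back to the source, a reproduction of $\hat W$ (we have $m$ feedback links; since their capacities are arbitrary but positive, over enough rounds the sink can convey $\hat W$ and its guesses to the source at amortized zero overhead, exactly as in Lemma~\ref{guessforwardex}). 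Wait — the roles are reversed relative to Fig.~\ref{fig610}: here $s$ is the guessing-information recipient. Let me restate: the sink $t$ plays the role of node $B$ in the earlier example. The sink receives $\hat W$ on the forward links. The source knows $W$. But the source is the one transmitting, so instead the natural move is: the sink compares the $n$ received symbols for internal consistency as an $(n,R)$ MDS code; if up to $\lfloor (n-R)/2 \rfloor$ errors it decodes directly. When that is not enough, the sink forwards — on the feedback links — its received vector $\hat W$ to the source; the source, knowing the true $W$, identifies which forward links were corrupted and on the next round re-encodes using only the remaining good links, which the sink can then decode. Since there are at most $z$ adversarial links total, and each feedback "round trip" pins down at least one adversarial forward link (or reveals that a feedback link is itself lying, which also removes a suspect), after at most $z$ such corrections the sink decodes. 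The number of correction rounds is at most $z$, a constant, so the amortized rate is $R$.

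The main obstacle — and the step I would spend the most care on — is the interaction between corrupted feedback links and corrupted forward links when $m \le z$. An adversary controlling $f$ forward links and $m' \le \min(m, z-f)$ feedback links can lie on the feedback to make the source "correct" the wrong forward links, or suppress the sink's report entirely. The key combinatorial fact to nail down is: with $n - 2(z-m)$ guaranteed-good forward symbols available to the sink once the $z - m$ "feedback-protected" errors are localized, and with the sink able to cross-check every feedback link's claimed value of $\hat W$ against what it actually sent, the adversary cannot simultaneously (i) corrupt enough forward links to overwhelm the MDS code and (ii) corrupt enough feedback links to prevent localization — the budget $z$ does not stretch that far. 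I would make this precise by a case analysis on how the adversary splits its $z$ links between forward and feedback, showing in each case that either the MDS distance $n - R + 1$ suffices directly or the guess-and-forward exchange succeeds, mirroring Cases 1, 2(a)–(c) of the proof of Lemma~\ref{guessforwardex}. The extension from unit forward capacities to arbitrary capacities (Theorem~\ref{feeduneq}) is then a separate, later argument and I would not attempt it here.
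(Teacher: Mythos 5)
Your converse is fine: Lemma~\ref{twonodeup} with $D_p=p$ gives exactly the claimed upper bound. The issue is in the achievability sketch, and it is not just a matter of leaving details to be filled in later; the protocol you describe would not work as stated.

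The core gap is the step ``the source identifies which forward links were corrupted and on the next round re-encodes using only the remaining good links, which the sink can then decode.'' First, the source does not ``know'' which links are bad: it receives $m$ possibly conflicting reports of $\hat W$, one per feedback link, each yielding a guess that is correct only if that particular feedback link is honest. Second, even if the source settles on a guess, silently re-encoding on a subset of links is useless because the sink has no way to know which subset the source chose, so it cannot decode. What the paper actually does is have the source send each of the $m$ claims, \emph{together with the raw feedback value it received on that link}, to the sink via a repetition code over the $n$ forward links --- reliable because $n>2z$. The sink then validates a claim by comparing the relayed feedback value against what it itself transmitted; a match certifies that feedback link as honest and hence certifies the claim. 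This verification loop is the engine of the argument, and it is absent from your proposal. You also replace the paper's case split at $m\le z/2$ versus $m> z/2$ (which governs whether the MDS code has rate $n-2(z-m)$ or $n-z$, i.e.\ which term of the min is binding) with a split at $m\le z$ versus $m>z$, which does not match the structure of the bound. Finally, the bound on the number of correction rounds needs care: the paper argues that after at most $2m$ inconsistency events the sink has either unmasked all feedback links as adversarial (and then decodes the $(n,n-2(z-m))$ code directly) or has pinned enough forward adversaries to fall back to the equal-capacity result of~\cite{cai2006network}; your ``at most $z$ rounds'' heuristic skips this accounting. You were right to flag the forward/feedback budget trade-off as the crux, but pointing to the case analysis of Lemma~\ref{guessforwardex} is the wrong template here --- the two-node proof has a different structure, built around source-side relaying of claims and sink-side verification.
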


\begin{proof}

From lemma~\ref{twonodeup} in Section~\ref{section:upper}, upper
bound of the capacity is $\min\{n-z, n-2(z-m)\}$ when $n>2z$, and 0
otherwise. So it is sufficient to prove the achievability of this
upper bound by applying our guess-and-forward strategy when $n>2z$.

\begin{figure}
\begin{center}
\begin{picture}(200,200)(-40,-20)\centering
\put(75,150){\circle{7}}\put(75,15){\circle{5}}\qbezier(75,150)(-20,82)(75,15)
\qbezier(75,150)(20,82)(75,15)\qbezier(75,150)(130,82)(75,15)\qbezier(75,150)(170,82)(75,15)
\put(30,82){\circle{1}}\put(36,82){\circle{1}}\put(42,82){\circle{1}}
\put(108,82){\circle{1}}\put(114,82){\circle{1}}\put(120,82){\circle{1}}
\put(36,72){$n$}\put(110,72){$m$}
\put(75,160){$s$}\put(75,3){$t$}
\put(28,80){\vector(0,-1){0}}\put(48,80){\vector(0,-1){0}}
\put(103,80){\vector(0,1){0}}\put(123,80){\vector(0,1){0}}
\end{picture}
\end{center}
\caption{ two-node network $\mathcal{G}$ with $n$ forward links and
$m$ feedback links.}\label{fig611}
\end{figure}
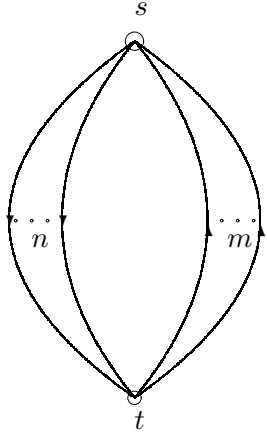

Case 1) $m\leq \frac{z}{2}$.

Step 1) In each round, the source $s$ sends an $(n,n-2(z-m))$
MDS code on the $n$ forward links. Since $m\leq z/2$, $n-2(z-m)\leq
n-z$. Thus for any received $n$ signals, there exist $n-2(z-m)$
uncorrupted signals. If all $n \choose {n-2(z-m)}$ subsets of
received symbols decode to the same message, this message is
correct. Otherwise, the sink sends the $n$ received signals to the
source $s$ on each feedback link using a repetition code.

Step 2) Based on the received information on each feedback link, the
source tries to identify the bad forward links. Thus, for each
feedback link, the source obtains a claim regarding the location of
forward adversarial links which is correct if that feedback link is
not adversarial.

Step 3) This step consists of $m$ phases, each composed of a finite
number of rounds. In the $i$th phase, the source sends the claim
obtained from the $i$th feedback link together with what it received
on that feedback link to the sink. This information can be sent
reliably to the sink using a repetition code because $n-2z>0$. If
what the source received matches what the sink sent, the $i$th
feedback link was not corrupted and the associated claim is correct.
Using this claim, the sink can decode the message as well as
identify at least one of the forward adversarial links. If all $m$
feedback links were corrupted, the sink knows that there are only
$z-m$ forward adversarial links and since we are using a
$(n,n-2(z-m))$ MDS code the message is correctly decodable at the
sink.

Note that we only need to use the above scheme during the first $2m$
times the sink sees inconsistency at step 1. The reason is that from
steps 1-3, the sink either figures out that all feedback links are
adversarial or identifies at least one forward adversarial link. If
all feedback links are bad, they are ignored and the $(n,n-2(z-m))$
MDS code gives us the correct output. If there are $k\leq 2m$
forward adversarial links, after the first $k$ times the sink sees
inconsistency at step 1, all forward adversarial links are
identified subsequently and no further inconsistency is seen among
the remaining forward links. Otherwise, when there are more than
$2m$ adversarial links, the sink finds $2m$ forward adversarial
links and ignores them. Then from \cite{cai2006network}, the rate
$n-2m-2(z-2m)=n-2(z-m)$ can be achieved using the remaining forward
links only.

Case 2) $m> \frac{z}{2}$.

In each round, the source $s$ sends an $(n,n-z)$ MDS code on the
$n$ forward links. For any received $n$ signals, there exist $n-z$
uncorrupted signals. If all $n \choose {n-z}$ subsets of received
symbols decode to the same message, this message is correct. As in
the case 2-a, from steps 2-3, the sink either concludes that all
feedback links are adversarial or identifies at least one forward
adversarial link. If all $m$ feedback links were corrupted, there
are only $z-m<z/2$ bad forward links and subsequently only the
forward links are used to achieve the rate $n-z$. Otherwise, the
above scheme is used at most $z$ times inconsistency is seen at step
1, after which the sink has identified all bad forward links and the
remaining forward links suffice to achieve rate $n-z$.
\end{proof}

Now we generalize above result to the general case when each forward
link has also arbitrary capacity.

\begin{theorem}\label{feeduneq}
Consider the two-node network shown in Fig.~\ref{fig611} with
arbitrary link capacities. Let $D_p$ denote the sum of the $p$
smallest forward link capacities. The error-correction capacity is
\[ C =\left\{\begin{array}{ll}
            0 & \mbox{if $n\leq 2z$}\\
           {\min\{D_{n-z},D_{n-2(z-m)^+}\}} &\mbox{if $n>2z$}\\
            \end{array}
             \right.\]
\end{theorem}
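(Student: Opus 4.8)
The plan is to prove only the achievability direction, since the matching converse is exactly Lemma~\ref{twonodeup}; so it remains to show, for $n>2z$, that rate $\min\{D_{n-z},D_{n-2(z-m)^+}\}$ is asymptotically achievable (rate $0$ being trivial when $n\le 2z$). The key observation is that the guess-and-forward scheme proving Lemma~\ref{feed} is purely combinatorial in the triple $(n,z,m)$: the link capacities enter only through (i)~the rate of the outer code on the forward links and (ii)~the property that the source message can be recovered from the coded symbols on any sufficiently large subset of forward links. So the first step is to supply a replacement for the $(n,p)$ MDS code used there. Put $p=n-2(z-m)$ if $m\le z/2$ and $p=n-z$ if $m>z/2$; then $p\ge 1$ (as $n>2z$), and checking the two cases against the definition of $D$ gives $D_p=\min\{D_{n-z},D_{n-2(z-m)^+}\}$. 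Over a field $\mathbb{F}_q$ with $q>N:=\sum_{l}r(l)$, take a Reed--Solomon--type code with message alphabet $X=\mathbb{F}_q^{D_p}$ and a $D_p\times N$ Vandermonde generator matrix with pairwise distinct evaluation points, where $r(l)$ of the $N$ columns are assigned to forward link $l$. Any $D_p$ columns are linearly independent, and every size-$p$ subset $F$ of the forward links contributes $\sum_{l\in F}r(l)\ge D_p$ columns, so the symbols on any $p$ forward links determine the message. Hence the code has link-wise minimum distance at least $n-p+1$: it tolerates up to $n-p$ link erasures, corrects up to $\lfloor(n-p)/2\rfloor$ link errors, and remains decodable from the good links after any $n-p$ links are deleted.

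The second step is to run the guess-and-forward protocol of Lemma~\ref{feed} essentially verbatim, using this dimension-$D_p$ code on the forward links in place of the MDS code, amortized over $T$ rounds. In each round the source transmits the codeword on the $n$ forward links; the sink decodes from every size-$p$ subset of forward links, and if all these decodings agree it outputs that message, which is correct because at most $z\le n-p$ forward links are adversarial, so some size-$p$ subset avoids them. Otherwise the sink relays all received forward symbols back to the source on each feedback link using a repetition code (reliable on an honest feedback link since $n>2z$); the source forms, per feedback link, a guess of the adversarial forward links valid whenever that feedback link is honest; the source relays each $(\text{guess},\text{received feedback string})$ to the sink on the forward links; and the sink compares the relayed string with what it actually sent to decide whether that feedback link was honest and whether to trust the guess. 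Exactly as in Lemma~\ref{feed}, after a number of detected inconsistencies bounded by $2m$ (if $m\le z/2$) or by $z$ (if $m>z/2$), the sink has either certified that all $m$ feedback links are adversarial---in which case at most $z-m$ forward links carry errors, and $z-m\le\lfloor(n-p)/2\rfloor$ in both cases, so the code corrects them---or it has identified and deleted all adversarial forward links, leaving at least $n-z\ge p$ good forward links whose symbols determine the message. Since only finitely many rounds (a number depending on $n,z,m$ but not on $T$) carry feedback relays or guesses, and these are of bounded size, the amortized overhead is $o(1)$ as $T\to\infty$; with Lemma~\ref{twonodeup} this yields $C=D_p=\min\{D_{n-z},D_{n-2(z-m)^+}\}$ for $n>2z$, and $C=0$ for $n\le 2z$.

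The main obstacle is the faithful transcription of the Lemma~\ref{feed} argument: one must verify that every place that proof invokes MDS structure uses only the ``any $p$ forward links determine the message'' property above (and its corollaries on deleting links and correcting a bounded number of link errors), rather than the coincidence of block length, dimension, and capacities special to the unit-capacity case. The points to recheck are: that the code restricted to the $n-2m$ surviving forward links (after removing $2m$ identified adversarial links in the $m\le z/2$ case) still decodes from any of its size-$\bigl((n-2m)-2(z-2m)\bigr)$ subsets, which holds because $(n-2m)-2(z-2m)=n-2(z-m)=p\le n-2m$ and any size-$p$ subset of the survivors is a size-$p$ subset of the original forward links; that the final appeal to~\cite{cai2006network} for the case of more than $2m$ forward adversarial links goes through with ``$p$ forward links'' in place of ``$p$ unit-capacity links''; and that the count of feedback-relay rounds is bounded independently of the block length, so the overhead amortizes to zero. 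None of these requires a new idea, but each must be stated carefully.
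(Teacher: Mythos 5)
Your proposal is correct and takes the same route as the paper's proof, which is only two sentences long: converse from Lemma~\ref{twonodeup}, and achievability by reusing the Lemma~\ref{feed} guess-and-forward protocol with the unit-capacity $(n,p)$ MDS code replaced by a $(D,D_p)$ MDS code on the aggregate forward capacity $D$. What you add, usefully, is the explicit identification of the code property that Lemma~\ref{feed} actually relies on (any $p$ forward links carry $\ge D_p$ symbols and hence determine the message), the Vandermonde construction realizing it, and the verification that the arithmetic in each branch of Lemma~\ref{feed} (residual rate after removing $2m$ identified links, the bound $z-m\le\lfloor(n-p)/2\rfloor$, the amortization) goes through unchanged when ``links'' replace ``unit-capacity symbols.'' One small but real point in your favor: you assign $p=n-2(z-m)$ to the case $m\le z/2$ and $p=n-z$ to the case $m>z/2$, which is what Lemma~\ref{feed} does and what makes $D_p=\min\{D_{n-z},D_{n-2(z-m)^+}\}$; the paper's proof of Theorem~\ref{feeduneq} states these two MDS codes in the opposite order of cases, which appears to be a transcription slip (as written, the $m\le z/2$ branch would attempt rate $D_{n-z}>D_{n-2(z-m)}$ and would not be decodable). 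Your version is the one that is actually consistent with Lemma~\ref{feed}.
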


\begin{proof}
From lemma~\ref{twonodeup}, achievable capacity is 0 when $n\leq
2z$. When $n>2z$, we use $D$ to denote the sum of all $n$ forward
link capacities. For achievability, when $m\leq z/2$, the source
sends $(D,D_{n-z})$ MDS code to the sink. When $m>z/2$, the source
sends $(D,D_{n-2(z-m)^+})$ MDS code to the sink. By using the same
strategy as in the proof of Lemma \ref{feed}, we can achieve the
rate $C$.

\end{proof}

\subsection{Four-node acyclic network}
\begin{figure}
\begin{center}
\begin{picture}(200,200)(-40,-20)\centering
\put(75,150){\circle{7}}\put(75,15){\circle{7}}\put(10,82){\circle{7}}\put(140,82){\circle{7}}
\qbezier(75,150)(50,100)(10,82)\qbezier(75,150)(70,100)(10,82)
\qbezier(75,150)(20,160)(10,82)\qbezier(75,150)(30,130)(10,82)
\qbezier(75,150)(100,100)(140,82)\qbezier(75,150)(80,100)(140,82)
\qbezier(75,150)(130,160)(140,82)\qbezier(75,150)(120,130)(140,82)
\qbezier(10,82)(50,65)(75,15)\qbezier(10,82)(70,65)(75,15)
\qbezier(10,82)(20,5)(75,15)\qbezier(10,82)(30,35)(75,15)
\qbezier(140,82)(100,65)(75,15)\qbezier(140,82)(80,65)(75,15)
\qbezier(140,82)(130,5)(75,15)\qbezier(140,82)(120,35)(75,15)
\qbezier(10,82)(75,100)(140,82)\qbezier(10,82)(80,65)(140,82)
\put(75,89){\circle{1}}\put(75,82){\circle{1}}\put(75,75){\circle{1}}
\put(75,160){$s$}\put(75,3){$t$}\put(0,90){$A$} \put(140,90){$B$}
\put(58,149){\vector(-2,-1){0}}\put(58,141){\vector(-2,-1){0}}
\put(65,133){\vector(-1,-1){0}}\put(71,133){\vector(-1,-2){0}}
\put(97,149){\vector(2,-1){0}}\put(85,133){\vector(1,-2){0}}
\put(91,141){\vector(1,-1){0}}\put(80,133){\vector(1,-3){0}}
\put(32,74){\vector(2,-1){0}}\put(13,66){\vector(1,-2){0}}
\put(29,72){\vector(2,-1){0}}\put(18,67){\vector(1,-2){0}}
\put(50,90){\vector(2,1){0}}\put(50,74){\vector(2,-1){0}}
\put(110,70){\vector(-2,-1){0}}\put(113,66){\vector(-1,-1){0}}
\put(126,58){\vector(-1,-1){0}}\put(132,54){\vector(-1,-1){0}}
\put(10,32){$\infty$}\put(130,132){$\infty$}
\put(10,150){\line(26,-27){130}}\put(145,10){$Q$}
\end{picture}
\end{center}
\caption{Four node acyclic networks: unbounded reliable
communication is allowed from source $s$ to its neighbor $B$  and
from node $A$ to sink $t$, respectively. This network consists of
$a$ links of arbitrary capacity from $s$ to $A$, $b$ links of
arbitrary capacity from $B$ to $t$. From $A$ to $B$, there are $m$
feedback links and each feedback link has the minimum
capacity.}\label{fig612}
\end{figure}
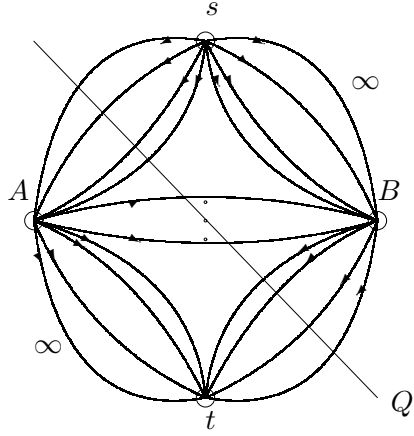

In this section, we use the guess-and-forward strategy on
a four-node acyclic network, the simplest case of a zigzag network. In this acyclic
network, source node $s$ and its neighbor node $B$ lie on one side
of a cut that separates them from sink node $t$ an its neighbor $A$.
As in the cut-set model, we allow unbounded reliable communication
from source $s$ to its neighbor $B$ on one side of the cut and from
node $A$ to the sink $t$ on the other side of the cut; this allows
node $B$ to compare information from feedback links with
uncorrupted information from the source to obtain the guess.
Similarly, by comparing claims with information reliably transmitted
from node $A$, the sink can identify the corrupted link.

This network is composed of a set of $a$ forward links
$\{l_1,..,l_a\}$ with arbitrary capacities from $s$ to $A$, a set of
$b$ forward links $\{l_{a+1},..,l_{a+b}\}$ with arbitrary capacities
from $B$ to sink $t$, and a set of $m$ feedback links from $A$ to $B$.  Each
feedback link has capacity $h$ whose value will be derived
in following Section~\ref{subsec:fournodefeedcap}.
$C_1=\sum_{l\in(l_1,..,l_{a})}r(l)$ and
$C_2=\sum_{l\in(l_{a+1},..,l_{a+b})}r(l)$ denotes the sum of forward
link capacities from $s$ to $A$, and from $B$ to $t$, respectively.
Let $C=C_1+C_2$. $C_z$ is the upper bound on this network obtained
from Theorem~\ref{bound}.

In \cite{kim2010nec}, we have shown that rate $C_z$ is
asymptotically achievable on this four-node acyclic network when
each feedback link has capacity at least $C_1$. Here we show that
rate $C_z$ is achievable even when each feedback link has smaller
capacity than $C_1$. In Section~\ref{subsec:fournodefeedcap}, we
first consider a coding strategy at node $A$ and formulate a
linear optimization problem which gives the minimum capacity of each
feedback link that guarantees the success of our strategy. Then, in
Section~\ref{subsec:fournoderesults}, we show that rate $C_z$ is
asymptotically achievable this smaller feedback link capacity using
our guess-and-forward strategy.

\begin{figure*}
\begin{center}
\begin{picture}(200,200)(-40,-20)\centering
\put(-30,150){\circle{7}}\put(-30,15){\circle{7}}\put(-95,82){\circle{7}}\put(35,82){\circle{7}}
\qbezier(-30,150)(-35,100)(-95,82)
\qbezier(-30,150)(-85,140)(-95,82)
\qbezier(-30,150)(-5,100)(35,82)\qbezier(-30,150)(-25,100)(35,82)
\qbezier(-30,150)(25,160)(35,82)\qbezier(-30,150)(15,130)(35,82)
\qbezier(-95,82)(-55,65)(-30,15)\qbezier(-95,82)(-35,65)(-30,15)
\qbezier(-95,82)(-85,5)(-30,15)\qbezier(-95,82)(-75,35)(-30,15)
\qbezier(35,82)(-5,65)(-30,15)\qbezier(35,82)(-25,65)(-30,15)
\qbezier(35,82)(25,5)(-30,15)\qbezier(35,82)(15,35)(-30,15)
\put(-95,82){\line(65,68){65}}\put(-95,82){\line(1,0){130}}
\put(-70,140){2}\put(-42,115){2}\put(-56,128){2}
\put(-90,120){$a_1,a_2$}\put(-62,95){$c_1,c_2$}\put(-76,108){$b_1,b_2$}
\put(-100,110){$l_1$}\put(-72,88){$l_3$}\put(-86,98){$l_2$}
\put(-80,70){$a_1+a_2,b_1+b_2,c_1+c_2$}\put(-35,85){3}
\put(-20,60){1}\put(-10,50){1}\put(0,40){1}\put(10,30){1}
\put(-30,160){$s$}\put(-30,3){$t$}\put(-105,90){$A$}
\put(35,90){$B$}
\put(180,150){\circle{7}}\put(180,15){\circle{7}}\put(115,82){\circle{7}}\put(245,82){\circle{7}}
\qbezier(180,150)(155,100)(115,82)\qbezier(180,150)(175,100)(115,82)
\qbezier(180,150)(125,160)(115,82)\qbezier(180,150)(135,130)(115,82)
\qbezier(180,150)(205,100)(245,82)\qbezier(180,150)(185,100)(245,82)
\qbezier(180,150)(235,160)(245,82)\qbezier(180,150)(225,130)(245,82)
\qbezier(115,82)(155,65)(180,15)\qbezier(115,82)(175,65)(180,15)
\qbezier(115,82)(125,5)(180,15)\qbezier(115,82)(135,35)(180,15)
\qbezier(245,82)(205,65)(180,15)\qbezier(245,82)(190,75)(180,15)
\qbezier(245,82)(235,15)(180,15)\qbezier(245,82)(225,35)(180,15)\qbezier(245,82)(235,-5)(180,15)
\put(180,150){\line(-65,-68){65}}\put(180,15){\line(65,67){65}}
\put(140,140){6}\put(150,132){6}\put(156,125){4}\put(162,120){4}\put(168,115){3}
\put(125,125){$l_1$}\put(135,117){$l_2$}\put(141,110){$l_3$}\put(147,105){$l_4$}\put(153,100){$l_5$}
\put(190,60){1}\put(196,54){1}\put(202,48){1}\put(210,40){1}\put(220,30){1}\put(228,22){1}
\put(115,82){\line(1,0){130}}\put(180,85){5}
\put(180,160){$s$}\put(180,3){$t$}\put(105,90){$A$}
\put(245,90){$B$}
\put(-47,144){\vector(-2,-1){0}}\put(-34,133){\vector(-1,-2){0}}\put(-42,137){\vector(-1,-1){0}}
\put(-13,141){\vector(2,-1){0}}\put(-25,133){\vector(1,-2){0}}
\put(-6,149){\vector(2,-1){0}}\put(-22,135){\vector(1,-2){0}}
\put(-77,76){\vector(2,-1){0}}\put(-90,66){\vector(1,-2){0}}
\put(-79,74){\vector(2,-1){0}}\put(-87,67){\vector(1,-2){0}}
\put(-60,82){\vector(1,0){0}}
\put(-2,64){\vector(-2,-1){0}}\put(2,60){\vector(-2,-1){0}}
\put(18,52){\vector(-1,-1){0}}\put(27,52){\vector(-1,-1){0}}
\put(158,149){\vector(-2,-1){0}}\put(176,133){\vector(-1,-2){0}}
\put(162,140){\vector(-2,-1){0}}\put(165,135){\vector(-1,-1){0}}\put(171,135){\vector(-1,-1){0}}
\put(204,149){\vector(2,-1){0}}\put(188,135){\vector(1,-2){0}}
\put(207,65){\vector(-1,-1){0}}\put(212,59){\vector(-1,-1){0}}\put(218,54){\vector(-1,-1){0}}
\put(225,48){\vector(-1,-1){0}}\put(231,42){\vector(-1,-1){0}}\put(234,39){\vector(-1,-1){0}}
\put(197,141){\vector(2,-1){0}}\put(185,133){\vector(1,-2){0}}
\put(133,76){\vector(2,-1){0}}\put(123,66){\vector(1,-2){0}}
\put(131,74){\vector(2,-1){0}}\put(119,66){\vector(1,-3){0}}
\put(150,82){\vector(1,0){0}}
\put(20,135){$\infty$}\put(-100,40){$\infty$}
\put(230,135){$\infty$}\put(110,40){$\infty$}
\end{picture}
\end{center}
\caption{Four node acyclic networks: (a) $z=2$ and feedback link
transmits $(a_1+a_2,b_1+b_2,c_1+c_2)$. (b) $z=3$. Assume that
$(a_1,..,a_6)$, $(b_1,..,b_6)$, $(c_1,..,c_4)$, $(d_1,..,d_4)$, and
$(e_1,e_2,e_3)$ are transmitted on forward links $(l_1,..,l_5)$ from
$s$ to $A$, respectively. Feedback link transmits $(\sum_{i=1}^6
a_i,\sum_{i=1}^6 b_i,\sum_{i=1}^4 c_i,\sum_{i=1}^4 d_i,\sum_{i=1}^3
e_i)$.}\label{fig613}
\end{figure*}
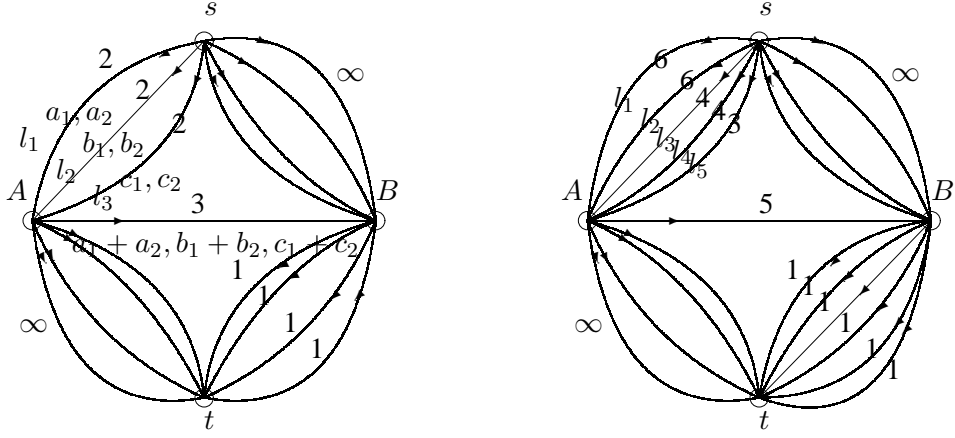


\subsubsection{Coding strategy at node $A$}\label{subsec:fournodefeedcap}

Suppose that $(s,B)$ sends $(C,C_z)$ MDS code across the cut to
$(A,t)$. We consider the encoding strategy at node $A$ and derive
the minimum capacity of each feedback link. Suppose that node $A$
receives the vector of symbols
$\hat{W}=(\hat{W}_{l_1},\ldots,\hat{W}_{l_a})$ from $s$ where
$\hat{W}_l=(p_l^1,\ldots,p_l^{r(l)})$ denotes the codewords on link
$l\in\{l_1,\ldots,l_a\}$. We first assume that node $A$ transmits,
on each feedback link to $B$, the same set of codewords each of
which is a linear combination of codewords received on a single link
from $s$ to $A$. Precisely, for any forward link $l_i$, node $A$
transmits on each feedback link
$g(\hat{W}_{l_i})=(g_{l_i}^1(\hat{W}_{l_i}),\ldots,g_{l_i}^{k_i}(\hat{W}_{l_i}))$
where $g_{l_i}^j(\hat{W}_{l_i})$ is a single linear combination of
$\hat{W}_{l_i}=(p_{l_i}^1,\ldots,p_{l_i}^{r(l_i)})$. Thus, the same
value $g(\hat{W})=(g(\hat{W}_{l_1}),..,g(\hat{W}_{l_a}))$ is
transmitted on each feedback link. For instance, given a network in
Fig.~\ref{fig613}(a), $A$ transmits
$g(\hat{W})=(g(\hat{W}_{l_1}),g(\hat{W}_{l_2}),g(\hat{W}_{l_3}))$
where $g(\hat{W}_{l_1})=a_1+a_2$, $g(\hat{W}_{l_2})=b_1+b_2$, and
$g(\hat{W}_{l_3})=c_1+c_2$.


Here, we define the degree of freedom of forward link $l$ between
$s$ and $A$ as follows.

\begin{definition}\label{degfree}
Consider the vector of symbols $\hat{W}_l$ received on forward link
$l$ from $s$ to $A$ and assume that node $A$ transmits $k$ linear
combinations of $\hat{W}_l$,
$g(\hat{W}_l)=(g_l^1(\hat{W}_l),\ldots,g_l^k(\hat{W}_l))$. Let $M_l$
denote the $r(l)\times k$ encoding matrix at $A$ for link $l$ such
that $\hat{W}_l\cdot M_l=g(\hat{W}_l)$. Then the degree of freedom
of link $l$, $f(l)$, is defined as the capacity of link $l$ minus
the rank of the matrix $M_l$, i.e., $f(l) = r(l)-rank(M_l)$. For any
forward link $l$ between $B$ and $t$, we simply define the degree of
freedom $f(l)$ as the link capacity, i.e., $f(l)=r(l)$.
\end{definition}

For example, in Fig.~\ref{fig613}(a), since feedback link transmits
$(a_1+a_2,b_1+b_2,c_1+c_2)$, $f(l)=1$ for all forward links from $s$
to $A$. In Fig.~\ref{fig613}(b), since feedback link transmits
$(\sum_{i=1}^6 a_i,\sum_{i=1}^6 b_i,\sum_{i=1}^4 c_i,\sum_{i=1}^4
d_i,\sum_{i=1}^3 e_i)$, $f(l_1)=f(l_2)=5$, $f(l_3)=f(l_4)=3$, and
$f(l_5)=2$.

From the definition of degree of freedom, node $A$ sends
\begin{equation}\label{fournodeeq1}
h = \sum_{l\in\{l_1,..,l_a\}} (r(l)-f(l)) =
C_1-\sum_{l\in\{l_1,..,l_a\}} f(l)
\end{equation}
codewords to $B$ along each feedback link.

Now we introduce our coding strategy at node $A$ as follows.

Node $A$ can choose any $g(\hat{W})$ which satisfies the 
following two conditions on the degree of freedom of links.

\begin{condition}\label{cond1}
Given any set $A_1$ composed of $2z$ forward links, $\sum_{l\in
A_1}f(l)\leq C-C_z$.
\end{condition}

\begin{condition}\label{cond2}
Given any set $A_2$ composed of $z$ forward links and $A_3$ composed
of $z-m$ forward links such that $A_2\cap A_3=\emptyset$,
$\sum_{l\in A_2}f(l)+\sum_{l\in A_3}r(l)\leq C-C_z$.
\end{condition}
Condition~\ref{cond1} means that the sum of the degree of freedom of
any $2z$ forward links are less than or equal to $C-C_z$.
Condition~\ref{cond2} means that the sum of the degree of freedom of
any $z$ links plus the sum of any $z-m$ link capacities is less than
or equal to $C-C_z$.
In the proof of Lemma~\ref{suffbu} and~\ref{insuffbu}, we show that
these two conditions are necessary to prove the tightness of our
upper bound in Theorem~\ref{bound}. For example network in
Fig.~\ref{fig613}(a), $z=2$ and the upper bound $C_z=6$. 3 codewords
sent by $A$ satisfies above two conditions, and feedback capacity 3
is sufficient. Likewise, when $z=3$ and the upper bound $C_z=9$ in
the network Fig.~\ref{fig613}(b), 5 codewords sent by $A$ also
satisfies above two conditions. In~\cite{kim2010nec}, the minimum
required capacity for each feedback link to achieve rate $C_z$ is
the sum of all forward link capacities between $s$ and $A$, which is
6 and 23 for the networks in Fig.~\ref{fig613}(a) and (b),
respectively.

Finally, we formulate a linear optimization problem which gives the
minimum capacity of each feedback link, based on
conditions~\ref{cond1} and~\ref{cond2}.


\begin{equation}\label{eq2}
\begin{split}
\text{min}\quad& h = C_1-\sum_{i=1}^a f(l_i)\\
&f(l_i)\leq r(l_i), \quad \forall 1\leq i\leq a+b\\
&\sum_{l\in M}f(l)\leq C-C_z, \quad M\subset\mathcal{E}, |M|\leq 2z \\
&\sum_{l\in N_1}r(l)+\sum_{l\in N_2}f(l)\leq C-C_z,\\
&\quad N_1,N_2\subset\mathcal{E},|N_1|\leq z-m, |N_2|\leq z, N_1\cap N_2=\emptyset\\
\end{split}
\end{equation}

Objective function $h$ is defined in equation~(\ref{fournodeeq1}).
The first inequality constraint is the link capacity constraint. The second
and third constraints come from condition~\ref{cond1}
and~\ref{cond2}, respectively. We can check that solving the above
optimization problem for the networks in Fig.~\ref{fig613}(a) and
(b) gives the feedback link capacities 3 and 5, respectively.

\subsubsection{Guess-and-forward strategy}\label{subsec:fournoderesults}
In this section, we show the tightness of the upper bound using our
guess-and-forward strategy. Our proof of decoding success requires
each feedback link's capacity to satisfy the lower bound obtained
in~(\ref{eq2}), so that node $B$ receives sufficient feedback
information to guess the corrupted links by comparing with
information reliably received from the source. This feedback link
capacity can in general be smaller than that in the simple example
of Section~\ref{subsec:fournodefeedcap}, so the details and proof of
correctness of the scheme are slightly more involved.

The guess-and-forward strategy achieving the rate $C_z$ for the four-node
acyclic network shown in Fig.~\ref{fig612} is as follows.
In each round, $s$ and $B$ together send a $(C,C_z)$ MDS code\footnote{A generic linear code is MDS.},
obtained from Lemma~\ref{fdbackmds} below, to $A$ and $t$ across the cut
$cut(\{s,B\},\{A,t\})$. Let $W$ and $\hat{W}$ denote the codewords
$s$ sends to $A$, and $A$ received from $s$, respectively. Using the
coding strategy in Section~\ref{subsec:fournodefeedcap}, $A$ sends
$g(\hat{W})$ to $B$ along each feedback link using a repetition
code. For each feedback link $l$, let $P_l$ denote the information
received by $B$ on $l$. $B$ compares $P_l$ with $g(W)$. If $P_l\neq
g(W)$, then $B$ obtains a guess $X_l$ identifying the locations of
adversarial links between $s$ and $A$ assuming $P_l$ is reliable. $B$ does not send  any claim if
$P_l=g(W)$, or if the guessed set $X_l$  only contains
forward links that have been guessed by $l$ in previous rounds.

We next show that this strategy achieves rate $C_z$ asymptotically, via the following series of lemmas.

\begin{lemma}\label{fdbackmds}
Given the four-node acyclic network in Fig.~\ref{fig612}, let $u$
denote the sum of $2z$ largest degree of freedom of links in the
network. Suppose the adversary introduces errors on $z$ forward links
subject to the constraint that the  values sent along the feedback links are unaffected.  There exists a 
$(C,C-u)$ generic linear code for the forward links that corrects these $z$ error links.
\end{lemma}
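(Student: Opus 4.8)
The plan is to reduce this to the network error correction Singleton-type bound (Lemma~\ref{singleton}) combined with a dimension-counting argument in the style of the classical achievability result for MDS / generic linear network codes. First I would set up the relevant transfer matrices: for each forward link $l$ between $s$ and $A$, let $M_l$ be the $r(l)\times k_l$ encoding matrix at $A$ (from Definition~\ref{degfree}), so that the portion of $\hat W_l$ visible to the feedback links is $\hat W_l M_l$, and the adversary is constrained so that any injected error $e_l$ lies in the kernel of $M_l$ — a subspace of dimension exactly $f(l)=r(l)-\mathrm{rank}(M_l)$. For a forward link between $B$ and $t$ the feedback constraint is vacuous and the error may be arbitrary, consistent with $f(l)=r(l)$. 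Thus the ``effective'' error space associated to a set $Z$ of $z$ corrupted forward links has dimension $\sum_{l\in Z} f(l)\le u$, the sum of the $2z$ largest degrees of freedom being the worst case when we must simultaneously distinguish two confusable error patterns on up to $z$ links each.

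Next I would invoke the generic-linear-code construction: pick a $(C,C-u)$ generic linear network code on the forward links of $\mathcal G$, which exists over a sufficiently large field by the standard argument (e.g.~\cite[Ch.~19]{yeung2008information}), treating each forward link as carrying $r(l)$ unit edges. The key claim to verify is that for such a code, any two source messages $x\ne x'$ together with any two feedback-constrained error patterns $e,e'$ supported on at most $z$ links each produce distinct outputs at $t$ — i.e.~$\tilde\phi(x)-\tilde\phi(x')\ne \theta(e')-\theta(e)$ in the notation following Lemma~\ref{lemmapsi}. Because the code is generic, the image of the message space (dimension $C-u$) and the span of all error contributions from any $2z$ links restricted to the feedback-constrained subspaces (total dimension at most $u$) are in ``general position'' across every source-sink cut; since $(C-u)+u = C$ equals the cut capacity, these two subspaces intersect trivially, which is exactly the non-confusability condition~(\ref{eq1}) failing for all such $(e,e')$. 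This is the standard translation of the MDS/generic property into error-correction capability, and it mirrors the reasoning in the proof of the intermediate-node error-correction lemma and in~\cite{cai2006network}.

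The main obstacle — and the step deserving the most care — is showing that the feedback constraint genuinely restricts the adversary's error patterns to the kernels of the $M_l$'s \emph{simultaneously with} the generic-position property, i.e.~that one can choose a single generic code that works against \emph{all} admissible adversary choices of $z$ forward links at once. This requires that the relevant ``bad'' algebraic varieties (one per pair of cut, candidate confusable pair $(Z,Z')$, and error directions) each have positive codimension in the space of coding coefficients, so their union is avoided for large $q$; the bound $|Z_i\cup W_i|$ or $\sum f(l)\le u$ is what guarantees the arithmetic $(C-u)+u\le C$ needed for each variety to be proper. I would handle this by fixing the field size after a union bound over the finitely many $(Q,Z,Z')$ triples, exactly as in the generic network code literature, and appeal to Conditions~\ref{cond1}–\ref{cond2} only implicitly here (they will be used in the subsequent lemmas to ensure $u\le C-C_z$, so that $C-u\ge C_z$). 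A secondary point to check is that errors on $B$–$t$ links, which are downstream of no feedback link, are covered by taking $f(l)=r(l)$ there, so the ``$2z$ largest degrees of freedom'' correctly accounts for the case where all $z$ corrupted links lie between $B$ and $t$.
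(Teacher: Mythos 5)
Your proposal follows essentially the same route as the paper's proof: both reduce the adversary's effective error on each corrupted forward link to the kernel of its feedback-encoding matrix (dimension $f(l)$), bound the total confusable-pair error dimension across $2z$ links by $u$, and invoke a Cai--Yeung-style dimension-counting/genericity argument over a sufficiently large field to obtain a $(C,C-u)$ code. The paper executes the final step by explicitly bounding the difference set $|\Delta(V,z)|\le (q-1)^u\sum_{i\le 2z}\binom{a+b}{i}$ and counting admissible $u\times C$ parity-check matrices as in~\cite[Theorem~4]{cai2006network}, whereas you phrase it as a general-position/union-bound argument on the message and error subspaces; these are the same argument in different dress.
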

\begin{proof}
See the appendix.
\end{proof}

Since the sum of $2z$ largest degree of freedom is at most $C-C_z$
from the condition~\ref{cond1}, we obtain $C-u\geq C_z$.

\begin{definition}\label{defconsistency}
Given a set of any $k$ forward links $L=\{l_1,\ldots,l_k\}$ in the
four-node acyclic network, we say $L$ gives consistent output if
$\sum_{l\in L}r(l)\geq C_z$ and the decoded output from any $C_z$
code symbols on $L$ is the same.
\end{definition}

\begin{lemma}\label{suffbu}
Given the four-node network in Fig.~\ref{fig612} such that $b\geq
2z+1$, rate $C_z$ is achievable.
\end{lemma}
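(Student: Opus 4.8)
\emph{Overall approach.} The plan is to imitate the structure of Lemma~\ref{guessforwardex}, now at full generality: replace the ad hoc $(11,7)$ code by the generic code of Lemma~\ref{fdbackmds}, replace the hand-picked subset bookkeeping by the notion of Definition~\ref{defconsistency}, and let Conditions~\ref{cond1} and~\ref{cond2} carry the quantitative load. First I would fix the code. By Lemma~\ref{fdbackmds} there is a $(C,C-u)$ generic linear code on the forward links of the cut that corrects any $z$ forward-link errors leaving the feedback values unchanged, where $u$ is the sum of the $2z$ largest degrees of freedom; by Condition~\ref{cond1}, $u\le C-C_z$, so this code has rate at least $C_z$. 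Source $s$ and node $B$ jointly realize this code across $cut(\{s,B\},\{A,t\})$ using the infinite-capacity link $s\to B$, so $A$ and $t$ jointly hold the (possibly corrupted) $C$ code symbols; $A$ forwards $\hat W$ reliably to $t$ over the infinite-capacity link $A\to t$, and sends $g(\hat W)$ on each feedback link by repetition, the required per-link capacity $h$ being exactly what the optimization~(\ref{eq2}) and Section~\ref{subsec:fournodefeedcap} guarantee feasible. Because $b\ge 2z+1$, any claim $(X_l,P_l)$ that $B$ broadcasts over all $b$ links $B\to t$ is recovered by $t$ by majority vote, since at most $z$ of those links are adversarial.

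\emph{Case analysis.} Next I would split on whether the sink ever receives a claim. If it receives $(X_l,P_l)$, it compares $P_l$ with $g(\hat W)$ (known reliably from $A$): if they differ, feedback link $l$ is adversarial and is permanently discarded; if they agree, then $P_l=g(\hat W)\ne g(W)$, so $\hat W\ne W$ and, assuming $P_l$ reliable, the guess $X_l$ correctly names at least one new adversarial forward link among $l_1,\dots,l_a$, which is discarded. Hence every new claim removes either a feedback link or a forward link, so claims are needed only a bounded number of times (in terms of $a$ and $m$) and their overhead amortizes to zero. If no claim is ever sent, the untested part of the network is in one of two configurations: (I) all links $s\to A$ and all feedback links are uncorrupted, so every remaining adversarial link lies among $l_{a+1},\dots,l_{a+b}$ and the generic code corrects them directly; or (II) some links $s\to A$ are corrupted and enough feedback links are corrupted that $B$ still receives $g(W)$ and stays silent, in which case at most $z-m$ of the links $s\to A$ carry errors and the remaining adversarial budget is spent on feedback. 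In case (II) the sink searches over subsets $L$ of forward links giving consistent output (Definition~\ref{defconsistency}), and I would argue that at least one consistent $L$ exists and that every consistent $L$ decodes to the true message; combining (I) and (II) gives correct decoding whenever no claim is sent.

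\emph{Main obstacle.} The hard part is case (II): ruling out a wrong-but-consistent decoding when the adversary splits its $z$ links between corrupting $s\to A$ and masking the feedback so no guess is triggered. This is exactly what Condition~\ref{cond2} controls. The at most $z-m$ genuinely corrupted forward links $s\to A$ contribute their full capacities $r(l)$, while the up-to-$z$ links whose symbols the sink can no longer certify (those whose corruption is hidden behind the masked feedback) contribute only their degrees of freedom $f(l)$; Condition~\ref{cond2} bounds the sum $\sum_{l\in N_1}r(l)+\sum_{l\in N_2}f(l)$ by $C-C_z$ over disjoint sets of sizes $z-m$ and $z$, so at least $C_z$ trustworthy code symbols survive, and genericity/MDS forces every consistent subset to the true message. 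I would also need to verify that the claim-suppression rule (node $B$ does not resend a guess consisting only of forward links it has guessed before) meshes with the round count — handled exactly as in Case~2 of Lemma~\ref{guessforwardex}, where the previous rounds have already let the sink localize either the feedback link or those forward links — and to confirm that the hypothesis $b\ge 2z+1$ is used only to transmit the finitely many claims reliably.
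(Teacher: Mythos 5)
Your overall framing matches the paper's: set up the $(C,C-u)$ generic code from Lemma~\ref{fdbackmds}, use $b\geq 2z+1$ only for reliable claim transmission, remove a link whenever a claim arrives, and then argue correctness when no claim is sent. However, your case analysis for the ``no claim'' situation has a genuine gap. You split it into only two possibilities, (I) everything upstream uncorrupted, and (II) some $s\to A$ links corrupted and \emph{feedback links corrupted} to mask the discrepancy. You have omitted the third possibility, which the paper calls~(III): the adversary corrupts $s\to A$ links with errors lying in the kernel of the encoding map $g$, so that $\hat W\neq W$ but $g(\hat W)=g(W)$, all feedback links are \emph{reliable}, and node $B$ still sees nothing wrong. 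In~(III) the adversary spends \emph{zero} budget on feedback and can attack up to $z$ forward links total (possibly including some between $B$ and $t$), yet triggers no claim. This scenario is precisely what the degree-of-freedom bookkeeping and Lemma~\ref{fdbackmds} are built to handle, and it is not a subcase of your~(II), which presumes the feedback is where the masking happens. Your own intuition paragraph — the at-most-$z$ links whose corruption is ``hidden behind the masked feedback'' contribute only $f(l)$ — actually describes the dynamics of~(III), but you attach it to~(II), where the relevant bound is on capacities $r(l)$ of at most $z-m$ corrupted forward links (since all $m$ feedback links are spent). Mixing these two invocations of Condition~\ref{cond2} makes the argument as written incoherent.

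A second, smaller gap: you assert that ``at least one consistent $L$ exists and that every consistent $L$ decodes to the true message,'' but this is not how the paper's proof works, and it is not obviously true. Because the sink does not know which of (I)/(II)/(III) holds, the paper defines a concrete decision procedure (Algorithm~\ref{algsuffbu}): it first looks for a consistent set $L_1$ of $a+b-z+m$ forward links, then for a consistent set $L_2$ of $a+b-z$ forward links containing all $a$ links from $s$ to $A$, and only then falls back to the generic code. The correctness argument then has to handle cases where \emph{both} $L_1$ and $L_2$ exist, and the paper resolves that conflict by showing $\sum_{l\in L_1\cap L_2}r(l)\geq C_z$ (via Lemma~\ref{bound2}) so they must agree. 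Your sketch also under-specifies where the generic code is actually needed as the decoder — in the paper it is the fallback for case~(III) when no large enough consistent set exists — whereas you invoke it casually for case~(I). To repair the proposal, reintroduce case~(III) explicitly, describe a definite decoding order among the three candidate decoders, and verify agreement between the candidate outputs when more than one decoder is applicable.
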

\begin{proof}
Since $b\geq 2z+1$, any claim $(X_l,P_l)$ can be sent reliably from
$B$ to $t$ using a repetition code. The details of proof is
presented in the appendix.
\end{proof}

\begin{lemma}\label{insuffbu}
Given the four-node network in Fig.~\ref{fig612} such that $b\leq
2z$, rate $C_z$ is achievable.
\end{lemma}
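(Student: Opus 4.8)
The plan is to mimic the guess-and-forward argument of Lemma~\ref{suffbu} but to replace the step in which the sink reads the claims reliably --- which is exactly what $b\le 2z$ forbids --- with a decoder at $t$ that tolerates \emph{garbled} claims. Recall the scheme: $s$ and $B$ send the $(C,C_z)$ generic MDS code of Lemma~\ref{fdbackmds} across $cut(\{s,B\},\{A,t\})$ (rate $C_z$ is available since $C-u\ge C_z$ by Condition~\ref{cond1}); $A$ relays $\hat W$ reliably to $t$ and sends $g(\hat W)$ on each feedback link; $B$, which knows $W$ (hence $g(W)$) reliably, sends a claim $(X_l,P_l)$ whenever $P_l\ne g(W)$, repeating it on all $b$ links from $B$ to $t$. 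Only finitely many claims are ever sent --- at most one new one per $s\to A$ link --- so the amortized overhead vanishes. Write $z_1,z_2,z_3$ for the number of adversarial links among the $s\to A$ forward links, the $m$ feedback links, and the $B\to t$ forward links, so $z_1+z_2+z_3\le z$; the crucial point enabling the case $b\le 2z$ is that each $B\to t$ link the adversary uses to garble claims is one it cannot use against the $s\to A$ side.

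For the decoder, note that $t$ knows exactly the $C_1$ code symbols carried on $\{l_1,\dots,l_a\}$ (corrupted only by the $z_1$ bad $s\to A$ links) and receives the remaining $C_2$ code symbols on $\{l_{a+1},\dots,l_{a+b}\}$ with at most $z_3$ links in error, plus the repeated claims. The sink enumerates every pattern $(F,G)$ with $F\subseteq\{l_1,\dots,l_a\}$, $G\subseteq\{l_{a+1},\dots,l_{a+b}\}$, $|F|+|G|\le z$; for each it discards the claim-copies carried on $G$, reads each claim off the survivors, \emph{accepts} a claim only if its $P_l$ equals $g(\hat W)$ (which $t$ can recompute from $\hat W$) and then requires $X_l$ to be consistent with $F$, erases the code positions of $F$ and $G$, and tests whether the remaining received code symbols lie on a common codeword. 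A pattern that passes with at least $C_z$ unerased positions determines a message, which $t$ outputs. The true pattern passes: its $G$ covers the bad $B\to t$ links, so every survived claim is the honest one with $P_l=g(\hat W)$ and correct guess, and after the erasures the unerased symbols are error-free and lie on the true codeword.

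The crux is uniqueness of the decoded message among the passing patterns, and here I would split on whether the adversary's action changed the feedback value $g(W)$. If it did not (in particular whenever no honest feedback link saw a difference), $B$ sent no claim, and Lemma~\ref{fdbackmds}, whose hypothesis now holds, says the $(C,C-u)$ code corrects the $z$ error links with no erasures at all, so the message is forced regardless of $F$. If it did, then $B$ sent a genuine claim, and any passing pattern either covers the corrupted $B\to t$ links with $G$ --- in which case its accepted $X_l$ is forced to equal the honest guess (because $t$ checks $P_l=g(\hat W)$) and pins $F$ to the true feedback-affecting $s\to A$ links --- or, if $G$ mis-covers them, the pattern can still only accept an \emph{honest} claim, so again $F$ is pinned; the residual case in which every feedback link is adversarial reduces to at most $z-m$ adversarial forward links. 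Comparing the putative codewords of two passing patterns, Condition~\ref{cond1} bounds by $C-C_z$ the degrees of freedom of the up-to-$2z$ forward links in their combined erased sets in the feedback-unaffected case, and Condition~\ref{cond2} does the analogous bounding of the $z$ forward links of the feedback-affecting scenario together with $z-m$ others in the mixed case; either way the two codewords agree on at least $C_z$ positions, hence coincide by the MDS distance bound, so $t$'s output is unique and correct.

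I expect the main obstacle to be precisely the subcase of the uniqueness step where a pattern's $G$ does not match the true set of corrupted $B\to t$ links: there a claim reaches $t$ garbled rather than simply absent, and one must rule out that a garbled claim leads $t$ to erase a \emph{wrong} $s\to A$ set while still passing the codeword-consistency test. The lever is that $t$ accepts a claim only when $P_l=g(\hat W)$, and since $g(\hat W)$ is computed coordinatewise from the exactly-known $\hat W$, an accepted claim is necessarily the one $B$ honestly sent from an honest feedback link; combined with $z_1+z_2+z_3\le z$ --- so spending budget on $B\to t$ links genuinely deprives the adversary of $s\to A$ attacks --- the degree-of-freedom accounting of Conditions~\ref{cond1} and~\ref{cond2} goes through as in Lemma~\ref{suffbu}. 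The vanishing-overhead claim is then routine, since the number of claims is at most $a$.
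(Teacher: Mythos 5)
Your proposal and the paper's proof take genuinely different routes. The paper runs the guess-and-forward rounds and, whenever the sink sees some non-null claim, \emph{identifies and permanently removes} at least one adversarial link; the engine of that argument is a partition of the $b$ receiving links by the distinct claim each one displays (so that a claim shown on $n_i$ links either is honest or costs the adversary $n_i$ of its budget) together with an incremental mechanism that, once a correct decoding set $L$ is in hand, re-adds links one at a time across rounds and flags whichever one first breaks consistency. Your proposal instead builds a one-shot decoder at $t$ that enumerates candidate forward-link patterns $(F,G)$, filters claims, erases $F\cup G$, and tests codeword consistency, arguing uniqueness by an intersection count across passing patterns.

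The gap is exactly where you flagged it, and I do not think the lever you propose closes it. You assert that ``an accepted claim is necessarily the one $B$ honestly sent from an honest feedback link'' because $t$ can recompute $g(\hat W)$. But the claim is carried to $t$ over the $B\to t$ forward links, not over the feedback link; if one of those $B\to t$ links is adversarial and is \emph{not} in the candidate set $G$ for the pattern currently being tested, the sink reads from it a forged pair $(X'_l,P'_l)$ chosen by the adversary. Nothing stops the adversary from setting $P'_l=g(\hat W)$ (this quantity is a deterministic function of $\hat W$, which the adversary also knows), so the forged claim passes your $P_l=g(\hat W)$ test; and $t$ cannot cross-check $X'_l$ because $t$ does not know $W$, only $\hat W$. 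Hence a wrong pattern can accept a forged $X'_l$ that is consistent with its own $F$, and your argument that ``$F$ is pinned'' does not go through. The paper copes with this precisely via the $n_i$-count accounting: a forged claim that reaches $t$ on $n_i$ links means those $n_i$ $B\to t$ links are all adversarial, leaving at most $z-n_i$ bad $s\to A$ links, which (after subtracting the $z-(b-n_i)$ that the claim itself may legitimately implicate) keeps at least $a+b-2z$ uncorrupted $s\to A$ links whose total capacity is at least $C_z$. Your proposal gestures at this budget trade-off but never carries out the count, and without it the uniqueness step among passing patterns — especially for the mis-covering $G$ — is not established. Relatedly, your intersection argument invokes Conditions~\ref{cond1} and \ref{cond2} to conclude that two passing codewords agree on at least $C_z$ worth of positions, but those conditions bound \emph{degrees of freedom} $f(l)$, not the raw capacities of the complement of $F_1\cup G_1\cup F_2\cup G_2$; the translation from a degree-of-freedom bound on the erased side to a capacity bound on the surviving side is exactly the content of Lemma~\ref{fdbackmds} and requires the errors to be feedback-preserving, which fails once you allow corrupted feedback links. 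Finally, the paper's proof is not one-shot: it relies on the iterative step (*) of Lemma~\ref{insuffbu} to peel off adversarial links round by round, eventually reducing to the no-claims situation of Lemma~\ref{suffbu}; a purely per-round enumeration decoder would have to redo all that work inside a single round, which is more than the conditions you invoke guarantee.
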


\begin{proof}
When $b\leq 2z$, reliable transmission of claims from $B$ to $t$ is
not guaranteed. Thus we cannot use the same technique used in the
proof of Lemma~\ref{suffbu}. The proof is presented in the appendix.
\end{proof}

\subsection{zig-zag network}

\begin{figure}
\begin{center}
\begin{picture}(200,200)(-40,-20)\centering
\put(-55,40){\circle{5}}\put(-15,40){\circle{5}}\put(25,40){\circle{5}}\put(135,40){\circle{5}}\put(175,40){\circle{5}}
\put(-55,110){\circle{5}}\put(-15,110){\circle{5}}\put(25,110){\circle{5}}\put(135,110){\circle{5}}\put(175,110){\circle{5}}
\qbezier(-55,110)(-70,90)(-55,40)\qbezier(-55,110)(-40,90)(-55,40)\put(-55,40){\line(0,1){70}}
\qbezier(-55,110)(-35,100)(-15,110)\qbezier(-55,110)(-35,120)(-15,110)
\qbezier(-55,40)(-40,60)(-15,110)\qbezier(-55,40)(-40,90)(-15,110)
\qbezier(-55,40)(-35,30)(-15,40)\qbezier(-55,40)(-35,50)(-15,40)
\qbezier(-15,110)(-30,90)(-15,40)\qbezier(-15,110)(0,90)(-15,40)\put(-15,40){\line(0,1){70}}
\qbezier(-15,110)(5,100)(25,110)\qbezier(-15,110)(5,120)(25,110)
\qbezier(-15,40)(0,60)(25,110)\qbezier(-15,40)(0,90)(25,110)
\qbezier(-15,40)(5,30)(25,40)\qbezier(-15,40)(5,50)(25,40)
\qbezier(25,110)(10,90)(25,40)\qbezier(25,110)(40,90)(25,40)\put(25,40){\line(0,1){70}}
\put(80,75){\circle{1}}\put(60,75){\circle{1}}\put(100,75){\circle{1}}
\qbezier(135,110)(120,90)(135,40)\qbezier(135,110)(150,90)(135,40)\put(135,40){\line(0,1){70}}
\qbezier(135,110)(155,100)(175,110)\qbezier(135,110)(155,120)(175,110)
\qbezier(135,40)(150,60)(175,110)\qbezier(135,40)(150,90)(175,110)
\qbezier(135,40)(155,30)(175,40)\qbezier(135,40)(155,50)(175,40)
\qbezier(175,110)(160,90)(175,40)\qbezier(175,110)(190,90)(175,40)\put(175,40){\line(0,1){70}}
\put(-57,120){s}\put(-37,120){$\infty$}\put(-17,120){$B_1$}\put(3,120){$\infty$}
\put(23,120){$B_2$}\put(133,120){$B_{k-1}$}\put(156,120){$\infty$}\put(173,120){$B_k$}
\put(-57,25){$A_1$}\put(-37,25){$\infty$}\put(-17,25){$A_2$}\put(3,25){$\infty$}
\put(23,25){$A_3$}\put(133,25){$A_{k}$}\put(156,25){$\infty$}\put(173,25){$t$}
\put(-35,115){\vector(1,0){0}}\put(-35,105){\vector(1,0){0}}
\put(-35,45){\vector(1,0){0}}\put(-35,35){\vector(1,0){0}}
\put(5,115){\vector(1,0){0}}\put(5,105){\vector(1,0){0}}
\put(5,45){\vector(1,0){0}}\put(5,35){\vector(1,0){0}}
\put(155,115){\vector(1,0){0}}\put(155,105){\vector(1,0){0}}
\put(155,45){\vector(1,0){0}}\put(155,35){\vector(1,0){0}}
\put(-62,75){\vector(0,-1){0}}\put(-55,75){\vector(0,-1){0}}\put(-48,75){\vector(0,-1){0}}
\put(-22,75){\vector(0,-1){0}}\put(-15,75){\vector(0,-1){0}}\put(-8,75){\vector(0,-1){0}}
\put(128,75){\vector(0,-1){0}}\put(135,75){\vector(0,-1){0}}\put(142,75){\vector(0,-1){0}}
\put(168,75){\vector(0,-1){0}}\put(175,75){\vector(0,-1){0}}\put(182,75){\vector(0,-1){0}}
\put(-32,75){\vector(1,1){0}}\put(-37,80){\vector(1,1){0}}
\put(8,75){\vector(1,1){0}}\put(3,80){\vector(1,1){0}}
\put(158,75){\vector(1,1){0}}\put(153,80){\vector(1,1){0}}
\end{picture}
\end{center}
\caption{$k$-layer non-overlapping zig-zag network: Given the cut
$cut(\{s,B_1,..,B_k\},\{A_1,..,A_k,t\})$, $A_i$ and $B_i$ can
communicate reliably with unbounded rate to $A_{i+1}$ and $B_{i+1}$,
respectively.($s=B_0$, $t=A_{k+1}$). The links from $A_i$ to $B_i$
represent feedback across the cut. This model more accurately
captures the behavior of any cut with $k$ feedback links across the
cut.}\label{fig614}
\end{figure}
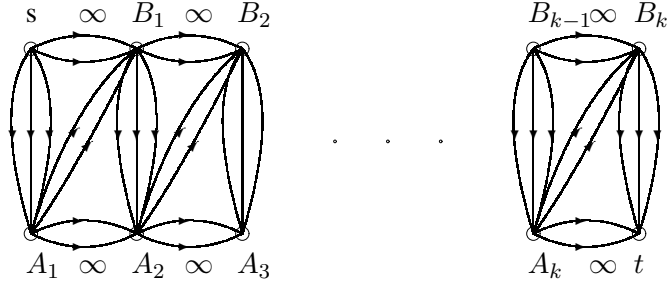

In this section, we consider a special case of the zig-zag network, the
non-overlapping zig-zag network. We present conditions under which
our upper bound is tight and derive a general achievable bound for
non-overlapping zig-zag network.

We call the network shown in Fig.~\ref{fig614} a $k$-layer
non-overlapping zig-zag network. Unlike the general zig-zag network,
feedback transmission is only possible from $A_i$ to $B_i$. $A_i$
and $B_i$ can communicate reliably with unbounded rate to $A_{i+1}$
and $B_{i+1}$, respectivley. ($s=B_0$, $t=A_{k+1}$). Thus, reliable
transmission with unbounded rate is possible from $A_i$ to $A_j$,
and from $B_i$ to $B_j$ for $\forall i<j$. We use $F_i$ and $W_i$ to
denote the set of forward links and feedback links from $B_{i-1}$ to
$A_i$, and from $A_i$ to $B_i$, respectively.  Let $|F_i|=b_i$ and
$|W_i|=m_i$. In this network, we assume that each feedback link from
$A_i$ to $B_i$ has a sufficient capacity to forward all the
information $A_i$ received from $B_{i-1}$. It is clear that the
four-node network is 1-layer non-overlapping zig-zag network. Given
a $k$-layer zig-zag network $\mathcal{G}$, we use $C_{z}$ to denote
the upper bound on $\mathcal{G}$ obtained from Theorem~\ref{bound}.

Now we consider the following strategy for non-overlapping zig-zag
network which is similar to that for a four-node network. We use $C$
to denote the sum of all forward link capacities.

In each round, $s$ and $(B_1,..,B_k)$ together send a $(C,C_z)$
MDS code to $(A_1,..,A_k)$ and $t$ across the cut
$cut(\{s,B_1,..,B_k\},\{A_1,..,A_k,t\})$. For $1\leq i\leq k$, $A_i$
sends its codeword symbols $\hat{W}$ to $B_i$ along each feedback
link using a repetition code. For each feedback link $l$, let $P_l$
denote the information received by $B_i$ on $l$. $B_i$
compares $P_l$ with $W$ which is received from $s$. If $P_l\neq W$,
then $B_i$ obtains a guess $X_l$ identifying the locations of
adversarial links between $B_{i-1}$ and $A_i$ assuming $P_l$ is
reliable. $B_i$ sends claim $(X_l,P_l)$ to $A_{i+1}$ along each link
using repetition code. If $P_l=W$, $B_i$ does not obtain any claim.
For all $2\leq j\leq k$, $A_j$ sends any received claim from
$B_{j-1}$ to the sink reliably. The above strategy is applied in each round. $B_i$ sends claims only when $X_l$ guesses at least
one adversarial forward link which is different from forward links
guessed by $l$ at previous rounds.

For a four-node acyclic network in Fig.~\ref{fig612},
Lemma~\ref{suffbu} shows that our bound is tight when claims are sent
reliably from node $B$ to the sink $t$, i.e., $b\geq 2z+1$. Using
our strategy, we simply extend this result for the non-overlapping
zig-zag network as follows.

\begin{lemma}\label{zig1}
Given a family of $k$-layer non-overlapping zig-zag networks such
that $b_i\geq 2z+1$ for $2\leq i\leq k+1$, rate $C_z$ is achievable.
\end{lemma}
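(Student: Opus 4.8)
The plan is to run the guess-and-forward protocol described above, placing a generic (hence MDS) $(C,C_z)$ code on the cut $Q=F_1\cup\dots\cup F_{k+1}$, and to show that after a number of ``claim'' rounds bounded by a constant that depends only on the network, not on the block length, the sink can always recover the message; amortising the finite overhead of the claims over a growing number of rounds then yields rate $C_z$. (For this family the minimum in Theorem~\ref{bound} is attained at $Q$, so $C_z=C_z(Q)$.) The hypothesis $b_i\ge 2z+1$ for $2\le i\le k+1$ enters first through reliable routing of control information: a claim $(X_l,P_l)$ produced by $B_i$ is repetition-coded on the $b_{i+1}\ge 2z+1$ links of $F_{i+1}$, hence survives majority decoding at $A_{i+1}$ despite up to $z$ corrupted links, and is then relayed over the unbounded reliable links $A_{i+1}\to\dots\to t$. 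By the same reliable paths, $A_i$ forwards to the sink the symbols it actually received on $F_i$, while each $B_i$ obtains the true source message over $s\to B_1\to\dots\to B_i$. Consequently the sink ends up knowing the received word on every forward link of $Q$, together with every claim ever issued, and the whole argument becomes a layer-by-layer elaboration of the four-node case, Lemma~\ref{suffbu} and Case~2 of Lemma~\ref{guessforwardex}.

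Next I would read off what each claim tells the sink. If $B_i$ issued $(X_l,P_l)$ on a feedback link $l\in W_i$, the sink compares $P_l$ with the symbols $A_i$ actually received on $F_i$: if they differ, $l$ is corrupted and is permanently discarded; if they agree, $P_l$ is genuine, so the symbols received on $F_i$ disagree with the correct ones (which $B_i$ reconstructed from the source message), and $B_i$'s guess $X_l$ is exactly the set of corrupted forward links of $F_i$, which are permanently discarded. Since $B_i$ re-issues a claim on a given $l$ only when $X_l$ names a forward link of $F_i$ not named on $l$ before, at most $\sum_i m_i b_i$ claims are sent in all, each costing finitely many channel uses, so the amortised overhead goes to $0$.

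It remains to decode in a round with no outstanding claim. Let $D$ denote the links the sink has discarded; these are genuinely adversarial, and the forward ones, whose positions are known, may be treated as erasures. A corrupted forward link in a layer $i\le k$ whose position the sink never learned forces every one of the $m_i$ links of $W_i$ to be corrupted -- otherwise an uncorrupted $l\in W_i$, which relays the true received symbols, would have triggered a processed claim -- whereas a corrupted link in the last layer $F_{k+1}$ costs no feedback; together with $D$, these make up the $\le z$ adversarial links. Exactly as in the four-node case, the sink searches over subsets $L$ of the surviving forward links with $\sum_{l\in L}r(l)\ge C_z$, decodes the rate-$C_z$ message from each $L$ whose $C_z$-symbol restriction is internally consistent, and cross-checks: the actual residual error set $E$ and any confusable alternative $E'$, each taken together with the feedback links it must corrupt to stop error propagation, are exactly instances of the sets $Z_1,W_1$ and $Z_2,W_2$ of Lemma~\ref{bound2} and both satisfy $|Z_i\cup W_i|\le z$; hence $\sum_{l\in Q\setminus(E\cup E')}r(l)\ge C_z(Q)=C_z$, so any two consistent $L$ overlap in capacity at least $C_z$ and decode to the same, necessarily correct, word. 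Since a generic code punctured on the discarded forward links is again MDS, this decoding succeeds, and $k=1$ recovers the four-node special case.

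I expect the last step to be the crux: doing the multi-layer bookkeeping carefully. One must verify that the undetected forward corruptions, scattered over several layers and bundled with the feedback links sacrificed in each such layer, together with every confusable alternative, always stay within a configuration that the $C_z(Q)$ optimisation of Lemma~\ref{bound2} dominates -- and, in particular, that the feedback links of layers the sink has already resolved are not erroneously charged against the constraint $|Z_i\cup W_i|\le z$. I would handle this by processing the layers in topological order and peeling off one resolved layer at a time, reducing to a non-overlapping zig-zag network with one fewer layer and a strictly smaller residual adversary budget, so that the four-node result applies inductively; if a clean uniform reduction proves awkward, an induction on $k$ with a strengthened hypothesis on the residual budget should still close the argument.
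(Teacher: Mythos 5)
Your proposal follows essentially the same route as the paper: run guess-and-forward with a generic $(C,C_z)$ MDS code on the cut, use $b_i\ge 2z+1$ to deliver claims reliably from each $B_{i-1}$ to $A_i$ (and hence to $t$) by repetition plus the unbounded intra-side links, show that each processed claim lets the sink permanently discard a corrupted link (either the feedback link $l$, if $P_l$ disagrees with what $A_i$ reports it actually received, or the forward links in $X_l$ otherwise), and then decode in a no-claim round by checking consistency over candidate forward-link subsets and invoking Lemma~\ref{bound2} to force all consistent candidates to agree. Your key observation that an undetected forward corruption in layer $i\le k$ forces all $m_i$ feedback links of $W_i$ to be corrupted (since here the feedback carries the full $\hat W$, so no ``$g(\hat W)=g(W)$ despite $\hat W\ne W$'' escape) is exactly the paper's characterisation of the surviving possibilities.

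The one place your write-up hedges — ``doing the multi-layer bookkeeping carefully'' and the proposed inductive layer-peeling — is handled in the paper more directly and without induction: it enumerates, over all tuples $(i_1,\dots,i_p)$ with $m_{i_1}+\dots+m_{i_p}\le z$ (together with the empty tuple), a candidate forward-link set $G(i_1,\dots,i_p)$ of size $K-(z-(m_{i_1}+\dots+m_{i_p}))$ whose excluded forward links all lie in $F_{i_1}\cup\dots\cup F_{i_p}\cup F_{k+1}$, discards tuples with no consistent such set, and shows for any two survivors that $\sum_{l\in G(\cdot)\cap G(\cdot)}r(l)\ge C_z$ precisely because the pair instantiates the $(Z_1,W_1),(Z_2,W_2)$ sets of Lemma~\ref{bound2}. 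Your peeling approach would also work, but the direct enumeration avoids having to restate a residual-budget inductive hypothesis, so you may as well drop the induction and carry out the enumeration explicitly; once you do, there is no remaining gap.
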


\begin{proof}
Since $b_i\geq 2z+1$ for $2\leq i\leq k+1$, any claim $(X_l,P_l)$
can be sent reliably from $B_{i-1}$ to $A_{i}$ using a repetition
code. Then $A_{i}$ sends this claim reliably to sink $t$. As in the
proof of Lemma~\ref{suffbu}, we first show that at least one
adversarial link is removed whenever sink receives some claim, in
case 1. We also show that correct output is always achievable when
no claims are sent in case 2.

Case 1) sink receives some claim $(X_l,P_l)$.

Assume that feedback link $l$ is between $A_j$ and $B_j$, and $B_j$
sends this claim to $A_{j+1}$. In this case, we use the same
strategy as in the case 1 in Lemma~\ref{suffbu}. Then we show that
the sink removes at least one bad link whenever it receives claim.

Case 2) no claims are sent to the sink.

Similar to the case 2 in the proof of Lemma~\ref{suffbu}, the case
that no claims are sent to the sink occurs only when for each
feedback link $l$ between $A_j$ and $B_j$ either of the following
holds:\begin{itemize}\item the information $B_j$ receives on $l$ is
equal to $W$ where $W$ is the uncorrupted codeword sent by $s$ to
$B_j$\item the guessed set $X_l$  only contains
forward links that have been guessed by $l$ in previous rounds.
From these previous rounds, by case 1, the sink
has already identified as adversarial either $l$ or the guessed
forward links, and is concerned only with the remaining network.
\end{itemize}
Either way,   there
are the following two possibilities for the overall remaining network (recall that $A_j$ transmits $\hat{W}$ to $B_j$).

a) All forward links in ($F_1$,..,$F_k$) and feedback links in
($W_1$,..,$W_k$) are not corrupted.

b) For some $\{i_1,..,i_p\}\subseteq \{1,2,..,k\}$ such that
$m_{i_1}+..+m_{i_p}\leq z$, all feedback links in
($W_{i_1},..,W_{i_p})$ are corrupted and some forward links in
($F_{i_1},..,F_{i_p}$) are corrupted. The furthest downstream
forward links in $F_{k+1}$ can be also corrupted. For $\forall
j\notin\{i_1,..,i_p,k+1\}$, links in $F_j$ and $W_j$ are not
corrupted.

Let $N=\{(i_1,..,i_p)|1\leq i_1<..,<i_p\leq k,
m_{i_1}+..+m_{i_p}\leq z\}\cup \{\emptyset\}$. (Note that
$\{\emptyset\}$ corresponds to the possibility in a)). From a) and
b), there are total $|N|$ possibilities. Exactly only one of them is
true. Now we describe how the correct solution with rate $C_z$ can
be obtained. We check the consistency of the output for each
possibility. For each $(i_1,..,i_p)\in N$, we first remove
$m_{i_1}+..+m_{i_p}$ corresponding feedback links and check whether
there are $K-(z-(m_{i_1}+..+m_{i_p}))$ forward links giving
consistent output such that remaining $(z-(m_{i_1}+..+m_{i_p}))$
forward links are elements of $F_{i_1}\cup ..\cup F_{i_p}\cup
F_{k+1}$. If such a set exists, we denote it by $G(i_1,..,i_p)$. If
there is no such set of $K-(z-(m_{i_1}+..+m_{i_p}))$ forward links
giving consistency, we remove $(i_1,..,i_p)$ from $N$ and ignore
corresponding possibilities.

Now we show that only tuples $(i_1,..,i_p)$ such that
$G(i_1,..,i_p)$ gives the correct output remain in $N$. Since at
least one remaining tuple gives the correct output, it is sufficient
to prove that for any remaining $(i_1,..,i_p)\in N$ and
$(j_1,..,j_r)\in N$, $G(i_1,..,i_p)$ and $G(j_1,..,j_r)$ gives the
same output. This is equivalent to showing that the sum of
capacities of forward links which are contained in both
$G(i_1,..,i_p)$ and $G(j_1,..,j_r)$ is at least $C_z$, i.e.,
\[\sum_{l\in G(i_1,..,i_p)\cap G(j_1,..,j_r)}r(l)\geq C_z.\]
$G(i_1,..,i_p)$ gives $K-(z-(m_{i_1}+..+m_{i_p}))$ forward links
giving consistent output such that remaining
$(z-(m_{i_1}+..+m_{i_p}))$ forward links are in $F_{i_1}\cup ..\cup
F_{i_p}\cup F_{k+1}$. Similarly, $G(j_1,..,j_r)$ gives
$K-(z-(m_{j_1}+..+m_{j_r}))$ forward links giving consistent output
such that remaining $(z-(m_{j_1}+..+m_{j_r}))$ forward links are in
$F_{j_1}\cup ..\cup F_{j_r}\cup F_{k+1}$. In this case, from the
definition of cut-set upper bound in Lemma~\ref{bound2}, the sum of
the capacities of forward links assumed to be correct by both
$G(i_1,..,i_p)$ and $G(j_1,..,j_r)$ is at least $C_z$. Since each
guess gives consistent output, these two guesses gives the same
output. Since any two remaining guesses in $N$ give the same
consistent output, all remaining guesses give the same output.
\end{proof}

\begin{figure*}
\begin{center}
\begin{picture}(200,200)(-40,-20)\centering
\put(-75,40){\circle{5}}\put(-35,40){\circle{5}}\put(5,40){\circle{5}}
\put(115,40){\circle{5}}\put(155,40){\circle{5}}\put(195,40){\circle{5}}\put(235,40){\circle{5}}
\put(-75,110){\circle{5}}\put(-35,110){\circle{5}}\put(5,110){\circle{5}}
\put(115,110){\circle{5}}\put(155,110){\circle{5}}\put(195,110){\circle{5}}\put(235,110){\circle{5}}
\qbezier(-75,110)(-90,90)(-75,40)\qbezier(-75,110)(-60,90)(-75,40)\put(-75,40){\line(0,1){70}}
\qbezier(-75,110)(-55,100)(-35,110)\qbezier(-75,110)(-55,120)(-35,110)
\qbezier(-75,40)(-55,30)(-35,40)\qbezier(-75,40)(-55,50)(-35,40)
\qbezier(-35,110)(-50,90)(-35,40)\qbezier(-35,110)(-20,90)(-35,40)\put(-35,40){\line(0,1){70}}
\qbezier(-35,110)(-15,100)(5,110)\qbezier(-35,110)(-15,120)(5,110)
\qbezier(-35,40)(-15,30)(5,40)\qbezier(-35,40)(-15,50)(5,40)
\qbezier(5,110)(-10,90)(5,40)\qbezier(5,110)(20,90)(5,40)\put(5,40){\line(0,1){70}}
\put(60,75){\circle{1}}\put(40,75){\circle{1}}\put(80,75){\circle{1}}
\qbezier(115,110)(100,90)(115,40)\qbezier(115,110)(130,90)(115,40)\put(115,40){\line(0,1){70}}
\qbezier(115,110)(135,100)(155,110)\qbezier(115,110)(135,120)(155,110)
\qbezier(115,40)(130,60)(155,110)\qbezier(115,40)(130,90)(155,110)
\qbezier(115,40)(135,30)(155,40)\qbezier(115,40)(135,50)(155,40)
\qbezier(155,110)(140,90)(155,40)\qbezier(155,110)(170,90)(155,40)\put(155,40){\line(0,1){70}}
\qbezier(195,110)(180,90)(195,40)\qbezier(195,110)(210,90)(195,40)\put(195,40){\line(0,1){70}}
\qbezier(195,110)(215,100)(235,110)\qbezier(195,110)(215,120)(235,110)
\qbezier(195,40)(210,60)(235,110)\qbezier(195,40)(210,90)(235,110)
\qbezier(195,40)(215,30)(235,40)\qbezier(195,40)(215,50)(235,40)
\qbezier(235,110)(220,90)(235,40)\qbezier(235,110)(250,90)(235,40)\put(235,40){\line(0,1){70}}
\put(-77,120){s}\put(-57,120){$\infty$}\put(-37,120){$B_1$}\put(-17,120){$\infty$}
\put(3,120){$B_2$}\put(113,120){$B_{u-1}$}\put(136,120){$\infty$}\put(153,120){$B_u$}
\put(193,120){$B_{k-1}$}\put(216,120){$\infty$}\put(233,120){$B_k$}
\put(-77,25){$A_1$}\put(-57,25){$\infty$}\put(-37,25){$A_2$}\put(-17,25){$\infty$}
\put(3,25){$A_3$}\put(113,25){$A_{u}$}\put(136,25){$\infty$}\put(153,25){$A_{u+1}$}
\put(193,25){$A_{k}$}\put(216,25){$\infty$}\put(233,25){$A_{k+1}$}
\put(170,75){$\ldots$}
\put(-55,115){\vector(1,0){0}}\put(-55,105){\vector(1,0){0}}
\put(-55,45){\vector(1,0){0}}\put(-55,35){\vector(1,0){0}}
\put(-15,115){\vector(1,0){0}}\put(-15,105){\vector(1,0){0}}
\put(-15,45){\vector(1,0){0}}\put(-15,35){\vector(1,0){0}}
\put(135,115){\vector(1,0){0}}\put(135,105){\vector(1,0){0}}
\put(215,115){\vector(1,0){0}}\put(215,105){\vector(1,0){0}}
\put(215,45){\vector(1,0){0}}\put(215,35){\vector(1,0){0}}
\put(135,45){\vector(1,0){0}}\put(135,35){\vector(1,0){0}}
\put(-82,75){\vector(0,-1){0}}\put(-75,75){\vector(0,-1){0}}\put(-68,75){\vector(0,-1){0}}
\put(-42,75){\vector(0,-1){0}}\put(-35,75){\vector(0,-1){0}}\put(-28,75){\vector(0,-1){0}}
\put(108,75){\vector(0,-1){0}}\put(115,75){\vector(0,-1){0}}\put(122,75){\vector(0,-1){0}}
\put(148,75){\vector(0,-1){0}}\put(155,75){\vector(0,-1){0}}\put(162,75){\vector(0,-1){0}}
\put(188,75){\vector(0,-1){0}}\put(195,75){\vector(0,-1){0}}\put(202,75){\vector(0,-1){0}}
\put(228,75){\vector(0,-1){0}}\put(235,75){\vector(0,-1){0}}\put(242,75){\vector(0,-1){0}}
\put(138,75){\vector(1,1){0}}\put(133,80){\vector(1,1){0}}
\put(218,75){\vector(1,1){0}}\put(213,80){\vector(1,1){0}}
\end{picture}
\end{center}
\caption{Reduced non-overlapping zig-zag network $\mathcal{G}'$ such
that $m_u>z$ and $b_{u+1}\geq 2z+1,\ldots ,b_{k+1}\geq 2z+1$. This
graph is obtained from $\mathcal{G}$ by erasing all feedback links
in $W_1\cup W_2\ldots \cup W_{u-1}$. }\label{fig615}
\end{figure*}
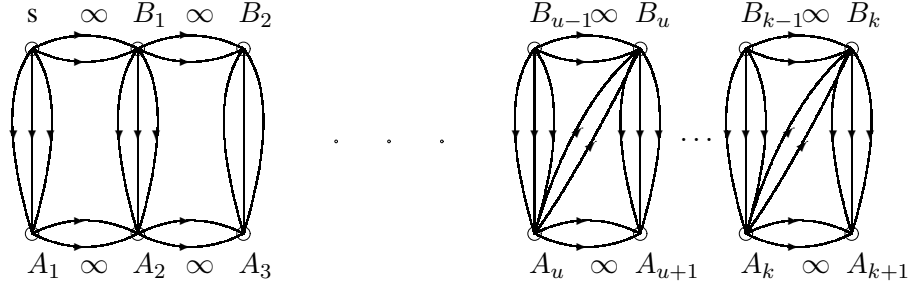

We derive another condition under which our bound is tight.

\begin{lemma}\label{zig2}
Given a family of $k$-layer non-overlapping zig-zag networks such
that $m_u>z$ and $b_{j}\geq 2z+1$ $\forall j\geq u+1$ for any $1\leq
u\leq k$, rate $C_z$ is achievable.
\end{lemma}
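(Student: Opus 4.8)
The plan is to reduce to the earlier result Lemma~\ref{zig1} by working in the reduced network $\mathcal{G}'$ of Fig.~\ref{fig615}, obtained from $\mathcal{G}$ by deleting every feedback link in $W_1\cup\cdots\cup W_{u-1}$. In $\mathcal{G}'$ the first $u-1$ layers carry no feedback, so the reliable infinite-capacity links let us merge $\{s,B_1,\ldots,B_{u-1}\}$ into one super-source and $\{A_1,\ldots,A_u\}$ into one node; what remains is a $(k-u+1)$-layer non-overlapping zig-zag whose first forward bundle is $F_1\cup\cdots\cup F_u$, whose later forward bundles are $F_{u+1},\ldots,F_{k+1}$ (each of size at least $2z+1$ by hypothesis), and whose first feedback bundle is $W_u$ (of size $m_u>z$). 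This network satisfies the hypotheses of Lemma~\ref{zig1}, so that lemma provides a guess-and-forward code achieving rate $C_z(\mathcal{G}')$ on $\mathcal{G}'$; since $\mathcal{G}'$ is a subnetwork of $\mathcal{G}$ the same code runs on $\mathcal{G}$ (the deleted feedback links are merely ignored), and hence rate $C_z(\mathcal{G}')$ is achievable on $\mathcal{G}$.

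It then remains to prove $C_z(\mathcal{G}')=C_z(\mathcal{G})$; together with Theorem~\ref{bound} this pins the capacity of $\mathcal{G}$ at $C_z(\mathcal{G})$ and finishes the argument. The inequality $C_z(\mathcal{G}')\le C_z(\mathcal{G})$ is automatic, since deleting feedback links can only decrease each $C_z(Q)$. For the reverse I would compare, cut by cut, the quantities $C_z^{F,\emptyset}(Q)$ of Lemma~\ref{bound2} that define $C_z(Q)$. The crucial structural fact is that $W_u$ --- which is present in both networks, is directly downstream of every forward link of $F_1\cup\cdots\cup F_u$ (through the reliable chain $A_1\to\cdots\to A_u$), and is upstream of $F_{u+1}\cup\cdots\cup F_{k+1}$ --- prevents any confusion set $Z_i$ in Lemma~\ref{bound2} from containing a forward link of $F_1\cup\cdots\cup F_u$: doing so would force $W_u\subseteq W_i$, hence $|Z_i\cup W_i|\ge m_u>z$, unless the adversary also placed all of $F_{u+1}\cup\cdots\cup F_{k+1}$ into $Z_1\cup Z_2$, which is impossible because $|F_{k+1}|\ge 2z+1$ already exceeds $|Z_1\cup Z_2|\le 2z$. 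Therefore an optimal (minimum-value) choice of $(F,Z_1,Z_2)$ for $\mathcal{G}'$ leaves $F_1\cup\cdots\cup F_u$ untouched by its confusion sets, so none of the deleted bundles $W_1,\ldots,W_{u-1}$ (which lie downstream only of $F_1\cup\cdots\cup F_u$) would be triggered for that choice in $\mathcal{G}$; the same choice is thus valid and has the same value in $\mathcal{G}$, giving $C_z(\mathcal{G})\le C_z(\mathcal{G}')$.

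I expect the bound-equality step to be the main obstacle. One has to be careful with the precise ``directly downstream''/``directly upstream'' conditions of Lemma~\ref{bound2}, with the erasure set $F$ (which may delete large-capacity links of $F_1,\ldots,F_u$, but, being deletion rather than confusion, never triggers feedback coverage and is replicable in both networks), and with the fact that $C_z$ is a minimum over \emph{all} source--sink cuts, not just $cut(\{s,B_1,\ldots,B_k\},\{A_1,\ldots,A_k,t\})$. A secondary point is to check that the merged first layer of $\mathcal{G}'$ really meets the feedback-capacity requirement of the non-overlapping zig-zag model --- that the links of $W_u$ carry enough information about $F_1\cup\cdots\cup F_u$ for the guessing node $B_u$ to form correct guesses --- which, for small feedback capacities, may require invoking the degree-of-freedom / compressed-feedback construction of Section~\ref{subsec:fournodefeedcap} and Lemma~\ref{fdbackmds} in place of the verbatim strategy of Lemma~\ref{zig1}.
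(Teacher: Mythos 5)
Your proposal follows the paper's proof essentially step for step: pass to $\mathcal{G}'$ by deleting $W_1\cup\cdots\cup W_{u-1}$, show $C_z(\mathcal{G})\le C_z(\mathcal{G}')$ using the observation that $m_u>z$ together with $|F_{k+1}|\ge 2z+1$ forces the optimal confusion sets of Lemma~\ref{bound2} to avoid $F_1\cup\cdots\cup F_u$, and then invoke Lemma~\ref{zig1} on the reduced network (the paper's Step~1 is precisely your ``crucial structural fact''). You are somewhat more explicit than the paper in two respects, both to your credit: you spell out the merging of $\{s,B_1,\ldots,B_{u-1}\}$ and $\{A_1,\ldots,A_u\}$ that is needed so that $\mathcal{G}'$ literally satisfies the hypotheses of Lemma~\ref{zig1} (the paper leaves this implicit while applying a lemma whose stated condition $b_i\ge 2z+1$ for all $i\ge 2$ is not met layer-by-layer by $\mathcal{G}'$), and you flag that the merged first layer requires $W_u$ to carry enough information about $F_1\cup\cdots\cup F_u$ --- a point the paper silently absorbs into the standing model assumption that each feedback link ``has a sufficient capacity to forward all the information $A_i$ received,'' and which, as you note, can otherwise be handled by the degree-of-freedom construction of Section~\ref{subsec:fournodefeedcap}.
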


\begin{proof}
We consider a reduced non-overlapping zig-zag network $\mathcal{G}'$
shown in Fig.~\ref{fig615} which is obtained from a given $k$-layer
non-overlapping zig-zag network by erasing $m_1+..+m_{u-1}$ feedback
links $W_1\cup ..\cup W_{u-1}$. We use $C_{z}'$ to denote the upper
bound on $\mathcal{G}'$ from Theorem \ref{bound}. Since
$\mathcal{G}'$ is weaker than $\mathcal{G}$, it is sufficient to
show that $C_z'\geq C_z$ and $C_z'$ is achievable on $\mathcal{G}'$.

Step 1) We show that $C_z\leq C_z'$.

We first compute $C_z'$ on $\mathcal{G}'$ from Theorem~\ref{bound}.
Suppose that $C_{z}'$ is obtained by choosing and
$A^*=\{A_{i_1},..,A_{i_p}\}\subseteq \{A_{u+1}..,A_{k}\}$ and
$B^*=\{A_{j_1},..,A_{j_r}\}\subseteq \{A_{u+1},..,A_{k}\}-A^*$ and
applying Lemma~\ref{bound2} after erasing $k$ forward links set
$F^*$, $m$ feedback links set $W^*$. It is sufficient to prove that
choosing the same $F^*$, $W^*$, $A^*$, and $B^*$ on original graph
$\mathcal{G}$ gives the same upper bound $C_z'$.

Since $m_u>z$, $A_u\not\in A^*$ and $A_u\not\in B^*$ from the
definition of upper bound in Lemma~\ref{bound2}. Then
$P_{A^*}\subseteq F_{A_{i_1}}\cup ..\cup F_{A_{i_p}}\subset
F_{u+1}\cup\ldots\cup F_k$, $P_{B^*}\subseteq F_{A_{j_1}}\cup ..\cup
F_{A_{j_r}}\subset F_{u+1}\cup\ldots\cup F_k$.

Since $m_u>z$ and $b_{k+1}>z$, no matter what $W^*$ is erased on
$\mathcal{G}'$, chosen downstream links $R_{A^*}$ and $R_{B^*}$ are
in $F_{u+1}\cup..\cup F_k$, i.e., $R_{A^*},R_{B^*}\in
F_{u+1}\cup\ldots\cup F_k$.

Thus, $Z_{A^*}=P_{A^*}\cup R_{A^*}\subset F_{u+1}\cup..\cup F_k$ and
$Z_{B^*}=P_{B^*}\cup R_{B^*}\subset F_{u+1}\cup..\cup F_k$.

Since all erased forward links in $Z_{A^*}\cup Z_{B^*}$ are in
$F_{u+1}\cup..\cup F_k$ for $\mathcal{G}'$,  erasing the same $F^*$,
$W^*$, $Z_{A^*}$, and $Z_{B^*}$ on original graph $\mathcal{G}$ also
gives the same upper bound $C_z'$ for $\mathcal{G}$. Since $C_z$ is
the minimal upper bound for $\mathcal{G}$, $C_z\leq C_z'$.




Step 2) We show that rate $C_z$ is achievable.

From Lemma~\ref{zig1}, since $b_{u+1}\geq 2z+1,\ldots,b_{k+1}\geq
2z+1$ and $C_z\leq C_z'$, rate $C_z$ is achievable on
$\mathcal{G}'$. Thus, given a non-overlapping zig-zag network
$\mathcal{G}$, we first ignore all feedback links between $A_i$ and
$B_i$ $(1\leq i\leq u-1)$ and apply the same achievable strategy for
$\mathcal{G}'$.

From steps 1) and 2), we complete the proof.
\end{proof}

Now we derive an achievable rate of guess-and-forward strategy for
any non-overlapping zig-zag network.

We use $\mathcal{G}_I$ to denote the non-overlapping zig-zag network
obtained from original $\mathcal{G}$ by erasing all feedback links
in $W_i$ such that $i\not\in I$. Let $b(i,j)=\sum_{u=i+1}^{j}b_i$
denote the number of forward links between $i$ th layer and $j$th
layer. Supersets $P$, $Q$, and $R$ are defined as follows.

\begin{eqnarray*}
P & = & \{\{i\}|1\leq i\leq k\}, \\
Q & = &\{\{i_1,..,i_u\}| \{i_1,..,i_u\}\subset \{1,..,k\},b(1,i_1)\geq 2z+1,\\
  &   &b(i_1,i_2)\geq 2z+1,..,b(i_u,k+1)\geq 2z+1\},\\
R & = &\{\{i_1,..,i_u\}| \{i_1,..,i_u\}\subset \{1,..,k\}, m_{i_1}>z, \\
  &   &b(i_1,i_2)\geq 2z+1,..,b(i_u,k+1)\geq 2z+1\}.\\
\end{eqnarray*}

\begin{lemma}\label{zigzagach}
Given the network in Fig.~\ref{fig614}, rate $\max_{I\in P\cup Q\cup
R} C_I$ is achievable.
\end{lemma}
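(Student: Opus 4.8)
The plan is to prove, for each index set $I\in P\cup Q\cup R$ separately, that rate $C_I=C_z(\mathcal{G}_I)$ (the Theorem~\ref{bound} bound for the erased network $\mathcal{G}_I$) is achievable on $\mathcal{G}$; the bound $\max_{I\in P\cup Q\cup R}C_I$ is then achieved by having source and sink fix the maximizing index set $I^\star$ (the topology, hence every $C_I$, is known to all parties) and run the scheme tailored to it. The first step is to note that $\mathcal{G}_I$ is a degraded version of $\mathcal{G}$: it has the same forward links and the same feedback links except that those in the layers $W_i$, $i\notin I$, are deleted. Any code achieving rate $r$ on $\mathcal{G}_I$ can therefore be run unchanged on $\mathcal{G}$ provided the endpoints of each surplus feedback link ignore it (send a fixed symbol, discard whatever is received); such a link then has no effect on decoding even if the adversary corrupts it. So it suffices to show $C_I$ is achievable on $\mathcal{G}_I$ itself.

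Next I would run the guess-and-forward strategy of Section~\ref{section:guess} on $\mathcal{G}_I$ and split into the three cases. Since $\mathcal{G}_I$ is again a non-overlapping zig-zag network with no feedback in the layers outside $I$, the feedback-free layers only serve as reliable relays over the infinite-capacity links and can be collapsed; after collapsing, $\mathcal{G}_I$ is a non-overlapping zig-zag network with $|I|$ feedback layers, the $j$-th having $m_{i_j}$ feedback links, and with forward layers of sizes $b(0,i_1),\,b(i_1,i_2),\dots,b(i_u,k+1)$. (i) If $I=\{i\}\in P$, the collapsed network is a four-node acyclic network of the form of Fig.~\ref{fig612}; applying Lemma~\ref{suffbu} when its number of $B$-to-$t$ forward links is at least $2z+1$ and Lemma~\ref{insuffbu} otherwise (together these cover all cases), rate $C_z(\mathcal{G}_{\{i\}})=C_{\{i\}}$ is achievable, and note that neither lemma constrains the topmost forward layer, consistent with $P$ imposing no such constraint. (ii) If $I=\{i_1,\dots,i_u\}\in Q$, the defining inequalities $b(1,i_1)\ge 2z+1,\dots,b(i_u,k+1)\ge 2z+1$ imply that every forward layer other than the topmost one has size at least $2z+1$, which is exactly the hypothesis of Lemma~\ref{zig1} for the collapsed network; hence rate $C_I$ is achievable. (iii) If $I=\{i_1,\dots,i_u\}\in R$, then $m_{i_1}>z$ together with $b(i_1,i_2)\ge 2z+1,\dots,b(i_u,k+1)\ge 2z+1$ are the hypotheses of Lemma~\ref{zig2} (a feedback layer whose width exceeds $z$, with sufficiently thick forward layers below it), so again rate $C_I$ is achievable. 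In every case $C_I$ is achievable on $\mathcal{G}_I$, hence on $\mathcal{G}$.

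Combining, rate $C_I$ is achievable on $\mathcal{G}$ for every $I\in P\cup Q\cup R$, and in particular so is $\max_{I\in P\cup Q\cup R}C_I$. I expect the only genuinely delicate point to be the collapsing step of the second paragraph: one must verify that deleting feedback links and merging a maximal run of feedback-free layers joined by infinite-capacity reliable links leaves both the family of cuts relevant to Theorem~\ref{bound} (so the value $C_I$ is unchanged) and the guess-and-forward decoding argument of Lemmas~\ref{suffbu}, \ref{insuffbu}, \ref{zig1}, \ref{zig2} intact, and that the forward- and feedback-link counts of the collapsed network line up index-for-index with the quantities $b(\cdot,\cdot)$ and $m_{i_j}$ used in the definitions of $P$, $Q$, $R$. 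Once this bookkeeping is carried out, the proof reduces to the three-way case split above.
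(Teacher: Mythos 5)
Your proposal is correct and follows essentially the same route as the paper: for each $I\in P\cup Q\cup R$, erase the surplus feedback layers to get $\mathcal{G}_I$, observe that a $z$-error-correcting code on $\mathcal{G}_I$ runs unchanged on $\mathcal{G}$ with the extra feedback ignored, and then dispatch the three cases to Lemmas~\ref{suffbu}/\ref{insuffbu} (for $I\in P$, a single retained feedback layer giving a four-node network after merging the feedback-free relays), Lemma~\ref{zig1} (for $I\in Q$, where the remaining forward layers below the topmost all have at least $2z+1$ links), and Lemma~\ref{zig2} (for $I\in R$, where the first retained feedback layer has more than $z$ links). The paper's own proof is terser — it says ``ignore all feedback links except the feedback links in $W_i$ such that $i\in I$'' and cites the corresponding lemma without spelling out the degradation and collapsing steps that you make explicit — but the underlying argument is the same, and your added bookkeeping (that the merged feedback-free layers act as reliable relays so the collapsed forward/feedback counts coincide with $b(\cdot,\cdot)$ and $m_{i_j}$, and that any code for $\mathcal{G}_I$ remains $z$-error-correcting on $\mathcal{G}$) is a genuine, if small, improvement in rigor over the published proof.
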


\begin{proof}
We first show that rate $C_{\{i\}}$ is achievable for $1\leq i\leq
k$. We ignore all feedback links except the feedback links in $W_i$.
Then applying the same achievability strategy for four-node acyclic
network gives the rate $C_{\{i\}}$ from Lemma~\ref{suffbu}
and~\ref{insuffbu}.

For any subset $I\in Q$, we ignore all feedback links except the
feedback links in $W_i$ such that $i\in I$. Then from
Lemma~\ref{zig1}, rate $C_I$ is achievable. Similarly, for any
subset $I\in R$, rate $C_I$ is achievable from Lemma~\ref{zig2}.
This completes the proof.

\end{proof}

\section{Conclusion}
We have studied the capacity of single-source single-sink noiseless
networks under adversarial attack on no more than $z$ edges. In this
work, we have allowed arbitrary link capacities, unlike prior papers. We
have proposed a new cut-set upper bound for the error-correction
capacity for general acyclic networks. This bound tightens previous
cut-set upper bounds. For example networks where the bounds are
tight, we have employed both linear and nonlinear coding strategies
to achieve the capacity. We have 
proved the insufficiency of linear network codes to achieve the
capacity in general. We also have shown by examples that there exist
single-source and single-sink networks for which intermediate nodes
must perform coding, nonlinear error detection or error correction
in order to achieve the network capacity. This is unlike the equal
link capacity case, where coding only at the source suffices to
achieve the capacity of any single-source and single-sink network. We have introduced a
new achievable strategy, guess-and-forward, which is used to show
the capacity of the two-node network and a
family of four-node acyclic networks. Finally, for a class of so called non-overlapping
zig-zag networks, we have derived the rate achieved by
guess-and-forward and presented conditions under which that bound is
tight.

Further work includes characterizing the capacity region of a
four-node acyclic network when the capacity of feedback links is
small. When the lower bound on the feedback link capacity is not
satisfied, we can investigate also the tightness of our bound or find an
achievable capacity region. It would also be interesting to find new achievable strategies and upper bounds for more general zig-zag and other networks, particularly since
cut-set approaches are not sufficient in general~\cite{kosut-polytope}. Investigating networks
for which there exists a gap between known upper and lower bounds may provide further insights. Another related problem, which we treat briefly in our conference paper~\cite{kim2009network}, considers high-probability correction of errors in a causal adversary model as in~\cite{langberg2009binary}.

{\appendix}

\emph{Proof of Lemma~\ref{fdbackmds}} : Since the adversary controls
forward links such that codewords on feedback links are unchanged,
from the definition~\ref{degfree}, the degree of freedom of errors
that the adversary can control for any forward link $l$ is at most
$f(l)$. We prove this lemma by simply extending~\cite[Theorem
4]{cai2006network} which is for the equal link capacities case to
the unequal link capacities case.

Let $M$ denote the transfer matrix whose columns are the coding
vectors assigned to links. Then, the difference set is
\begin{eqnarray*}
\lefteqn{\Delta(V,z)}\\
& = &\{(\theta_l(e)-\theta_l(e'))\cdot M^{-1}:l\in\Gamma_+(t),N(e)\leq z,N(e')\leq z\}\\
& = &\{\theta_l(e-e')\cdot M^{-1}:l\in\Gamma_+(t),N(e)\leq z,N(e')\leq z\}\\
& = &\{\theta_l(d)\cdot M^{-1}:l\in\Gamma_+(t),N(d)\leq 2z\},\\
\end{eqnarray*}
where $N(e)$ denotes the number of links error $e$ occurs and
$\theta_l(e)$ denotes the output of error vector $e$ at the sink
with zero-input.

Last equality comes from $\{e-e':N(e)\leq z,N(e')\leq
z\}=\{d:N(d)\leq 2z\}.$

We use $p$ to denote the maximum number of different error vectors
when the adversary controls $2z$ links. Since $\Delta(V,z) =
\{\theta_l(d)\cdot M^{-1}:l\in\Gamma_+(t),N(d)\leq 2z\}$,
\[|\Delta(V,z)|\leq p\cdot\sum_{i=0}^{2z}{a+b\choose i}.\]

Since $u$ is the sum of $2z$ largest degree of freedom, $p\leq
(q-1)^u$. Thus,
\begin{eqnarray*}
|\Delta(V,z)|&\leq &(q-1)^u\sum_{i=0}^{2z}{a+b\choose i}\le 2^{C}(q-1)^u,\\
\end{eqnarray*}
where $C$ is the sum of all forward link capacities.

After computing the size of the difference set $\Delta(V,z)$, we
apply exactly the same argument as in~\cite[Theorem
4]{cai2006network}. From the argument in the proof of~\cite[Theorem
4]{cai2006network}, it is sufficient to show the existence of
$u\times C$ parity check matrix $H$ such that for all $w\in \Delta^*(V,z)$,
$Hw^{\tau}\neq 0$ where $\Delta^*(V,z)= \Delta(V,z)-\{\textbf{0}\}$.
By using the same technique in~\cite[Theorem 4]{cai2006network}, the
number of matrices $H$ such that there exists $w\in \Delta^*(V,z)$
satisfying $Hw^{\tau}= 0$ is upper bounded by

\begin{eqnarray*}
\lefteqn{q^{u(n-1)}(q-1)^{-1}(\Delta(V,z)-1)}\\
&<&2^{C}q^{u(n-1)}(q-1)^{-1}(q-1)^{u}\\
&= & \frac{2^{C}q^{un}}{q}.\\
\end{eqnarray*}

Thus, if $q$ is a prime power such that $q>2^{C}$, then there must
exist a $u\times C$ matrix such that $Hw^{\tau}\neq 0$ for all $w\in
\Delta^*(V,z)$ and the result follows.

\emph{Proof of Lemma~\ref{suffbu}} :

Since $b\geq 2z+1$, any claim $(X_l,P_l)$ can be sent reliably from
$B$ to $t$ using a repetition code. In case 1, we first show that at
least one adversarial link is removed whenever sink receives some
claim.
In case 2, we show that rate $C_z$ is achievable even when no claims
are sent from $B$.

Case 1) sink receives some claim $(X_i,P_i)$.

The sink compares $P_i$ with $g(\hat{W})$ which is received from $A$
reliably. If $P_i\neq g(\hat{W})$, then feedback link transmitting
$P_i$ is adversarial and the sink ignores it. Otherwise, $P_i$ is
reliable. Since the claim is sent, the sink knows that
$P_i=g(\hat{W})\neq g(W)$ and that guess $X_i$ is correct. Thus the
sink identifies as adversarial the links in $ X_i$,
which are subsequently ignored.
Therefore, in this case, the sink removes at least one bad link whenever $B$
sends claims. 


Case 2) no claims are sent.

From our strategy, the case that no claims are sent from $B$ occurs
only when for each feedback link $l$ either of the following
holds:\begin{itemize}\item the information $B$ receives on $l$ is
equal to $g(W)$ where $W$ is the uncorrupted codeword sent by $s$ to
$B$\item the guessed set $X_l$  only contains
forward links that have been guessed by $l$ in previous rounds.
From these previous rounds, by case 1, the sink
has already identified as adversarial either $l$ or the guessed
forward links, and is concerned only with the remaining network.
\end{itemize}
Either way,   there
are the following three possibilities for the overall remaining network (recall that $A$ transmits $\hat{W}$ to $B$).

(I) All links between $s$ and $A$ and all feedback links are
uncorrupted. Then $\hat{W}=W$ and $g(W)$ is reliably transmitted.

(II) Some links between $s$ and $A$ are corrupted so that $A$
receives $g(\hat{W})\neq g(W)$ from $s$, but the adversary controls
all feedback links such that each feedback link changes $g(\hat{W})$
to $g(W)$.

(III) Some links between $s$ and $A$ are corrupted such that
codewords $A$ sends along each feedback link are unchanged, i.e.,
$\hat{W}\neq W$ and $g(\hat{W})= g(W)$. All feedback links are
reliable and $B$ receives $g(W)$.


If possibility (I) is true, all links between $s$ and $A$ and all
feedback links are uncorrupted. Then there exists a set of $(a+b-z)$
forward links on the cut such that this set includes all $a$ links
between $s$ and $A$ and some $b-z$ links between $B$ and $t$, and
gives consistent output with rate $C_z$. (Note that the sum of
capacities of any $(a+b-z)$ forward links is larger than or equal to
$C_z$ from the definition of our bound in Theorem~\ref{bound}.) If
possibility (II) is true, all $m$ feedback links are corrupted. Then
there exist a set of $(a+b-z+m)$ forward links on the cut that gives
consistent output with rate $C_z$. If possibility (III) is true,
then we obtain the correct output from $(C,C_z)$ MDS code in
Lemma~\ref{fdbackmds}.

Based on the above analysis, we give the following simple decoding
algorithm and prove the correctness of this algorithm as follows.

\begin{algorithm}
\caption{Decoding algorithm for achieving rate $C_z$ when no claims
are sent.} \label{algsuffbu}

\hspace{4mm} \textbf{IF} {there is a set $L_1$ which is composed of
$a+b-z+m$ forward links and gives consistent output,}


\hspace{12mm} \textbf{THEN} the output with rate $C_z$ from $L_1$ is
correct.

\hspace{4mm} \textbf{ELSE IF} {there is a set $L_2$ which is
composed of all $a$ forward links from $s$ to $A$ and some $b-z$
forward links from $B$ to $t$, and gives consistent output,}

\hspace{12mm} \textbf{THEN} the output with rate $C_z$ from $L_2$ is
correct.

\hspace{4mm} \textbf{ELSE} {the output with rate $C_z$ obtained from
$(C,C_z)$ MDS code is correct.}

\hspace{4mm} \textbf{END IF}

\end{algorithm}

Case 2 - a) there is a set $L_1$ composed of $(a+b-z+m)$ forward
links giving consistent output.

In this case, we show that output with rate $C_z$ obtained from
$L_1$ is correct. First we prove that output from $L_1$ is correct
when (II) or (III) is true. If (II) is true, all $m$ feedback links
are corrupted and thus $L_1$ contains at least
$(a+b-z+m)-(z-m)=a+b-2(z-m)$ uncorrupted links. From the definition
of our upper bound in Lemma~\ref{bound1}, the sum of capacities of
any $a+b-2(z-m)$ forward links is larger than or equal to $C_z$.
Since $L_1$ gives consistent output, the output is correct. If (III)
is true, $L_1$ contains at most $z$ corrupted links. In this case,
we can show that $L_1$ gives correct output using
condition~\ref{cond2}. Since (III) is true and the values on
feedback links remain the same, when adversary controls a set of $z$
forward links, $A_1$, the degree of freedom adversary can control is
at most $\sum_{l\in A_1}f(l)$ ($|A_1|=z$). From
condition~\ref{cond2}, the sum of degree of freedom of $z$ links in
$A_1$ plus the sum of capacities of $z-m$ forward links not included
in $L_1$ is less than or equal to $C_z$, i.e., $\sum_{l\in
A_1}f(l)+\sum_{l\in \mathcal{E}-L_1}r(l)\leq C-C_z$. Therefore,
$L_1$ contains at least rate $C-(C-C_z)=C_z$ uncorrupted output and
the output is correct. As shown above, using condition~\ref{cond2},
we have shown that $L_1$ contains at least one correct output if
(III) is true and complete the proof from the consistency of $L_1$.

Case 2 - a - i) there is no set of $(a+b-z)$ links that includes all
$a$ forward links from $s$ to $A$ and gives consistent output.

In this case, possibility (I) cannot hold and thus (II) or (III) is
true. Thus output from $L_1$ is correct.

Case 2 - a - ii) there exists a set $L_2$ composed of $(a+b-z)$
forward links that includes all $a$ links from $s$ to $A$ and gives
consistent output.

We first show that $L_1$ and $L_2$ gives the same consistent output.
$L_1\cap L_2$ is obtained from the cut by erasing $z$ forward links
from $B$ to $t$ that $L_2$ does not include and $z-m$ forward links
$L_1$ does not include. From the definition of our bound in
Lemma~\ref{bound2}, $\sum_{l\in L_1\cap L_2}r(l)\geq C_z$. Thus
$L_1$ and $L_2$ give the same consistent output. Since $L_2$ gives
the correct output when (I) is true, and $L_1$ and $L_2$ give the
same consistent output in this case, output from $L_1$ is correct
when (I) is true. Moreover, we have already shown that $L_1$ gives
the correct output if (II) or (III) is true. Therefore, $L_1$ always
gives correct output.

Case 2 - b) there is no set of $(a+b-z+m)$ forward links giving
consistent output.

In this case, there are more than $z-m$ adversarial forward links on
the cut. Thus (II) cannot hold and (I) or (III) is true. If there is
no set of $(a+b-z)$ forward links that includes all $a$ links from
$s$ to $A$ and gives consistent output, then (III) is true. From
Lemma~\ref{fdbackmds}, output obtained from $(C,C_z)$ MDS code is
correct. Otherwise, suppose that there exists a set $L_2$ composed
of $(a+b-z)$ forward links that includes all $a$ links from $s$ to
$A$ and gives consistent output. We show that output obtained from
$L_2$ is correct.

If possibility (I) is true, since all links between $s$ and $A$ and
all feedback links are uncorrupted, $L_2$ contains at most $z$
corrupted forward links between $B$ and $t$. From the definition of
the Singleton bound, the sum of capacities of $a$ links between $s$
and $A$ plus the sum of any $b-2z$ forward links between $B$ and $t$
is larger than or equal to $C_z$. Thus, from uncorrupted $a$ links
between $s$ and $A$ and some $b-2z$ links between $B$ and $t$ which
are not corrupted, we obtain the correct output rate $C_z$. Since
$L_2$ gives consistent output, the decoded output is correct.

If possibility (III) is true, the adversary controls some forward
links from $s$ to $A$ such that each feedback link transmits $g(W)$,
and $L_2$ contains at most $z$ unknown corrupted links. Using
condition~\ref{cond1}, we will show that the output obtained from
$L_2$ is correct. If $A_2$ is a set of truly corrupted forward
links, the degree of freedom adversary can control is at most
$\sum_{l\in A_2}f(l)$ ($|A_2|\leq z$). From the definition, the
degree of freedom of any forward link from $B$ to $t$ is equal to
the link capacity. From the condition~\ref{cond1}, the sum of degree
of freedom of truly corrupted links in $A_2$ and the sum of $z$
forward links between $B$ and $t$ which are not included in $L_2$ is
less than or equal to $C-C_z$. Therefore, $L_2$ contains at least
$C-(C-C_z)=C_z$ uncorrupted symbols. Since $L_2$ gives consistent
output, the decoded output from $L_2$ is correct. As shown above,
when (III) is true, we can obtain correct output using condition 1.

Therefore, either (I) or (III) is true, $L_2$ gives the correct
output.

\emph{Proof of Lemma~\ref{insuffbu}} :

Since $b\leq 2z$, a claim $(X_l,P_l)$ for any feedback link $l$ is
not reliably transmitted to the sink and adversarial links between
$B$ and $t$ can corrupt this claim arbitrarily. Thus, the sink can
receive different claims on different incoming links. Let $G(l)$ be
the set of distinct claims
$G(l)=\{(X_{l1},P_{l1}),..,(X_{lk},P_{lk}),Y\}$ where $Y$ denotes
that no claims received. Here is the outline of the proof. We first
show that at least one adversarial link is removed except when $b>z$
and the sink receives no claim on all $b$ links for all feedback
links. When $b>z$ and the sink receives no claim on all $b$ links,
since all $b$ links cannot be corrupted at the same time, the sink
knows that $B$ does not send any claim. This case exactly
corresponds to the case 2 in the proof of Lemma~\ref{suffbu} and we
achieve the correct output. This completes the proof. Note that the
same guess-and-forward strategy in
Section~\ref{subsec:fournoderesults} is used.

First we show that any uncorrupted $(a+b-2z)$ forward links between
$s$ and $A$ give the correct decoded output with rate $C_z$. From
the definition of Singleton bound, after erasing $b\leq 2z$ links
between $B$ and $t$ and any set of $2z-b$ links between $s$ and $A$,
the sum of the remaining link capacities are larger than or equal to
$C_z$. Thus any uncorrupted $(a+b-2z)$ links between $s$ and $A$
give the correct message.

Now we assume that $(X_{li},P_{li})$ is received on $n_i$ links and
$Y$ is received on $n_{k+1}$ links $(n_1+..+n_{k+1}=b)$. First we
ignore any $(X_{li},P_{li})$ claiming that there are more than
$z-(b-n_i)$ adversarial links between $s$ and $A$. Since $X_{li}$ is
shown on $n_i$ links, believing $X_{li}$ implies more than $z$
adversarial links on the cut which is a contradiction. Thus, each of
remaining claim $(X_{lj},P_{lj})$ specifies a set $L_j$ which is
composed of at least $(a-(z-(b-n_i)))=a+b-z-n_i$ links between $s$
and $A$ claimed to be correct by $(X_{lj},P_{lj})$. For each such
claim, we check the consistency of the decoded outputs of $L_j$. We
show that if there exist two different claims $(X_{li},P_{li})$ and
$(X_{lj},P_{lj})$ both corresponding to consistent outputs, then
those two outputs should be the same. Since $|L_i|=a+b-z-n_i$,
$|L_j|=a+b-z-n_j$, and $|L_i\cup L_j|\leq a$,
\begin{eqnarray*}
|L_i\cap L_j| & \geq &(a+b-z-n_i)+(a+b-z-n_j)-a\\
&\geq & a+b-2z.\\
\end{eqnarray*}
As we mentioned at the beginning of the proof, the sum of capacities
of any $(a+b-2z)$ link between $s$ and $A$ is larger than or equal
to $C_z$. Therefore $L_i$ and $L_j$ give the same consistent output.

Suppose that we have figured out that a set of links $L$ gives the
correct consistent decoded output. In this case, we add remaining
links not included in $L$ sequentially to $L$, and check the
consistency of any decoded output with rate $C_z$. If outputs are no
more consistent, the added link is adversarial (*).

Now we show that at least one adversarial link is removed except
when $b>z$ and the sink receives no claim on all $b$ links for all
feedback links.

Case 1) all claims are ignored or none of the remaining claims gives
consistent output or all claims $(X_{li},P_{li})$ that give
consistent output satisfy that $P_{li}\neq \hat{W}$

In this case, there are only two possibilities.

(I) feedback link $l$ is adversarial.

(II) feedback link $l$ is reliable and all $b$ links between $B$ and
$t$ are adversarial.

If $b>z$, then possibility (II) cannot hold and feedback link $l$ is
adversarial. We remove it. If $b\leq z$, the sink checks the
consistency of outputs from each set of $(a+b-z)$ links between $s$
and $A$. If no $(a+b-z)$ links set give consistency, then there are
more than $z-b$ adversarial links between $s$ and $A$. Thus
possibility (I) is true and we remove feedback link $l$. Otherwise,
there exists a set $L$ of $(a+b-z)$ links giving consistency. Since
this set contains at most $z$ corrupted links, and thus includes at
least $(a+b-2z)$ uncorrupted links between $s$ and $A$. Then the sum
of capacities of uncorrupted links are larger than or equal to
$C_z$. Thus $L$ gives correct output rate $C_z$. From (*), we can
detect forward adversarial links in this case.

Case 2) there exists a claim $(X_{li},P_{li})$ giving consistent
output and $P_{li}=\hat{W}$.

We show that the output obtained from claim $(X_{li},P_{li})$ should
be correct. If there is at least one uncorrupted link showing
$(X_{li},P_{li})$, then feedback link $l$ is also not corrupted
since $P_{li}=\hat{W}$, and this claim gives correct output rate
$C_z$.  Otherwise, if all $n_i$ links showing this claim are
adversarial, then there are at most $z-n_i$ adversarial
links between $s$ and $A$. Then $L_i$ includes at least
$(a+b-z-n_i)-(z-n_i)=a+b-2z$ uncorrupted links, and thus gives
correct consistent output. From (*), we can also detect adversarial
links in this case.

Case 3) only $Y$ gives consistent output and $b<z$.

In this case, the set of all $a$ forward links from $s$ to $A$ gives
consistent output. The $a$ links between $s$ and $A$ include at least
$a-z\geq a+b-2z$ uncorrupted links since $b<z$. Thus we obtain
correct consistent output from $a$ links and detect adversarial
links from (*).

Case 4) only $Y$ gives consistent output, $b>z$, and at least one of
$b$ links between $B$ and $t$ show claim different from $Y$, i.e.,
$n_{k+1}<b$.

Case 4 - a) $n_{k+1}<b-z$.

If feedback link $l$ is reliable, the links showing claims different
from $Y$ are adversarial. Thus there are more than $b-n_{k+1}>z$
adversarial links and this is a contradiction. Thus feedback link $l$ is
adversarial and we remove it.

Case 4 - b) $b-z\leq n_{k+1}\leq z$.

$Y$ is shown on $n_{k+1}\geq b-z$ links and $b-n_{k+1}\geq b-z$
links show claims different from $Y$. Thus there are at least $b-z$
adversarial links between $B$ and $t$. Then there are at most $2z-b$
adversarial links between $s$ and $A$ and at least $a+b-2z$
uncorrupted links. Thus we also obtain correct output from $a$ links
and use (*) to detect the adversarial links.

Case 4 - c) $z<n_{k+1}<b$.

Since $n_{k+1}>z$, feedback link $l$ transmits $g(\hat{W})=g(W)$ to
$B$ and $B$ does not send any claim. Thus, the links showing claims
different from $Y$ are all adversarial.

For cases 1-4, we have shown that at least one adversarial link is
removed when $b>z$ and the sink receives some claim different from
$Y$ for any feedback link.

To complete the proof it is now sufficient to show that correct
output can be achieved when $b>z$ and the sink receives no claim for
all feedback links $l$. Since $b>z$, at least one link between $B$
and $t$ is uncorrupted. Since all $b$ links show $Y$, this means
that each feedback link transmits $g(\hat{W})=g(W)$ and $B$ does not
send any claim. This case corresponds to the case 2 in Lemma
\ref{suffbu}. Therefore, we can obtain the correct output.



\section*{acknowledgements}

We thank the reviewers for their very insightful and detailed suggestions that were  most helpful in improving this paper.
\bibliographystyle{IEEEtran}

\begin{biographynophoto}{Sukwon Kim} is currently an Senior research engineer at the Digital Media and Communication research center at Samsung Electronics. He received his Ph.D. in 2010 and M.S. degree in 2007 in Electrical Engineering, both from the California Institute of Technology (Caltech). Prior to that, he obtained his B.S. in Electrical Engineering from Seoul National University in 2005.  He has received a Samsung Lee Kun Hee Scholarship from Samsung Scholarship Foundation during his Ph.D study. \end{biographynophoto}
\begin{biographynophoto}{Tracey Ho} (M'06) is an Assistant Professor in
Electrical Engineering and Computer Science at the
California Institute of Technology. She received a
Ph.D. (2004) and B.S. and M.Eng degrees (1999)
in Electrical Engineering and Computer Science
(EECS) from the Massachusetts Institute of Technology
(MIT). 
She was a co-recipient of the  2009 Communications
\& Information Theory Society Joint Paper Award for the paper ``A Random
Linear Network Coding Approach to Multicast." Her primary research interests are in
information theory, network coding and communication
networks. 
\end{biographynophoto}

\begin{biographynophoto}{Michelle Effros} (S'93–-M'95–-SM'03–-F'09) received the B.S. degree with distinction
in 1989, the M.S. degree in 1990, and the Ph.D. degree in 1994, all in
electrical engineering from Stanford University, Stanford, CA.

During the summers of 1988 and 1989, she worked at Hughes Aircraft Company.
She joined the faculty at the California Institute of Technology, Pasadena,
in 1994 and is currently a Professor of Electrical Engineering. Her research interests
include information theory, network coding, data compression, and communications.

Dr. Effros received Stanford's Frederick Emmons Terman Engineering
Scholastic Award (for excellence in engineering) in 1989, the Hughes Masters
Full-Study Fellowship in 1989, the National Science Foundation Graduate
Fellowship in 1990, the AT\&T Ph.D. Scholarship in 1993, the NSF CAREER
Award in 1995, the Charles Lee Powell Foundation Award in 1997, the Richard
Feynman-Hughes Fellowship in 1997, an Okawa Research Grant in 2000,
and was cited by Technology Review as one of the world's top 100 young
innovators in 2002. She and her coauthors received the 2009 Communications
\& Information Theory Society Joint Paper Award for the paper ``A Random
Linear Network Coding Approach to Multicast," which appeared in the IEEE
TRANSACTIONS ON INFORMATION THEORY in October 2006. She is a member
of Tau Beta Pi, Phi Beta Kappa, Sigma Xi, and a fellow of the IEEE. She served
as the Editor of the IEEE Information Theory Society Newsletter from 1995
to 1998 and as a Member of the Board of Governors of the IEEE Information
Theory Society from 1998–-2003 and 2008–-present and has been a member
of the Advisory Committee for the Computer and Information Science and
Engineering (CISE) Directorate at the National Science Foundation from 2009
to the present. She served on the IEEE Signal Processing Society Image and
Multi-Dimensional Signal Processing (IMDSP) Technical Committee from
2001 to 2007 and on ISAT from 2006 to 2009. She was an Associate Editor
for the joint special issue on Networking and Information Theory in the IEEE
TRANSACTIONS ON INFORMATION THEORY and the IEEE/ACM TRANSACTIONS
ON NETWORKING and as an Associate Editor for Source Coding for the IEEE
TRANSACTIONS ON INFORMATION THEORY from 2004 to 2007. She is  an Associate Editor for the special issue of the IEEE
TRANSACTIONS ON INFORMATION THEORY honoring Prof. R. Koetter. She
has served on numerous technical program committees and review boards,
including serving as general cochair for the 2009 Network Coding Workshop.
She will serve as cochair of the Technical Program Committee for the International
Symposium on Information Theory in 2012.
\end{biographynophoto}
\begin{biographynophoto}{Salman Avestimehr} is
an assistant Professor at the School of Electrical and Computer Engineering at Cornell University. He received his Ph.D. in 2008 and M.S. degree in 2005 in Electrical Engineering and Computer Science, both from the University of California, Berkeley. Prior to that, he obtained his B.S. in Electrical Engineering from Sharif University of Technology in 2003. He was also a postdoctoral scholar at the Center for the Mathematics of Information (CMI) at Caltech in 2008. He has received a number of awards including the 2011 Young Faculty Program (YIP) award from the Air Force Office of Scientific Research (AFOSR), the NSF CAREER award (2010), the David J. Sakrison Memorial Prize from the U.C. Berkeley EECS Department (2008), and the Vodafone U.S. Foundation Fellows Initiative Research Merit Award (2005). His research interests include information theory, communications, and networking.
\end{biographynophoto}
 \end{document}